\documentclass[
aps,
floatfix, 
notitlepage, 
nofootinbib,
10pt, 
]{revtex4-2}

\usepackage[utf8]{inputenc}

\usepackage{amsmath}
\usepackage{amsthm}
\usepackage{amssymb}
\usepackage{braket}
\usepackage{color}
\usepackage{graphicx}
\usepackage{dsfont}
\usepackage{subcaption}
\usepackage{ragged2e}
\usepackage{tikz}
\usepackage{multirow}
\usepackage{hyperref}
\hypersetup{colorlinks,allcolors=black,linktoc=all}

\newtheorem{lemma}{Lemma}

\newtheorem{proposition}{Proposition}
\newtheorem{definition}{Definition}
\newtheorem{corollary}{Corollary}
\newtheorem{assumption}{Assumption}

\DeclareMathOperator{\supp}{supp}

\begin{document}
\title{On the stability to noise of fermion-to-qubit mappings}
\author{Guillermo González-García$^{1}$}
\author{Filippo Maria Gambetta$^{1}$}
\author{Raul A. Santos$^{1}$}
\email{raul@phasecraft.io}
\address{$^1$Phasecraft Ltd.}
\date{\today}

\begin{abstract}
    Quantum simulations before fault tolerance suffer from the intrinsic noise present in quantum computers. In this regime, extracting meaningful results greatly benefits from stability against that noise. This stability, defined as an error in observables that is independent of the system's size, is expected in local systems under local noise. In fermionic systems, the encoding of the fermionic degrees of freedom into qubits can introduce non-locality, making stability more delicate.
    Here, we investigate the stability to noise of fermion-to-qubit mappings. We consider noisy quantum circuits in $D$ dimensions modeled by alternating layers of local unitaries and general, single-qubit Pauli noise. We show that, when using local fermionic encodings, expectation values of quadratic fermionic observables are stable to noise in states with spatially decaying correlations: a power-law decay with exponent $\mu>D$ is sufficient for stability. By contrast, we show that this stability cannot be achieved by non-local encodings such as Jordan-Wigner in $2D$, or quasi-local ones such as the Bravyi-Kitaev transform. Our findings formalize the intuition that decaying correlations of the physical systems under study provide protection against noise for local fermionic encodings, and  help inform design principles in near-term quantum simulations.

\end{abstract}
\maketitle

\section{Introduction}

The recent advances in the performance of quantum hardware \cite{quantinuum2025helios,Google2025_QEC,quera2025QEC,xuquera2024constant,google2019quantumsupremacy,ibm2024qec,pan_2020_boson_sampling_exp,quantum-advantage-pan}, have accelerated the search for suitable quantum applications. However, current quantum devices still suffer from significant noise and decoherence, which severely limit their capabilities \cite{Preskill2018nisq}. In this context, significant research effort has been devoted to identifying potential applications for these noisy quantum devices \cite{eisert2025fasq}. Among them, the simulation of quantum physical systems stands out as particularly promising \cite{Daley2022_quantumsimulation,Cao2019_Quantum_simulations,Bauer2023_Quantum_simulation,Altman2021_Quantum_Simulation,Ebadi2021simulation}. Not only does it represent one of the most intuitive ways to use a quantum computer \cite{Feynman1982}, but in the noiseless setting, quantum simulation remains one of the few well-established exponential quantum speedups \cite{Montanaro2016_algorithms,lloyd1996_universal_quantum_simulators,Dalzell2025_algorithms_book}. In addition, there is a growing body of work that suggests that quantum simulation tasks may be particularly resilient to noise. Specifically, the concept of stability to noise has attracted significant attention, referring to  situations in which the noise-induced error in simulating local observables remains independent of the system size \cite{trivedi2024analog_stability,dreyer2025_dilution_of_errors,eisert2025stability,trivedi2025stability_mappings,Zoller2019_trotter,Cubitt2015stability,heyl2019_stability,Hastings2005stability,Bachmann2012_stability,schiffer2024proliferation}. This contrasts with the behavior of generic (e.g., random) circuits, for which errors tend to accumulate, which leads to an increasing impact of errors with system size \cite{Gonzalez2022_error_propagation,Quek2024_mitigation}. Consequently, quantum simulation tasks that exhibit stability to noise seem particularly well-suited for near-term devices: even with increasing system sizes, a constant hardware error-rate suffices to achieve a fixed target precision. Previous works have further suggested that such stability can lead  to exponential quantum speed-ups in the task of calculating local observables to a desired precision \cite{trivedi2025_accuracy_guarantees,trivedi2024analog_stability}.
Furthermore, this stability behavior has been identified in several settings, including both analog \cite{trivedi2024analog_stability,trivedi2025_accuracy_guarantees} and digital quantum simulations \cite{eisert2025stability,trivedi2025stability_mappings,dreyer2025_dilution_of_errors,Zoller2019_trotter}, as well as adiabatic algorithms \cite{schiffer2024proliferation}.

Recently, several large scale quantum simulation experiments have focused on fermionic systems \cite{phasecraft2025google,2025phasecraft_quantinuum,2025_phasecraft_fermionic_excitations,quantinuum2025superconducting,nigmatullin2025compact_vs_JW,Xu2025fermionic_simulation,brown2019_fermionic_simulation,sanchez2020_fermionic_simulation}. Many quantum systems of physical interest are naturally described by fermions, making the prospect of extracting their properties using quantum devices particularly compelling. However, most current hardware platforms use qubits as a basis, thus necessitating the use of fermion-to-qubit mappings in order to implement fermionic simulations \cite{JordanWigner1928,Derby2021_compact_encoding,derby2021compact_2,2005_Verstraete_cirac_encoding,Bravyi2002_encoding,troyer2017_encoding,kanav2019_encoding,Jiang2020encoding}. The earliest of such encodings is the Jordan-Wigner encoding, which represents a system of $N$ fermionic modes using $N$ qubits \cite{JordanWigner1928}. While conceptually simple, Jordan-Wigner often maps local fermionic operators to non-local, high-weight qubit operators. Subsequently, families of local encodings have been developed, which allow for mapping local fermionic operators to local qubit operators, with the cost of an increased qubit number overhead \cite{2005_Verstraete_cirac_encoding,Derby2021_compact_encoding,derby2021compact_2}.

In this context, it is of fundamental interest to understand the response to noise of such encodings. However, the results in this direction remain limited. Ref.~\cite{trivedi2024analog_stability} provided stability guarantees for certain classes of Gaussian fermionic systems, but focused on quantum simulators where fermions are implemented natively, without requiring fermion-to-qubit mappings. Other stability results have mostly focused on the simulation of spin systems \cite{trivedi2025stability_mappings,eisert2025stability}, for which fermionic mappings are likewise not required.  
This naturally raises the question of whether stability to noise is preserved under fermionic encodings. Moreover, given the wide range of available encodings now, each with different properties, it is natural to ask whether some encodings result in a more favorable response to noise than others. The latter question was partially addressed in Ref.~\cite{nigmatullin2025compact_vs_JW}, which experimentally compared the behavior of the compact \cite{Derby2021_compact_encoding,derby2021compact_2} and Jordan-Wigner encoding \cite{JordanWigner1928}, suggesting that the latter is significantly more fragile to noise. Additionally, Ref.~\cite{dyrenkova2025encodings} employed classical simulations to numerically compare the performance of several encodings in noisy settings. To the best of our knowledge, Ref.~\cite{dreyer2025_dilution_of_errors} contains the only analytical stability proof that includes fermion-to-qubit mappings, showing that the time evolution and posterior measurement of a $1-$dimensional non-interacting Fermi-Hubbard chain in the presence of single-qubit depolarizing noise is stable.

\subsection{Summary of results}
In this work, we rigorously study the stability to incoherent noise of fermion-to-qubit mappings in several settings. First, we consider the stability of the task of measuring quadratic observables in arbitrary encoded fermionic states in the presence of Pauli noise, where stability is defined as the regime in which the error in the observable is independent of system size. This serves as a simplified setting in which one can probe the response to noise of the fermionic encodings. In this setting, we derive a precise criterion for the stability of the measurement task. We rigorously show that, for a family of local fermionic encodings, sufficiently fast decaying correlations in the state suffices for stability. Precisely, for a $D-$dimensional system, a power-law correlation decay with respect to the distance with exponent $\mu>D$ (i.e., $|\langle  c_{\mathbf{r}}^{\dagger} c_{\mathbf{r}'}\rangle| \leq 1/d(\mathbf{r},\mathbf{r}')^{\mu}$) is sufficient for general Pauli noise, with the total error scaling as $f(p)$ for a noise rate $p$, with $f(p)=\mathcal{O}(p)$ when $\mu>D+1$ and $f(p)=\mathcal{O}(p^{\mu-D})$ when $D<\mu<D+1$. Thus, this analysis shows decaying correlations serve as a protection against noise for local fermionic encodings, and as a consequence stability can be achieved in $D-$dimensional systems for a wide range of physically relevant states, including thermal states or ground states of gapped Hamiltonians \cite{Hastings2006_correlations}. This threshold indicates a sharp transition between stability to noise and fragility to noise for local encodings. Additionally, we show these results do not generalize to encodings which are not local. Precisely, these stability results are unobtainable for a Jordan-Wigner encoding in $2D$ with snake ordering, which we show to be asymptotically fragile to noise due to the longer Pauli strings present. The same holds true for quasi-local encodings, such as the Bravyi-Kitaev transform \cite{Bravyi2002_encoding}: even though the noise scales in a milder way than $2D$ Jordan-Wigner, in the absence of strict locality they cannot satisfy the stability criterion. Furthermore, we also apply our analysis to ground states of non-interacting Fermi gases in one and two spatial dimensions. These are strongly correlated states that do not fulfill the correlation decay required above, and present a Fermi surface. For these states, we find that the task of measuring the momentum occupation operator $n_{\mathbf{k}}$ is considerably fragile to noise when $\mathbf{k}$ is close to the Fermi momentum, while it remains stable to noise elsewhere. 


We emphasize that the results above concern the task of measuring expectation values, which in isolation does not model all available operations in a quantum device. The purpose of this analysis is to use a concrete framework for the study of stability of fermionic encodings against noise. In particular it is natural to expect that an encoding that is fragile in this idealised setting will perform poorly in a more realistic scenario, while a stable encoding has some chance of being useful. 
The goal is that the resulting insights inform design principles for more realistic implementations. Indeed, building on these results, we derive stability bounds for the more general problem of measuring quadratic observables in $D-$dimensional systems after a noisy quantum circuit, modeled as alternating layers of noiseless, local unitaries and Pauli noise channels. This result requires that correlations in the initial state decay sufficiently rapidly, specifically with exponent $\mu > D$, arising from the threshold condition derived for the measurement task. This implies that stability is achievable for tasks such as noisy state preparation, or the constant time dynamics of a geometrically local fermionic Hamiltonian. In particular, the latter can be understood as a generalization of previous stability results for fermionic dynamics in analog quantum simulators \cite{trivedi2024analog_stability} to the digital setting, where fermionic degrees of freedom are encoded into qubits.

In particular, we consider noisy quantum circuits of constant depth $d=\mathcal{O}(1)$, and derive a tighter stability bound for non-interacting (i.e. Gaussian) circuits, and a more conservative bound for interacting circuits. Precisely, for a noise rate $p$, we bound the error as $\mathcal{O}(f(p) d^{D+2})$ for the free fermion case, and $\mathcal{O}(f(p) d^{2D+1})$ for the interacting case. These results can be compared with the stability to noise in analog quantum simulators that implement local fermionic Hamiltonians. In the analog case, a scaling of $\mathcal{O}\left(pt^{D+1}\right)$ after an evolution time $t$ can be shown in the interacting case, and an improved scaling of $\mathcal{O}(pt)$ in the free fermion case was demonstrated in Ref. \cite{trivedi2024analog_stability}. Intuitively, the overhead due to the use of fermionic encodings results in a steeper scaling in the digital setting as compared to the analog one. Nonetheless, our results rigorously show that stability can still be achieved in the digital setting for constant depths $d=\mathcal{O}(1)$, thanks to the protective effect of correlation decay.


The paper is organized as follows: in section \ref{section:stability_encodings} we analyze the stability to noisy measurement of the encodings. This includes a precise stability criterion for local encodings, and an explicit proof for the fragility non-local and quasi-local encodings. Furthermore, we include an analysis of the stability to noise of $1D$ and $2D$ states with a Fermi surface, respectively. 
Finally, in section \ref{section:noisy_evolution} we analyze noisy quantum circuits, including a proof for the free fermion case in subsection \ref{subsection:noisy_evolution_free}, and  a looser one for the interacting case in subsection \ref{subsection:noisy_evolution_interacting}.

\subsection{Setting and notation}

We consider fermionic systems of $N$ modes arranged on a $D-$dimensional lattice $\Lambda=\{0,1,\dots,L-1\}^D$ with periodic boundary conditions. We will use $d(\mathbf{r},\mathbf{r'}):=\sum_{i=1}^D \min(|r_i-r_i'|,L-|r_i-r_i'|)$ to refer to the distance between $\mathbf{r}$ and $\mathbf{r'}$.

To describe the fermionic modes, we will use the usual  annihilation and creation operators satisfying the canonical anticommutation relations $ \{c_{\mathbf{r}}^{\dagger},c_{\mathbf{r'}}\}=\delta_{\mathbf{r},\mathbf{r'}}, \forall \mathbf{r},\mathbf{r'}$. Additionally, it is often convenient to work with the $2N$ Majorana operators defined as 
\begin{align}\label{eq:majoranas}
\gamma_{\mathbf{r}}^1:=\left(c_{\mathbf{r}}^{\dagger}+c_{\mathbf{r}}\right) \quad ; \quad \gamma_{\mathbf{r}}^2:=i\left(c_{\mathbf{r}}^{\dagger}-c_{\mathbf{r}}\right),
\end{align}
which obey the anticommutation relations $\left \{\gamma_{\mathbf{r}}^\alpha,\gamma_{\mathbf{r'}}^\beta\right\}=2\delta_{\mathbf{r},\mathbf{r}'}\delta_{\alpha,\beta}$. Furthermore, we will denote the correlation matrix of a state $\rho$ as 
\begin{align}\label{eq:correlation_matrix}
\Gamma_{\mathbf{r},\mathbf{r'}}^{\alpha,\alpha'}(\rho):=\frac{i}{2}\mathrm{tr} \left(\left[\gamma_{\mathbf{r}}^\alpha,\gamma_{\mathbf{r}'}^{\alpha'}\right] \rho\right).
\end{align}

We will also use the creation and annihilation operators in momentum space, denoted by
\begin{align}\label{eq:momentum_modes}
a_\mathbf{k}:=\frac{1}{\sqrt{N}}\sum_{\mathbf{r} \in \Lambda} e^{-i\mathbf{k}\cdot\mathbf{r} }c_{\mathbf{r}},
\end{align}
where the allowed momenta are chosen as $\mathbf{k_m}=(2\pi/L) \mathbf{m}$, with $\mathbf{m} \in \{-L/2,\dots,L/2-1\}^D$, for odd occupation number (i.e., periodic boundary conditions), and  $\mathbf{m} \in \{-L/2+1/2,\dots,L/2-1/2\}^D$ for even occupation number (i.e., anti-periodic boundary conditions).

In this paper, we will map fermionic states to qubits. We will define the noise processes at the level of qubits as follows. Let us denote a state with $N_{\mathrm{qubits}}$ qubits as $\rho$, and an observable $O$. We will consider a single-qubit Pauli noise channel of error rate $p$,

\begin{align}\label{eq:Pauli_noise}
T_p(\rho):=\left(1-\frac{3}{4}p\right)\rho+\frac{3}{4}p(\alpha_x X\rho X +\alpha_y Y\rho Y+\alpha_z Z\rho Z),
\end{align}

\noindent where $\alpha_x+\alpha_y+\alpha_z=1$. Note that the specific case $\alpha_x=\alpha_y=\alpha_z=1/3$ yields depolarizing noise of rate $p$ due to the additional $3/4$ factor. We denote by $\mathcal{N}_p(\cdot)$ the quantum channel that applies single-qubit Pauli noise with error rate $p$ to all qubits. That is, $\mathcal{N}_p:=T_p^{\otimes N_{\mathrm{qubits}}}$. 


Analogously to Ref.~\cite{trivedi2024analog_stability}, we will say that the task of simulating an observable $O$ is robust to noise if the error between the noiseless and noisy observable is independent of the system size. Precisely, this is

\begin{definition}\label{definition:stability}
Consider an initial state $\rho_0$, a noiseless quantum channel $\mathcal{E}$ and a noisy quantum channel $\mathcal{E}'$ with an associated error rate $p$. The expectation value is said to be stable against errors if

\begin{equation}
\mathrm{Error}=|\mathrm{tr}(\mathcal{E}(\rho_0)O)-\mathrm{tr}(\mathcal{E}'(\rho_0)O)| \leq f(p),
\end{equation}
\noindent for some continuous function $f(p)$ of the error rate, independent of $N$, such that $\lim_{p \to 0}f(p)=0$.
\end{definition}

Note that in the definition above we allow for some liberty in the choice of the ideal quantum simulation ($\mathcal{E}$) and noisy quantum simulation ($\mathcal{E}'$). In section \ref{section:stability_encodings} we will consider $\mathcal{E}=\mathrm{id}$ to be the identity and $\mathcal{E}'(\cdot)=\mathcal{N}_p(\cdot)$, which will serve to probe the stability properties of the encoding when measuring expectation values. In section \ref{section:noisy_evolution} we will consider $\mathcal{E}$ to be a noiseless quantum circuit, while the noisy process $\mathcal{E}'(\cdot)$ will be modeled as the unitary layers in $\mathcal{E}(\cdot)$ interspersed with layers of the noise $\mathcal{N}_p(\cdot)$. Hence, this constitutes a model for analyzing the stability to noise of digital quantum simulations. In both cases, we will denote the ideal expectation value as $\langle O\rangle_{\mathrm{noiseless}}:=\mathrm{tr}(\mathcal{E}(\rho_0)O)$, and the noisy one as $\langle O\rangle_{\mathrm{noisy}}:=\mathrm{tr}(\mathcal{E}'(\rho_0)O)$.

Throughout the paper, for an operator $O$, we will use $\|O\|_p:=\left[\mathrm{tr}\left(|O|^p\right)\right]^{1/p}$ to denote its Schatten $p$-norm. Additionally, we will often use $\|O\|=\|O\|_{\infty}$ to denote its operator norm (which is also its Schatten $\infty$-norm). On the other hand, for vectors, we will often use the usual $\ell^1-$norm $\|\mathbf{x}\|_1:=\sum_{i} |x_i|$ (also commonly referred to as Manhattan norm). For a Hermitian matrix $A$ with elements $A_{i,j}$, we will often invoke the inequality

\begin{align}\label{eq:norm_inequality}
\|A\| \leq\max_{j} \sum_i |A_{i,j}|,
\end{align}
\noindent which follows from Gershgorin circle theorem.
Finally, throughout the text we will frequently use the big $\mathcal{O}$ notation from computer science \cite{cormen2022introduction}, which is detailed in appendix \ref{appendix:notation}.

\section{Stability of encodings}\label{section:stability_encodings}

\subsection{Local encodings}\label{subsection:local_encodings}


We will here consider local fermionic encodings, and provide a precise criterion for stability. Precisely, we will focus on encodings that map Majorana bilinear operators of the form $\gamma_\mathbf{r}^{\alpha}\gamma_{\mathbf{r}'}^{\alpha'}$ to Pauli strings with support growing as the distance between $\mathbf{r}$ and $\mathbf{r}'$. This is formalized in the following assumption.

\begin{assumption}\label{assumption:encodings}
Consider a local fermionic encoding such that the Majorana bilinear $\gamma_\mathbf{r}^{\alpha}\gamma_{\mathbf{r}'}^{\alpha'}$ is mapped to a Pauli string with support $\varphi(\mathbf{r},\mathbf{r}')=\varphi_0+d(\mathbf{r},\mathbf{r'})$, with $\varphi_0=\mathcal{O}(1)$ a constant.
\end{assumption}

\noindent The above assumption will be fulfilled when the support of the Pauli string between $\mathbf{r}$ and $\mathbf{r}'$ is chosen along the shortest path. This condition can be fulfilled when using local encodings such as the compact encoding \cite{Derby2021_compact_encoding,derby2021compact_2} or the Verstraete-Cirac encoding \cite{2005_Verstraete_cirac_encoding}. We note that, in the $2D$ case, a standard Jordan-Wigner with snake ordering will not fulfill this condition, since the Jordan-Wigner lines can reach $\Theta(N)$ length. This case will be handled in subsection \ref{subsection:JW}. Likewise, quasi-local encodings which map local operators to $\mathcal{O}(\log N)$ weight Pauli strings will be handled in subsection \ref{subsection:quasi_local}.

Since Pauli noise is diagonal in the Pauli basis, Assumption \ref{assumption:encodings} directly implies that the action of the Pauli noise defined on Eq.~\ref{eq:Pauli_noise} on $\gamma_{\mathbf{r}}^{\alpha}\gamma_{\mathbf{r'}}^{\alpha'}$ can be written as

\begin{align}\label{eq:dep_noise_1}
\mathcal{N}_p (\gamma_{\mathbf{r}}^{\alpha}\gamma_{\mathbf{r'}}^{\alpha'})=\lambda_{\mathbf{r},\mathbf{r'}}^{\alpha,\alpha'}\gamma_{\mathbf{r}}^{\alpha}\gamma_{\mathbf{r'}}^{\alpha'}.
\end{align}

\noindent Furthermore, since the bilinear $\gamma_{\mathbf{r}}^{\alpha} \gamma_{\mathbf{r'}}^{\alpha'}$ is mapped to a Pauli string with Pauli weight $\varphi(\mathbf{r},\mathbf{r}')$, we can bound the eigenvalues $\lambda_{\mathbf{r},\mathbf{r'}}^{\alpha,\alpha'}$ as 

\begin{align}\label{eq:bound_eigenvalues}
\left(1-\frac{3}{2}p\right)^{\varphi(\mathbf{r,\mathbf{r'}})} \leq \lambda_{\mathbf{r},\mathbf{r}'}^{\alpha, \alpha'} \leq 1.
\end{align}

\noindent  We will also assume that $p\leq 2/3$, which implies $\lambda_{\mathbf{r},\mathbf{r}'}^{\alpha, \alpha'} \geq 0$. 

We will consider general quadratic observables, which can be written as

\begin{align}\label{eq:expression_Ok_2D}
O=i\sum_{\mathbf{r},\mathbf{r}'} \sum_{\alpha, \beta}O_{\mathbf{r},\mathbf{r}'}^{\alpha, \beta} \gamma_{\mathbf{r}}^{\alpha} \gamma_{\mathbf{r}'}^{\beta},
\end{align}

\noindent and normalized as $\|O\| \leq 1$. Precisely, let us denote a $2N \times 2N$ matrix $\widetilde{O}$ with $\widetilde{O}_{(\mathbf{r},\alpha),(\mathbf{r}',\beta)}:=O_{\mathbf{r},\mathbf{r}'}^{\alpha, \beta}$. The matrix of coefficients $\widetilde{O}$ is chosen to be real and antisymmetric, which guarantees hermiticity for $O$. Furthermore, we will normalize it as $\|\widetilde{O}\|_1 \leq 1$, which directly implies $\|O\| \leq 1$ \footnote{This can be checked by block diagonalizing $\widetilde{O}$, and applying the triangle inequality to the eigenvalues of $\widetilde{O}$.}. Without loss of generality, we also assume that $O$ is traceless, imposing $O_{\mathbf{r},\mathbf{r}}^{\alpha,\alpha}=0$. 

From Eq.~\ref{eq:dep_noise_1}, the action of the Pauli noise channel on $O$ can be written as

\begin{align}\label{eq:Np_Ok}
\mathcal{N}_p(O)=i\sum_{\alpha,\beta}\sum_{{\mathbf{r}},
{\mathbf{r'}}}\left(\lambda_{\mathbf{r},\mathbf{r'}}^{\alpha,\beta}\right)O_{\mathbf{r},\mathbf{r'}}^{\alpha, \beta}\gamma_{\mathbf{r}}^{\alpha}\gamma_{\mathbf{r}'}^{\beta}
.
\end{align}

\noindent Hence, using the definition of the correlation matrix in Eq.~\ref{eq:correlation_matrix}, the error becomes 
\begin{align}\label{error_expression_DD}
|\langle O\rangle_{\mathrm{noiseless}}-\langle O\rangle_{\mathrm{noisy}}|= \left |\sum_{\alpha,\beta } \sum_{{\mathbf{r}},{\mathbf{r'}}} O_{\mathbf{r},\mathbf{r'}}^{\alpha, \beta} \left[1-\lambda_{\mathbf{r},\mathbf{r'}}^{\alpha,\beta}\right]\Gamma_{\mathbf{r},\mathbf{r}'}^{\alpha,\beta}
\right|.
\end{align}

\noindent Using this expression, we will show that the task of measuring quadratic observables is robust to noise when the correlations of the state $\rho_{0}$ decay sufficiently fast. This is stated formally in the following Proposition.

\begin{proposition}\label{proposition:stability_2D}
Consider an arbitrary fermionic state $\rho$ on a $D-$dimensional lattice. Assume that the correlations decay as
\begin{equation}
\left| \Gamma_{\mathbf{r},\mathbf{r}'}^{\alpha,\beta}(\rho) \right| \leq \frac{K}{d(\mathbf{r},\mathbf{r'})^{\mu}}, \quad\forall (\mathbf{r} \neq\mathbf{r'},\alpha,\beta)
\end{equation}

\noindent for some constant $K=\mathcal{O}(1)$, where $d(\mathbf{r},\mathbf{r'})$ represents the distance between both sites. Consider also a quadratic observable $O$, and the Pauli noise channel of noise rate $p$ in Eq.~\ref{eq:Pauli_noise}. Then, the task of measuring the expectation value of $O$ is stable to Pauli noise when $\mu>D$, with
\begin{align}
\mathrm{Error}=|\langle O\rangle_{\mathrm{noiseless}}-\langle O\rangle_{\mathrm{noisy}}| \leq f(p)= 3p\varphi_0+2KC_D\left(\zeta(\mu-D+1)-\left(1-\frac{3p}{2}\right)^{\varphi_0}\mathrm{Li}_{\mu-D+1}\left(1-\frac{3p}{2}\right)\right),
\end{align}

\noindent where $C_D=\mathcal{O}(1)$ is a dimension dependent constant, $\zeta(z)$ represents the Riemann zeta function, and $\mathrm{Li}_s(z)$ represents the polylogarithm of order $s$. Furthermore, in the small $p$ limit,


\begin{align}
f(p) = 
\begin{cases} 
\mathcal{O}\left( p^{\mu-D} \right) & \text{if } D < \mu < D+1, \\ 
\mathcal{O}\left( p \log \frac{1}{p} \right) & \text{if } \mu = D+1, \\ 
\mathcal{O}(p) & \text{if } \mu > D+1 ,
\end{cases}
\end{align}
\noindent ensuring $\lim_{p \to  0} f(p)=0$.

\end{proposition}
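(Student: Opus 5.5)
Starting from the exact error expression in Eq.~\ref{error_expression_DD}, I apply the triangle inequality and split the double sum into the diagonal terms $\mathbf{r}=\mathbf{r}'$ (with $\alpha\neq\beta$) and the off-diagonal terms $\mathbf{r}\neq\mathbf{r}'$. For the diagonal part I use $|\Gamma^{\alpha,\beta}_{\mathbf{r},\mathbf{r}}|\le 1$, together with the lower bound of Eq.~\ref{eq:bound_eigenvalues} and Bernoulli's inequality:
\[
1-\lambda\le 1-\left(1-\tfrac{3p}{2}\right)^{\varphi_0}\le \tfrac{3p}{2}\varphi_0 .
\]
The remaining coefficient sum is $\sum_{\mathbf{r},\alpha,\beta}|O^{\alpha,\beta}_{\mathbf{r},\mathbf{r}}|$, which I bound by a constant multiple of $\|\widetilde O\|_1\le 1$. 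Each matrix entry is at most the operator norm, and since $\widetilde O$ is antisymmetric, the $2\times 2$ on-site blocks contribute pairs of entries. This step gives the $3p\varphi_0$ term.

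For the off-diagonal part I factor out the largest coefficient. Using $|O^{\alpha,\beta}_{\mathbf{r},\mathbf{r}'}|\le\|\widetilde O\|\le 1$ directly would leave a factor of $N$, which is not what we want. Instead I use the normalization so that, for each fixed $(\mathbf{r},\alpha)$, the row contributes at most $O(1)$. The cleanest route is to write the sum as $\mathrm{tr}(\widetilde O^{T} M)$, with $M_{(\mathbf{r},\alpha),(\mathbf{r}',\beta)}=(1-\lambda)\Gamma$ on off-diagonal entries. Hölder's inequality then gives
\[
|\mathrm{tr}(\widetilde O^T M)|\le\|\widetilde O\|_1\|M\|\le\|M\| .
\]
I bound $\|M\|$ by the Gershgorin-type inequality Eq.~\ref{eq:norm_inequality}, namely the maximum column sum of $|M|$. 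This is the key choice: the trace-norm normalization of $O$ reduces the problem to a single-site sum, and that sum is independent of $N$ by translation-independent bounds.

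Fixing a column $(\mathbf{r}',\beta)$, I combine the assumed decay of correlations with Eq.~\ref{eq:bound_eigenvalues} and $\varphi=\varphi_0+d$ to get
\[
\sum_{\mathbf{r}\ne\mathbf{r}',\alpha}|M|\le 2K\sum_{\ell\ge1} n_D(\ell)\,\ell^{-\mu}\left(1-q^{\varphi_0+\ell}\right),\qquad q=1-\tfrac{3p}{2},
\]
where $n_D(\ell)$ is the number of lattice sites at Manhattan distance $\ell$. On the periodic lattice, $n_D(\ell)\le C_D\ell^{D-1}$. Extending the sum to infinity (each term is nonnegative because $q\in[0,1]$) gives
\[
2KC_D\left(\sum_\ell \ell^{-(\mu-D+1)}-q^{\varphi_0}\sum_\ell q^\ell\ell^{-(\mu-D+1)}\right)=2KC_D\left(\zeta(\mu-D+1)-q^{\varphi_0}\mathrm{Li}_{\mu-D+1}(q)\right).
\]
This converges exactly when $\mu-D+1>1$, that is, when $\mu>D$. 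Adding the diagonal term gives the claimed $f(p)$.

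The main obstacle is the small-$p$ asymptotics. I split
\[
\zeta(s)-q^{\varphi_0}\mathrm{Li}_s(q)=\left(1-q^{\varphi_0}\right)\zeta(s)+q^{\varphi_0}\left(\zeta(s)-\mathrm{Li}_s(q)\right),\qquad s=\mu-D+1 .
\]
The first piece is $O(p)$. For the second I use the expansion of the polylogarithm near $z=1$, with $q=e^{-t}$ and $t\approx 3p/2$:
\[
\mathrm{Li}_s(e^{-t})=\Gamma(1-s)\,t^{s-1}+\zeta(s)-\zeta(s-1)\,t+\dots \quad\text{for non-integer } s,
\]
while at $s=2$ a logarithmic term $t\log(1/t)$ appears. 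Alternatively, I estimate directly
\[
\sum_\ell \ell^{-s}\left(1-e^{-t\ell}\right)\le\sum_{\ell\le1/t}t\,\ell^{1-s}+\sum_{\ell>1/t}\ell^{-s}.
\]
For $1<s<2$ this gives $O(t^{s-1})=O(p^{\mu-D})$; for $s=2$ it gives $O(t\log(1/t))$; and for $s>2$ it gives $O(t)$. This directly yields the three cases, and in each case $f(p)\to0$ as $p\to0$.
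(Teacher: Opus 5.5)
Your route is essentially the paper's: Hölder's inequality with $\|\widetilde O\|_1\le 1$, the column-sum bound Eq.~\ref{eq:norm_inequality} on the operator norm of the matrix with entries $(1-\lambda)\Gamma$, shell counting $n_D(\ell)\le C_D\ell^{D-1}$, and resummation into $\zeta(\mu-D+1)-q^{\varphi_0}\mathrm{Li}_{\mu-D+1}(q)$. Your $\min(1,t\ell)$ splitting for the small-$p$ regime is more explicit than the paper, which only states the three cases.

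The justification you give for the on-site term does not work. Bounding each entry by the operator norm only yields $\sum_{\mathbf r,\alpha,\beta}|O^{\alpha,\beta}_{\mathbf r,\mathbf r}|\le 2N\|\widetilde O\|$. That gives a diagonal contribution of order $p\varphi_0 N$, which is exactly the factor of $N$ you warn against in the next paragraph.

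The claim itself is true, but it needs the trace norm. Write the on-site blocks as $a_{\mathbf r}\begin{pmatrix}0&1\\-1&0\end{pmatrix}$, and let $W$ be block-diagonal with blocks $\mathrm{sign}(a_{\mathbf r})\begin{pmatrix}0&1\\-1&0\end{pmatrix}$, so that $\|W\|=1$. Then $\sum_{\mathbf r,\alpha,\beta}|O^{\alpha,\beta}_{\mathbf r,\mathbf r}|=2\sum_{\mathbf r}|a_{\mathbf r}|=\mathrm{tr}(\widetilde O^{T}W)\le\|\widetilde O\|_1\|W\|\le 1$. Equivalently, pinching onto the on-site blocks does not increase the trace norm. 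The diagonal part then contributes at most $\tfrac32 p\varphi_0\le 3p\varphi_0$.

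Simpler still is the paper's choice not to split at all. Keep the on-site entries in the matrix whose column sums you bound. Each column $(\mathbf r',\beta)$ then picks up a single extra term $(1-q^{\varphi_0})|\Gamma^{\alpha,\beta}_{\mathbf r',\mathbf r'}|\le\tfrac32 p\varphi_0$ with $\alpha\neq\beta$, since $\Gamma^{\beta,\beta}_{\mathbf r',\mathbf r'}=0$. The paper's $3p\varphi_0$ is this term, crudely doubled by summing over $\alpha$. With either fix, the rest of your argument goes through and gives the stated $f(p)$.
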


\begin{proof}

Using Eq.~\ref{error_expression_DD}, we write 
\begin{align}\label{eq:error_bound_Ok}
|\langle O\rangle_{\mathrm{noiseless}}-\langle O\rangle_{\mathrm{noisy}}| =   \left|\sum_{\alpha,\beta} \sum_{\mathbf{r}, \mathbf{r'}}O^{\alpha,\beta}_{\mathbf{r},\mathbf{r}'} \left[1-\lambda_{\mathbf{r},\mathbf{r'}}^{\alpha,\beta}\right]\Gamma^{\alpha,\beta}_{\mathbf{r},\mathbf{r}'}\right|=  \left|\sum_{\alpha,\beta} \sum_{\mathbf{r}, \mathbf{r'}}\widetilde{O}_{(\mathbf{r},\alpha),(\mathbf{r}',\beta)}F_{(\mathbf{r},\alpha),(\mathbf{r}',\beta)}\right| =  \left| \mathrm{tr}(\widetilde{O}F)\right|,
\end{align}

\noindent where $\widetilde{O}$ and $F$ are matrices of size $2N \times 2N$ with indices $(\mathbf{r},\alpha)$ and $(\mathbf{r}',\beta)$, defined as
\begin{align}
\widetilde{O}_{(\mathbf{r},\alpha),(\mathbf{r}',\beta)} := O^{\alpha,\beta}_{\mathbf{r},\mathbf{r'}} \quad ; \quad F_{(\mathbf{r},\alpha),(\mathbf{r}',\beta)}:=\left[1-\lambda_{\mathbf{r},\mathbf{r'}}^{\alpha,\beta}\right]\Gamma^{\alpha,\beta}_{\mathbf{r},\mathbf{r}'},
\end{align}

\noindent respectively. Hence, applying Hölder's inequality to Eq.~\ref{eq:error_bound_Ok}, we obtain
\begin{align}\label{eq:error_bound_MF}
|\langle O\rangle_{\mathrm{noiseless}}-\langle O\rangle_{\mathrm{noisy}}| = \left|\mathrm{tr} (\widetilde{O}F)\right|\leq   \|\widetilde{O}\|_1 \|F\|.
\end{align}

\noindent Note that $\|\widetilde{O}\|_1 \leq 1$ due to the normalization chosen in Eq.~\ref{eq:expression_Ok_2D}. Furthermore, since $F$ is hermitian, its spectral norm $\|F\|$ can be bounded using Eq.~\ref{eq:norm_inequality} as
\begin{align}\label{norm:F}
\|F\| \leq \max_{(\mathbf{r'},\beta)} \sum_{(\mathbf{r},\alpha)} \left| F_{(\mathbf{r},\alpha),(\mathbf{r'},\beta)}\right| .
\end{align}

\noindent Let use denote $r=1-3p/2$. Then, from Eq.~\ref{eq:bound_eigenvalues} we know that $\lambda_{\mathbf{r},\mathbf{r}'}^{\alpha,\beta} \geq r^{\varphi(\mathbf{r},\mathbf{r'})}=r^{d(\mathbf{r},\mathbf{r}')+\varphi_0}$, and we can write

\begin{align}\label{eq:sum_F}
 \max_{(\mathbf{r'},\beta)} \sum_{(\mathbf{r},\alpha)} \left| F_{(\mathbf{r},\alpha),(\mathbf{r'},\beta)}\right| = \max_{(\mathbf{r'},\beta)} \sum_{(\mathbf{r},\alpha)}\left[1-\lambda_{\mathbf{r},\mathbf{r'}}^{\alpha,\beta}\right]\left|\Gamma^{\alpha,\beta}_{\mathbf{r},\mathbf{r}'}\right|
 \leq \max_{(\mathbf{r'},\beta)} \sum_{(\mathbf{r},\alpha)}\left[1-r^{d(\mathbf{r},\mathbf{r'})+\varphi_0}\right]\left|\Gamma^{\alpha,\beta}_{\mathbf{r},\mathbf{r}'}\right|.
\end{align}

The sum in Eq.~\ref{eq:sum_F} is bounded rigorously in Appendix \ref{appendix:sum}. Precisely, Corollary \ref{corollary:sum_1} implies that 

\begin{align}
\max_{(\mathbf{r'},\beta)} \sum_{(\mathbf{r},\alpha)}\left[1-r^{d(\mathbf{r},\mathbf{r'})+\varphi_0}\right]\left|\Gamma^{\alpha,\beta}_{\mathbf{r},\mathbf{r}'}\right| \leq  2(1-r)\varphi_0+2KC_D\left(\zeta(\mu-D+1)-r^{\varphi_0}\mathrm{Li}_{\mu-D+1}\left(r\right)\right),
\end{align}

\noindent Using Eq.~\ref{eq:error_bound_MF} allows us to bound the error expression as
\begin{align}\label{eq:error_momentum_Ok}
|\langle O\rangle_{\mathrm{noiseless}}-\langle O\rangle_{\mathrm{noisy}}| \leq f(r)= 2(1-r)\varphi_0+2KC_D\left(\zeta(\mu-D+1)-r^{\varphi_0}\mathrm{Li}_{\mu-D+1}\left(r\right)\right).
\end{align}
proving the Proposition. Note that in the $r \to 1$ (i.e $p\rightarrow 0)$ limit $f(r)$ behaves as 
\begin{align}\label{eq:scaling_f(p)}
f(r) =  
\begin{cases} 
\mathcal{O}\left( (1-r)^{\mu-D} \right) & \text{if } D < \mu < D+1, \\ 
\mathcal{O}\left( (1-r) \log \frac{1}{1-r} \right) & \text{if } \mu = D+1, \\ 
\mathcal{O}(1-r) & \text{if } \mu > D+1.
\end{cases}
\end{align}
\end{proof}

We therefore find that the measurement of expectation values of quadratic observables is stable against noise for states whose correlations decay sufficiently fast. This includes a wide range of physical states, such as thermal states, ground states of gapped Hamiltonians, and a broad class of gapless, interacting systems. The intuition underlying this result is as follows. Any quadratic observable $O$  can be expressed as a linear combination of Majorana bilinears, as in Eq.~\ref{eq:expression_Ok_2D}. In a local encoding, each Majorana bilinear $ \gamma_{\mathbf{r}}^{\alpha}\gamma_{\mathbf{r}'}^{\beta}$ is mapped to a Pauli string whose Pauli weight scales with the distance $ d(\mathbf{r},\mathbf{r}')$. When this distance is small, the corresponding Pauli string is supported on only a few qubits and is therefore relatively resilient to noise. Conversely, bilinears associated with large distances are mapped to highly nonlocal Pauli strings with large Pauli weight. However, the contribution of such terms is suppressed by the decay of correlations in the state, which ensures overall stability. Making this intuition precise leads to the Proposition stated above.

This result implies that, even though the noise is effectively non-local in the fermions, the presence of correlation decay protects the system against this non-locality. This also suggests that, for states which present correlation decay in real space, a fermionic encoding in real space might be a better choice than one in momentum space, since the latter might not enjoy this protection against errors.

\subsection{Jordan-Wigner encoding}\label{subsection:JW}

Here, we will show that the stability to noise in subsection \ref{subsection:local_encodings} does not apply to Jordan-Wigner encodings with snake ordering in 2D. Intuitively, the longer strings in Jordan-Wigner encodings result in fragility to noise, even with strong correlation decay. This result is consistent with previous results \cite{nigmatullin2025compact_vs_JW}, which experimentally find that the Jordan-Wigner encoding is significantly more fragile to noise. To show this, we will consider an $L \times L$ lattice with snake ordering, and a state $\rho$ with exponentially decaying correlations, such that
\begin{align}
	|\langle c_{\mathbf{r}}^{\dagger}c_{\mathbf{r'}}\rangle_{\rho}| \leq Ke^{-\frac{d(\mathbf{r},\mathbf{r'})}{\xi}}.
\end{align}

\noindent  Due to the snake ordering, the Jordan-Wigner string for a hopping term between vertical neighbors $\mathbf{r}=(x,y)$ and $\mathbf{r'}=(x,y+1)$ has length $L+1$.  Consider the observable $O=\gamma_{\mathbf{r}}^{\alpha}\gamma_{\mathbf{r}'}^{\beta}$, and let us further assume that $\langle O\rangle=1$. We will consider a depolarizing noise channel, which implies choosing $\alpha_x=\alpha_y=\alpha_z=1/3$ in Eq.~\ref{eq:Pauli_noise}. Then, the noisy expectation value is given by 
\begin{align}
	\left \langle O\right \rangle_{\mathrm{noisy}} =\left  \langle \mathcal{N}_p \left(\gamma_{\mathbf{r}}^{\alpha} \gamma_{\mathbf{r'}}^{\beta}\right)\right \rangle  = (1-p)^{L+1} \langle \gamma_{\mathbf{r}}^{\alpha} \gamma_{\mathbf{r}'}^{\beta}\rangle=(1-p)^{L+1},
\end{align}

\noindent while the ideal expectation value is trivially $\left \langle O\right \rangle_{\mathrm{noiseless}}=1$. Since $L = \sqrt{N}$, it is then clear that the measuring of $O$ cannot be stable to noise. Precisely, the error will scale as

\begin{align}
\mathrm{Error}=\left | \left \langle O\right \rangle_{\mathrm{noisy}}-\left \langle O\right \rangle_{\mathrm{noiseless}}\right|=1-(1-p)^{L+1} = 1-(1-p)^{\Theta(\sqrt{N})},
\end{align}

\noindent and hence the error will grow towards $1$ exponentially fast with $\sqrt{N}$.

Therefore, we conclude that an analogous result to that of subsection \ref{subsection:local_encodings} is not possible for Jordan-Wigner encodings in $2D$: even for states with exponentially decaying corelations, there are quadratic observables that are not robust to noise. This arises from the fact that, using the Jordan-Wigner encoding, there can be local fermionic observables which are mapped to qubit operators with long Jordan-Wigner strings. 

\subsection{Quasi-local encodings}\label{subsection:quasi_local}

While section \ref{subsection:JW} focused on the Jordan-Wigner encoding in $2D$, which is manifestly non-local, there exists an important class of encodings that are quasi-local, meaning that local fermionic operators are mapped to Pauli strings with Pauli weight that can scale logarithmically with the system size, $\mathcal{O}(\log N)$. A prominent example is the Bravyi-Kitaev transform \cite{Bravyi2002_encoding}, as well as generalized tree-based mappings such as ternary tree encodings \cite{Jiang2020encoding}. Here, we show that this logarithmic scaling breaks the stability condition outlined in Definition \ref{definition:stability}, leading to an error that grows polynomially with $N$.

Precisely, we will consider encodings for which the number operator $n_i=c_i^{\dagger}c_i$ is mapped as
\begin{align}
	n_i  \rightarrow \frac{1}{2}(I - Z_{\text{logic}, i}),
\end{align}
where $Z_{\text{logic}, i}$ is a string of physical Pauli $Z$ operators with Pauli weight $\mathcal{O}(\log N)$. Furthermore, we will assume that there is at least an index $i$ such that the weight grows as $\Theta(\log N)$. This is true for both the Bravyi-Kitaev transform and the ternary tree mappings \cite{Bravyi2002_encoding,Jiang2020encoding}.

Now, we pick an index $i$ with weight $|Z_{\mathrm{logic},i}| =\Theta (\log N)$, and we choose a state such that $\langle n_i \rangle_{\mathrm{noiseless}} = 1$. We can compute the action of the depolarizing noise as
\begin{align}
\langle n_i\rangle_{\mathrm{noisy}}=\frac{1}{2}+\frac{1}{2}(1-p)^{|Z_{\mathrm{logic},i}|},
\end{align}

\noindent and hence the error becomes
\begin{align}
\mathrm{Error}= \left| \langle n_i\rangle_{\mathrm{noisy}} - \langle n_i\rangle_{\mathrm{noiseless}}\right|=\frac{1}{2}-\frac{1}{2}(1-p)^{\Theta(\log n)}.
\end{align}

\noindent This shows that the error will grow towards $1/2$ polynomially fast with $N$, and thus the stability as defined in Definition \ref{definition:stability} cannot be achieved. The slower growth of the error as compared to the Jordan-Wigner transform of subsection \ref{subsection:JW} arises naturally from the fact that the encoding is quasi-local.

\subsection{1D Fermi surface}\label{subsection:1D_Fermi}

We have thus far shown that, for local encodings fulfilling Assumption \ref{assumption:encodings}, states with sufficiently fast decaying correlations are stable to Pauli noise. Here we will show that our technique can also bound the stability to noise for strongly correlated states where correlations do not decay as fast. For simplicity, we will start with a $1D$ case. We remark that in $1D$ a standard Jordan-Wigner encoding fulfills Assumption \ref{assumption:encodings}. Hence, we will assume that Assumption \ref{assumption:encodings} is fulfilled with a constant $\varphi_0=1$, which is consistent with a $1D$ Jordan-Wigner encoding.

In this section, we will focus on a depolarizing noise channel, which implies choosing $\alpha_x=\alpha_y=\alpha_z=1/3$ in the Pauli noise channel Eq.~\ref{eq:Pauli_noise}. In what follows, we will focus on states which are translation invariant, and with definite particle number, which implies $\langle a_k a_q \rangle_{\rho}=\langle a_k^{\dagger} a_q ^{\dagger}\rangle_{\rho}=0$, and $\langle a_k^{\dagger} a_q \rangle_{\rho}=\delta_{k,q} \langle n_k \rangle_{\rho}$, where $a_k$ is the annihilation operator in momentum space, and $n_k=a_k^{\dagger}a_k$ denotes the number operator in momentum space.  Hence, for these states, the expectation value of any quadratic operator $O$ can be written as  $\langle O  \rangle=\sum_{k} \epsilon(k) \langle n_k \rangle$, for some coefficients $\epsilon(k)$. As a consequence, in the rest of the section we will focus on the error for the expectation value of the number operator $n_k=a_k^{\dagger}a_k$, which can be written as

\begin{align}\label{eq_error_nk_DD}
|\langle n_{{k}}\rangle_{\mathrm{noiseless}}-\langle n_{{k}}\rangle_{\mathrm{noisy}}|=\frac{1}{N} \left|\sum_{x,y} e^{ik(x-y)}\left\langle \left(\mathrm{id} - \mathcal{N}_p \right) c_x^{\dagger}c_y \right\rangle \right|.
\end{align}
We will analyze the error as a function of the momentum occupation function $n(q):=\langle a_q^{\dagger}a_q \rangle_{\rho}$ of a given state $\rho$. We will show rigorously that, in this case, the regularity of $n(q)$ controls the stability to noise. Note that this is consistent with the findings in subsection \ref{subsection:local_encodings}, since $n(q)$ is the Fourier transform of the correlation function $\langle c_x^{\dagger}c_y \rangle_{\rho}$, and hence regularity in $n(q)$ implies correlation decay. Precisely, smoothness in $n(q)$ implies exponentially decaying correlations, which automatically yields stability following the results in subsection \ref{subsection:local_encodings}. 

In the rest of the section, we will rigorously show that Lipschitz continuity for $n(q)$ is a sufficient condition for stability in 1D.\footnote{Note that Lipschitz continuity is slightly different from the correlation decay of Proposition \ref{proposition:stability_2D}: Lipschitz continuity merely guarantees correlation decay strictly faster than inversely with the distance, $|\langle c_x^{\dagger}c_y\rangle|  \leq o(1/d(x,y))$ \cite{Katznelson_2004_harmonic}, which is in general not sufficient to guarantee a decay with $\mu>1$ as stated in Proposition \ref{proposition:stability_2D}.} Additionally, we will also analyze states in which $n(q)$ contains discontinuities, which implies correlations decaying inversely with the distance, and thus remain out of reach for Proposition  \ref{proposition:stability_2D}. Precisely, we will consider states whose momentum occupation can be written as 

\begin{align}
n(q)=n_L(q)+\sum_{j=1}^M \Delta_j H(q-q_j),
\end{align}

\noindent where $n_L(q)$ is an $L-$Lipschitz continuous function, $H(q)$ is a Heaviside step function, and $|\Delta_j| \leq 1$. This expression includes momentum occupation functions which contain $M$ jump discontinuities at the points $q_j$, and is Lipschitz continuous everywhere else. Later, we will consider the specific case of a non-interacting Fermi gas with a Fermi surface, which contains only two discontinuities.

In the general setting, our results show that measuring momentum occupation is stable when measuring away from the discontinuities.

\begin{proposition}\label{proposition:result_discontinuities}
Consider a translation invariant, $N-$mode fermionic state $\rho$, and denote $n(q)=\mathrm{tr}(\rho a_q^{\dagger}a_q)$. Assume Assume that $n(q)$ is a piecewise defined function with $M$ jump discontinuities at the set of points $\mathcal{Q}=\{q_1,...,q_M\}$, and is $L-$Lipschitz continuous elsewhere. Then, the observable $n_k=a_{k}^{\dagger}a_{k}$ is stable against depolarizing errors if the distance between $k$ and any discontinuity is larger than a constant. Precisely, the error can be bounded as 
\begin{equation}
\mathrm{Error} = |\left\langle n_k\right\rangle_{\mathrm{noisy}}-\left\langle n_k\right\rangle_{\mathrm{noiseless}}|\leq f(p)= \mathcal{O}\left(\frac{pM}{q_{\mathrm{dist}}}+L\sqrt{p}\right),
\end{equation}
where $q_{\mathrm{dist}}:=\min_{ q_j \in \mathcal{Q}}|k-q_j|$ is the distance between $k$ and the set of discontinuities $\mathcal{Q}$, and $f(p)=\mathcal{O}(Mp+L\sqrt{p})$ when $q_{\mathrm{dist}}=\Omega(1)$.

\end{proposition}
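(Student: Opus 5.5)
We reduce the error to a convolution of $n(q)$ with a Fermi-type kernel, then split $n$ into its Lipschitz part and the step functions and bound each contribution separately. First, by Eq.~\ref{eq:dep_noise_1} with a $1D$ Jordan-Wigner encoding ($\varphi_0=1$), depolarizing noise acts on $c_x^\dagger c_y$ (for $x\neq y$) by a factor $\lambda(d)$ with $d=d(x,y)$. Since $c_x^\dagger c_y$ is a combination of $\gamma\gamma$ bilinears with the same support, the factor is exactly $(1-p)^{|x-y|+1}$ along the chain. For $x=y$ it is $(1-p)$ acting on $Z$; the identity part of $n_x$ gives a $p\cdot\mathcal{O}(1)$ contribution that we absorb. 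Writing $\langle c_x^\dagger c_y\rangle = \frac1N\sum_q e^{-iq(x-y)} n(q)$ by translation invariance, the error in Eq.~\ref{eq_error_nk_DD} becomes
\begin{align}
\mathrm{Error}\leq \left|\sum_q n(q)\,G_p(k-q)\right| + \mathcal{O}(p), \qquad G_p(\theta):=\frac1N\sum_{\ell}\left(1-(1-p)^{|\ell|+1}\right)e^{i\theta\ell},
\end{align}
where $\ell=x-y$ ranges over the periodic chain, with the wrap-around handled by using $d(x,y)$, which only changes boundary terms.

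The key structural fact is that $\sum_q G_p(k-q)=N\cdot\frac1N(1-(1-p))=p$. In other words, $G_p$ equals $p$ times a delta at $\ell=0$ in position space, minus a Poisson-type kernel $P_p(\theta)=\frac1N\sum_\ell (1-p)^{|\ell|+1}e^{i\theta\ell}$ whose total mass is $\mathcal{O}(1)$ and which concentrates at $\theta=0$ on the scale $p$. Its Fourier sum gives $P_p(\theta)\approx \frac{1}{N}\frac{(1-p)(1-(1-p)^2)}{1-2(1-p)\cos\theta+(1-p)^2}$. Consequently $\sum_q n(q)G_p(k-q)=\frac{1}{N}\sum_\ell n(k)\ldots$ reorganizes as
\begin{align}
-\sum_q \big(n(q)-n(k)\big)\,\tilde P_p(k-q) + \mathcal{O}(p),
\end{align}
with $\tilde P_p\ge0$, $\sum_q\tilde P_p=\mathcal{O}(1)$, and $\tilde P_p(\theta)\lesssim \frac{1}{N}\min\left(\frac1p,\frac{p}{\theta^2}\right)$.

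\textbf{Lipschitz part.} We bound $\sum_q |n_L(q)-n_L(k)|\tilde P_p(k-q)\le L\sum_q|k-q|\tilde P_p(k-q)$. Splitting at $|\theta|=\sqrt p$, the near region contributes at most $L\sqrt p$ because the mass is $\mathcal{O}(1)$. The far region contributes $L\int_{\sqrt p}^{\pi} p\,\theta^{-1}\,d\theta = \mathcal{O}(Lp\log(1/p))\le \mathcal{O}(L\sqrt p)$. This gives the $L\sqrt p$ term.

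\textbf{Step parts.} For each jump, $|H(q-q_j)-H(k-q_j)|$ is nonzero only for $q$ on the far side of $q_j$, hence only for $|q-k|\ge q_{\mathrm{dist}}$. There $\tilde P_p\lesssim \frac{p}{N\theta^2}$, so summing over $q$ gives $\lesssim p\int_{q_{\mathrm{dist}}}^\infty\theta^{-2}\,d\theta=p/q_{\mathrm{dist}}$. With $|\Delta_j|\le1$, summing over the $M$ jumps gives $pM/q_{\mathrm{dist}}$.

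\textbf{Main obstacle.} The main difficulty is controlling the discrete kernel uniformly in $N$. This requires computing the finite periodic geometric sum exactly, including the wrap-around corrections of size $(1-p)^{N/2}$, and proving the two-sided bound $\tilde P_p(\theta)\lesssim N^{-1}\min(p^{-1},p\theta^{-2})$ for $\theta$ measured on the circle. I would obtain this by writing $1-2r\cos\theta+r^2=(1-r)^2+4r\sin^2(\theta/2)$ and using $\sin(\theta/2)\ge\theta/\pi$ on $[-\pi,\pi]$. Riemann sums are then replaced by explicit bounds on the discrete sums; this is straightforward since the kernel is monotone in $|\theta|$.
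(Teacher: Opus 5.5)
Your proof is correct, and it follows a genuinely different route from the paper's.

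\textbf{The paper's route.} The paper never subtracts $n(k)$. It splits $n=n_L+\sum_j\Delta_jH(\cdot-q_j)$ and bounds the unsubtracted convolutions separately:
\begin{itemize}
\item The Lipschitz piece is handled in position space: Cauchy--Schwarz with weight $\sin^2(\pi s/N)$, Parseval on the finite differences of $n_L$, and the dilogarithm estimate $\sum_s(1-r^{1+s})^2/s^2=\mathcal{O}(p)$.
\item Each jump piece is handled by differentiating in $p$. Two integrations by parts show that the Fourier coefficients of $(1/N+\phi(\theta))r^{N\phi(\theta)}$ decay as $1/m^2$; the paper then controls aliasing and integrates back in $p$.
\end{itemize}

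\textbf{Your route.} You use the sum rule $\sum_qG_p(k-q)=p$ to identify the noisy occupation as a Poisson (Abel) mean of $n$ over the Brillouin zone. Both pieces then follow from pointwise bounds on a single kernel. There is a small slip: in momentum space $G_p(\theta)=\delta_{\theta,0}-P_p(\theta)$, with the delta coming from the constant $1$ in position space, not ``$p$ times a delta at $\ell=0$''. Your reorganised formula is still right, because $n(q)-n(k)$ kills the delta.

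Your ``main obstacle'' is a one-line computation. For even $N$, $\theta_m=2\pi m/N$ and $r=1-p$,
\[
P_p(\theta_m)=\frac{r}{N}\bigl(1-(-1)^m r^{N/2}\bigr)\frac{1-r^2}{1-2r\cos\theta_m+r^2}.
\]
This is nonnegative, has total mass exactly $1-p$, and is at most twice the infinite-volume Poisson kernel.

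\textbf{What your route buys.} It is simpler and sharper.
\begin{itemize}
\item Splitting at $|\theta|\sim p$ instead of $\sqrt p$ gives $\mathcal{O}(Lp\log(1/p))$ for the Lipschitz term.
\item You can drop the decomposition altogether. Use $|n(q)-n(k)|\le L|q-k|$ for $|q-k|<q_{\mathrm{dist}}$, since no discontinuity lies in between, and $|n(q)-n(k)|\le 1$ otherwise. This gives $\mathcal{O}(p/q_{\mathrm{dist}}+Lp\log(1/p))$ with no factor $M$, and needs regularity only near $k$.
\item Working with $n$ directly also avoids a wrinkle that both your step argument and the paper gloss over. On the circle, each individual $H(\cdot-q_j)$ carries an extra jump at the zone edge, and $n_L$ is periodic only if $\sum_j\Delta_j=0$.
\end{itemize}

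\textbf{What the paper's route buys.} Its Lipschitz estimate is an $\ell^2$ statement. It needs neither positivity nor a closed form for the kernel, only an upper bound on $1-\lambda$ and square-summable finite differences of $n_L$. It therefore extends to rougher, discrete-$H^1$-type occupation functions that a pointwise Lipschitz bound would not cover.
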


\begin{proof}
Let us write the momentum occupation as 
\begin{align}
n(q)=n_L(q)+\sum_{j=1}^N \Delta_j H(q-q_j),
\end{align}

\noindent where $n_L(q)$ is Lipschitz continuous, and $H(q-q_j)$ is a Heaviside step function that models the jump discontinuities, with $|\Delta_j| \leq 1$. Note that, following Assumption \ref{assumption:encodings}, it follows that the noise acts as 
\begin{align}
\mathcal{N}_p (c_x^{\dagger}c_y)=(1-p)^{\varphi(x,y)}c_x^{\dagger}c_y+p\frac{I}{2}\delta_{x,y}.
\end{align}

\noindent Therefore, the action of the noise channel on the operator $n_k$ is given by

\begin{align}\label{eq:Np_nk_error}
\mathcal{N}_p(n_k)=\frac{1}{N}\sum_{x,y}e^{ik(x-y)}\mathcal{N}_p(c_x^{\dagger}c_y)=\frac{1}{N}\sum_{x,y}e^{ik(x-y)}(1-p)^{\varphi(x,y)}c_{x}^{\dagger}c_y+p\frac{I}{2}.
\end{align}

\noindent Since $\rho$ is assumed to be translation invariant, this implies that $\langle a_{p}^\dagger a_{q}\rangle_\rho =\delta_{pq}\langle n_q\rangle_{\rho}$. Taking this into account, and performing a Fourier transform in the terms $c_x^{\dagger}c_y$ in Eq~\ref{eq:Np_nk_error}, we can write the error as

\begin{align}\label{eq:error_discontinuous}
|\langle n_k\rangle_{\mathrm{noiseless}}-\langle n_k\rangle_{\mathrm{noisy}}| &=|\mathrm{tr}(n_k \rho)-\mathrm{tr}(\mathcal{N}_p(n_k)\rho)|=\frac{1}{N^2}\left|\sum_{x, y}\sum_q\left[1-(1-p)^{\varphi(x,y)}\right] e^{i(k-q)(x-y)}n(q)\right|+\frac{p}{2}\nonumber\\
&\leq |E_L(k)|+\sum_j \left|E_j^{\mathrm{jump}}(k)\right|+\frac{p}{2},
\end{align}
\noindent where we have defined
\begin{align}
E_L(k):=\frac{1}{N^2} \sum_{x,y}\sum_q\left[1-(1-p)^{\varphi(x,y)}\right] e^{i(k-q)(x-y)}n_L(q),
\end{align}

\noindent and 
\begin{align}
E_j^{\mathrm{jump}}(k):=\frac{\Delta_j}{N^2}\sum_{x,y}\sum_q\left[1-(1-p)^{\varphi(x,y)}\right] e^{i(k-q)(x-y)}H(q-q_j).
\end{align}

\noindent Then, the terms $E_L(k)$ and $E_j^{\mathrm{jump}}(k)$ can be bounded independently. Precisely, in Appendix \ref{appendix:lipschitz} we show that $|E_L(k)| \leq \mathcal{O}\left(L\sqrt{p}\right)$. Furthermore, we will use that
 \begin{align}\label{eq:error_Ej(k)}
|E_j^{\mathrm{jump}}(k)| \leq \mathcal{O}\left(\frac{p}{|k-q_j|}\right) \; \mathrm{if} \; k \neq q_j,
\end{align}

 \noindent and $|E_j^{\mathrm{jump}}(k=q_j)|=\mathcal{O}(pN)$. This upper-bound is rigorously shown in appendix \ref{appendix:jump_proof}. Intuitively, it implies that the error due to a jump discontinuity at momentum $q_j$ is $\mathcal{O}(1)$ when measuring a momentum $k$ that is away from $q_j$ by a constant quantity, $|k-q_j| \geq \Theta(1)$, but measuring momenta close to the discontinuity $q_j$ can yield large errors that are proportional to $N$. Utilizing Eq.~\ref{eq:error_Ej(k)} we can directly bound 

 \begin{align}
\sum_{j}|E_j^{\mathrm{jump}}(k)| \leq \mathcal{O}\left(\frac{pM}{q_{\mathrm{{dist}}}}\right).
 \end{align}
Plugging everything into Eq.~\ref{eq:error_discontinuous} completes the proof.

 \end{proof}

Following this result, one can conclude that the stability results are closely linked to the presence of discontinuities in $n(q)$: measuring a momentum close to a jump discontinuity will in general result to fragility to noise, and the total error will in general be proportional to the number of discontinuities. 

Now, we will apply this result to the specific case of non-interacting Fermi gases. In $1D$, the momentum occupation for these states is given by 
\begin{align}
n(q)=\begin{cases} 
1 &\text{if} \quad |q| \leq k_F\\
0 &\text{if} \quad |q|>k_F
\end{cases},
\end{align}

\noindent where $k_F:= \pi\nu$ denotes the Fermi momentum. Here, $\nu:=N_{\rm occ}/N$ is the filling fraction of the system, with $N_{\rm occ}$ the occupation number. Following Proposition \ref{proposition:result_discontinuities}, we bound the error as

\begin{equation}
\mathrm{Error}=\left|  \langle n_{k}\rangle_{\mathrm{\mathrm{noiseless}}}-\langle n_{k}\rangle_{\mathrm{\mathrm{noisy}}} \right| \leq \mathcal{O}\left(\frac{p}{k-k_F}\right).
\end{equation}

\noindent Hence, the expectation value of $n_k$ is stable to noise unless the momentum $k$ is close to the Fermi momentum, in which case the error will grow with the system size $N$. We investigate this numerically in Fig. \ref{fig:Fermi_surface_1D}, which shows that momenta close to the Fermi surface exhibit significant fragility.

\begin{figure}[t]
    \captionsetup{justification=justified,singlelinecheck=false}
    \centering
    \begin{subfigure}[t]{0.48\textwidth}
        \centering
        \caption*{\textbf{(a)}}
        \includegraphics[width=0.8\linewidth]{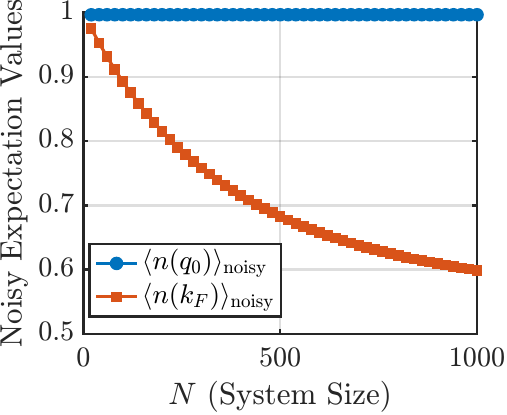}
    \end{subfigure}
    \hfill
    \begin{subfigure}[t]{0.48\textwidth}
        \centering
        \caption*{\textbf{(b)}}
        \includegraphics[width=0.8\linewidth]{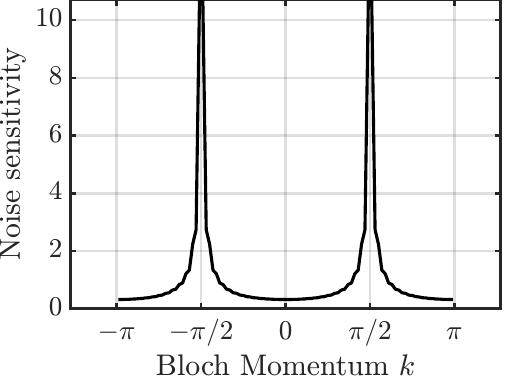}
    \end{subfigure}
    \caption{\justifying
        Response to noise of a $1D$ ground state with Fermi surface. (a) Expectation values of the momentum occupation with momenta $q_0=2\pi /N $ and Fermi momentum $k_F$, as a function of the system size with a fixed noise rate $p=10^{-2}$, and half-filling condition $(k_F=\pi/2)$. As predicted, the momentum expectation value at the Fermi surface rapidly degrades with increasing system size, signalling fragility to noise. On the other hand, the measurement of the occupation at momentum $q_0$ is stable to noise and does not deteriorate for large system sizes.
        (b) Representation of the sensitivity to noise of the noisy expectation value $\langle n_k\rangle_{\mathrm{noisy}}$ for a fixed system size $N=100$. The sensitivity to noise for a fixed momentum $k$ is estimated as the slope $| \langle n_k\rangle_{\mathrm{noisy}}-\langle n_k\rangle_{\mathrm{noiseless}}|/p$, where the noisy expectation value is computed for $p=10^{-2}$, and the Fermi momentum is $k_F=\pi/2$ (half-filling condition). As predicted by the theory, momenta close to the Fermi surface exhibit remarkable sensitivity to noise, while momenta further from the Fermi surface are stable to noise.
    }\label{fig:Fermi_surface_1D}
\end{figure}

\subsection{2D Fermi surface}\label{subsection:2D_Fermi}

In this section we will consider $2D$ states with a Fermi surface, and we will analyze their fragility to noise. Precisely, we will analyze a $2D$ tight-binding model with nearest-neighbor hopping, described by the Hamiltonian 
\begin{align}
H=-t\sum_{\langle i,j\rangle} (c_{i}^{\dagger}c_j+\mathrm{h.c.}).
\end{align}

\noindent For a square lattice, the corresponding dispersion relation is given by
\begin{align}
\epsilon(k_x,k_y)=-2t(\cos (k_x)+\cos (k_y)).
\end{align}

\noindent The ground state of this Hamiltonian exhibits a Fermi surface defined by the contour $\epsilon(k_x,k_y)=E_F$, where $E_F$ is the Fermi energy. The shape of the Fermi surface is well understood, and is a function of the occupation number $N_{\mathrm{occ}}$ \cite{simon2013oxford}. For occupation less than half-filling $(N_\mathrm{occ} <N/2)$, the Fermi surface is nearly circular. For half-filling occupation $(N_\mathrm{occ} =N/2)$, the Fermi surface becomes a diamond. Finally, for more than half-filling $(N_\mathrm{occ} >N/2)$, the Fermi surface becomes circular around the corners of the Brillouin zone. In Fig. \ref{fig:Fermi_surface_2D} we represent the fragility to noise of the measurements of momentum occupation for different momenta in the ground-state of the tight-binding model. Analogously to the $1D$ case, we find that the measurement of the momentum occupation exhibits significant fragility close to the Fermi surface, while remaining stable to noise elsewhere. Our numerical results show that the shape of the Fermi surface can be identified by measuring the fragility to noise. Precisely, Fig. \ref{fig:Fermi_surface_2D} recovers the correct Fermi surface shape for different occupation numbers. Finally, in Appendix \ref{appendix:2D_Fermi} we provide an analytical proof that confirms these numerical results: we study the thermodynamic limit of a system with a circular Fermi surface of radius $k_F$, and show that the measuring of expectation values is unstable exactly at the Fermi surface and stable elsewhere.

\begin{figure}[t]
 
    \includegraphics[width=0.95\linewidth]{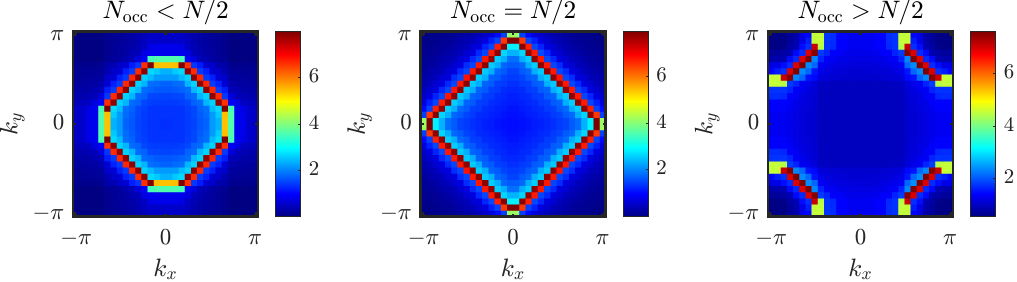}
 
    \caption{\justifying
        Response to noise of the ground state of a $2D$ tight-binding model with Fermi surface. We represent the sensitivity to noise of the noisy expectation value $\langle n(k_x,k_y)\rangle_{\mathrm{noisy}}$ for a lattice of size $N=30 \times 30$, and occupation number $N_{\mathrm{occ}}=300$ (left, less than half-filling), $N_{\mathrm{occ}}=450$ (center, half-filling), $N_{\mathrm{occ}}=700$ (right, more than half-filling). The sensitivity to noise for a fixed momentum mode $(k_x,k_y)$ is estimated as the slope $| \langle n(k_x,k_y)\rangle_{\mathrm{noisy}}-\langle n(k_x,k_y)\rangle_{\mathrm{noiseless}}|/p$, where the noisy expectation value is computed for $p=10^{-2}$. As predicted by the theory, the momenta close to the Fermi surface exhibit remarkable sensitivity to noise, while the momenta further from the Fermi surface are stable to noise.
    }\label{fig:Fermi_surface_2D}
\end{figure}
\section{Noisy quantum circuits}\label{section:noisy_evolution}

In this section we analyze the more general problem of stability for noisy quantum circuits.
Precisely, we consider constant depth $d=\mathcal{O}(1)$ circuits consisting of geometrically local gates, and analyze the task of measuring quadratic observables. We will consider fermionic states of $N$ modes in a $D-$dimensional lattice $\Lambda=\{0,1,\dots,L\}^D$, that are mapped to qubits via an encoding that fulfills Assumption \ref{assumption:encodings}.

We will consider an ideal circuit of constant depth $d=\mathcal{O}(1)$ given by
\begin{align}\label{eq:ideal_time_evolution}
\rho_{\mathrm{id}}(d)=\Phi_d^{\mathrm{id}} \Phi_{d-1}^{\mathrm{id}} \dots \Phi_1^{\mathrm{id}} \rho(0),
\end{align}

\noindent where each channel $\Phi_k^{\mathrm{id}}:=\mathcal{U}_k$ consists of a layer of unitary, local gates. We will also assume that the gates preserve fermionic parity, since the dynamics would otherwise be unphysical due to fermionic superselection rules.

We remark that the above circuit structure is very general, and can be applied to understand a broad class of computational tasks. In particular, this structure is applicable to the quantum simulation of fermionic dynamics, which has received considerable experimental interest in recent times \cite{phasecraft2025google,2025phasecraft_quantinuum,2025_phasecraft_fermionic_excitations,quantinuum2025superconducting,nigmatullin2025compact_vs_JW,Xu2025fermionic_simulation,brown2019_fermionic_simulation,sanchez2020_fermionic_simulation}. In the specific case of constant time ($t=\mathcal{O}(1)$) evolution of a Hamiltonian, the Hamiltonian dynamics can directly be mapped to circuits of the above form if the Hamiltonian can be implemented natively in the device. Naturally, this does not hold in general, since a digital simulation routine (and most notably Trotterization) is often needed to implement the simulation \cite{2014_simulation_review,childs2021_trotter,lloyd1996_universal_quantum_simulators}, which will result in a circuit depth overhead. However, an analysis of the quantum simulation routines lies beyond the scope of this work, and we will not make any further assumptions regarding implementation besides the existence of a circuit of the form of Eq.~\ref{eq:ideal_time_evolution}. We remark that Refs.~\cite{trivedi2025stability_mappings, eisert2025stability,Zoller2019_trotter} studied the stability to noise of Trotterization for spin systems, and found that a circuit with system size independent depth $d=\Theta(t^{D+2}/\varepsilon)$ is sufficient to simulate local observables to an accuracy $\varepsilon$ after the time evolution of a local Hamiltonian for a time $t$. As a consequence, constant depth circuits $d=\mathcal{O}(1)$ suffice to simulate constant time evolution $t=\mathcal{O}(1)$, which implies that a constant depth circuit of the form of Eq.~\ref{eq:ideal_time_evolution} can encode the constant time evolution of local Hamiltonians even when Trotterization is needed. Furthermore, we remark that the prospects for quantum advantage in the constant time simulation of local observables were studied in detail in Refs.~\cite{trivedi2025_accuracy_guarantees,trivedi2024analog_stability}.

\begin{figure}[t]
 
    \includegraphics[width=0.6\linewidth]{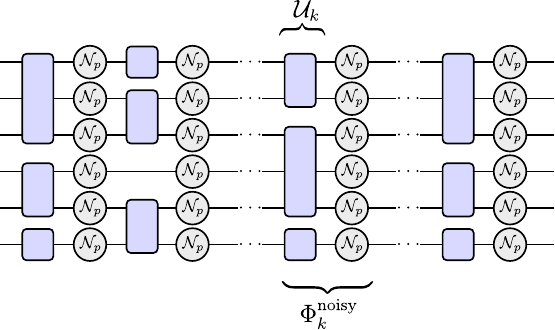}
 
    \caption{\justifying
        Sketch of the circuits considered in Eq.~\ref{eq:noisy_time_evolution}. Each circuit contains $d$ layers of local unitaries, and each layer of unitaries is followed by a layer of single-qubit noise (depicted by gray circle). For simplicity, we have depicted a circuit on $6$ qubits, where the unitaries are $3-$qubit gates. The unitary layer at time step $k$ is denoted by $\mathcal{U}_k$, and the noisy layer by $\Phi_k^{\mathrm{noisy}}=\mathcal{N}_p \circ \mathcal{U}_k$.
    }\label{fig:circuit}
\end{figure}
To model the noise process, we will consider the following noisy evolution
\begin{align}\label{eq:noisy_time_evolution}
\rho_{\mathrm{noisy}}(d)=\Phi_d^{\mathrm{noisy}} \Phi_{d-1}^{\mathrm{noisy}} \dots \Phi_1^{\mathrm{noisy}} \rho(0),
\end{align}
 where $\Phi_\tau^{\mathrm{noisy}}:=\mathcal{N}_p \circ \mathcal{U}_\tau$, with $\mathcal{N}_p$ a layer of Pauli noise channels of strength $p$ as described in Eq.~\ref{eq:Pauli_noise}. Hence, the noisy quantum circuit is modeled by layers of unitary gates followed by layers of single-qubit Pauli noise. This is depicted graphically in Fig. \ref{fig:circuit}. We will consider the measurement of quadratic observables $O$, with the general expression for $O$ provided in Eq.~\ref{eq:expression_Ok_2D}. Then, the ideal expectation value is denoted by $\langle O \rangle_{\mathrm{noiseless}}:=\mathrm{tr}(O\rho_{\mathrm{id}}(t))$, and the noisy expectation value of the observable is $\langle O \rangle_{\mathrm{noisy}}:=\mathrm{tr}(O\rho_{\mathrm{noisy}}(t))$.

Using Eqs.~\ref{eq:ideal_time_evolution} and \ref{eq:noisy_time_evolution}, we will write the error as
\begin{align}\label{eq:Error_noisy_dynamics}
\mathrm{Error}=|\langle O \rangle_{\mathrm{noiseless}}-\langle O \rangle_{\mathrm{noisy}}|  \leq \sum_{k=1}^d \left|\mathrm{tr}(\Phi_d^{\mathrm{id}} \dots \Phi_{k+1}^{\mathrm{id}}(\mathcal{N}_p-\mathrm{id})\Phi_k^{\mathrm{id}}\Phi_{k-1}^{\mathrm{noisy}} \dots \Phi_1^{\mathrm{noisy}}\rho(0) O)\right|=\sum_{k=1}^d |M_k|.
\end{align}

Let us rewrite 
\begin{align}\label{eq:expression_Mk}
M_k:=\mathrm{tr}\left(\left(\mathcal{N}_p-\mathrm{id}\right) \rho^{\mathrm{noisy}}_{(k)} O^{\mathrm{id}}_{(k+1,d)}\right),
\end{align}
where $\rho^{\mathrm{noisy}}_{(k)}:=\Phi_k^{\mathrm{id}}\Phi_{k-1}^{\mathrm{noisy}} \dots \Phi_1^{\mathrm{noisy}} (\rho_0)$ represents the noisy state after $k$ time steps, and 

\begin{align}\label{eq:O_ideal_k}
O^{\mathrm{id}}_{(k+1,d)}:=\left(\left[\Phi_{k+1}^{\mathrm{id}}\right]^{\dagger}\left[\Phi_{k+2}^{\mathrm{id}}\right]^{\dagger} \dots \left[\Phi_{d}^{\mathrm{id}}\right]^{\dagger} \right) O
\end{align}


\noindent represents the Heisenberg evolved operator up to time step $k+1$ under the ideal evolution. Analogously to Proposition \ref{proposition:stability_2D}, the decay of correlations will be the key factor that allows us to show stability. Therefore, we will now provide a Lemma that bounds the decay of correlations in the state after the noisy time evolution. We note that, in general, the correlation function between two operators $O_A$ and $O_B$ is defined as $\langle O_AO_B\rangle-\langle O_A\rangle\langle O_B\rangle$. However, in this section we will often deal with odd operators $O_A$ and $O_B$, meaning that they can be written as a linear combination of an odd product of Majoranas, and therefore do not conserve fermionic parity. Since fermionic physical states have fixed parity due to superselection rules, this implies that $\langle O_A\rangle=\langle O_B\rangle=0$, and the correlation function becomes simply $\langle O_AO_B\rangle$. Particularly, this will be true whenever $O_A$ and $O_B$ are Majorana operators themselves, or are Heisenberg evolved Majoranas with a physical Hamiltonian that preserves fermionic parity. Therefore, in the remainder of the section, we will always assume that $O_A$ and $O_B$ are odd operators, and focus on the correlation function $\langle O_AO_B\rangle$. 

We bound the decay of correlations in the noisy evolved state as follows. 
\begin{lemma}\label{lemma:correlations_interacting}
Consider an initial state  $\rho_0$ which satisfies the clustering of correlations
\begin{equation}
\left| \langle O_AO_B\rangle_{\rho_0}\right| \leq \|O_A\| \|O_B\|f(d(A,B)), 
\end{equation}
with $O_A$ and $O_B$ odd operators supported in disjoint subregions $A$ and $B$ respectively, and $f(d(A,B))$ a function that decays with the distance between $A$ and $B$.
Assume the noiseless evolution in Eq.~\ref{eq:ideal_time_evolution} is composed of  local gates. Then, the noisy time evolved state $\rho_{(k)}^{\mathrm{noisy}}:=\Phi_k^{\mathrm{id}}\Phi_k^{\mathrm{noisy}}\dots\Phi_1^{\mathrm{noisy}}\rho_0$ satisfies
\begin{equation}
\left| \langle O_AO_B\rangle_{\rho^{\mathrm{noisy}}_{(k)}}\right| \leq  \|O_A\| \|O_B\| f(d(A,B)-2vk),
\end{equation}
for some constant $v=\mathcal{O}(1)$, which is the Lieb-Robinson velocity.
\end{lemma}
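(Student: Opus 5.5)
We move to the Heisenberg picture and transfer the correlation bound back to $\rho_0$. Write $\rho^{\mathrm{noisy}}_{(k)}=\mathcal{C}_k(\rho_0)$ with
\[
\mathcal{C}_k:=\mathcal{U}_k\circ\mathcal{N}_p\circ\mathcal{U}_{k-1}\circ\dots\circ\mathcal{N}_p\circ\mathcal{U}_1 .
\]
Then
\[
\langle O_AO_B\rangle_{\rho^{\mathrm{noisy}}_{(k)}}=\mathrm{tr}\bigl(\rho_0\,\mathcal{C}_k^\dagger(O_AO_B)\bigr).
\]
The plan is to show that $\mathcal{C}_k^\dagger(O_AO_B)$ factorizes as a product $\widetilde O_A\widetilde O_B$. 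Here $\widetilde O_A,\widetilde O_B$ should be odd operators supported on the enlarged regions $A_k,B_k$, obtained by growing $A,B$ by at most $vk$ in every direction. We also need $\|\widetilde O_{A}\|\le\|O_A\|$ and $\|\widetilde O_B\|\le\|O_B\|$. Since $d(A_k,B_k)\ge d(A,B)-2vk$, applying the clustering hypothesis on $\rho_0$ to $\widetilde O_A,\widetilde O_B$ then gives the claim, provided $f$ is monotonically decreasing. The case $d(A,B)\le 2vk$ has to be treated separately: there we use the trivial bound $|\langle O_AO_B\rangle|\le\|O_A\|\|O_B\|$, which is compatible with the statement if $f$ is read as capped at its value at zero.

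The factorization is proved layer by layer, by induction on the number of layers. For a unitary layer $\mathcal{U}_\tau^\dagger(X)=U_\tau^\dagger XU_\tau$ we have
\[
\mathcal{U}_\tau^\dagger(XY)=\mathcal{U}_\tau^\dagger(X)\,\mathcal{U}_\tau^\dagger(Y),
\]
so it acts multiplicatively. Because $U_\tau$ is a product of commuting local gates, conjugating $X$ only involves the gates that overlap its support. The support therefore grows by the gate range $v=\mathcal{O}(1)$, which plays the role of the (strict) Lieb–Robinson light cone for a brickwork circuit. The operator norm is preserved, and fermionic parity is preserved because the gates are parity preserving.

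The noise layer is the delicate step, because $\mathcal{N}_p^\dagger$ is not multiplicative in general. We avoid this by exploiting the product structure $\mathcal{N}_p=T_p^{\otimes N_{\mathrm{qubits}}}$. If $A_\tau$ and $B_\tau$ are disjoint qubit sets, then $\mathcal{N}_p^\dagger(XY)=\mathcal{N}_p^\dagger(X)\,\mathcal{N}_p^\dagger(Y)$ exactly for $X$ supported on $A_\tau$ and $Y$ on $B_\tau$, since a tensor product of channels acts factorwise on product operators. Each factor keeps its support, its norm does not increase (the channel is unital and completely positive, hence contractive in operator norm), and it stays odd because Pauli channels commute with the parity operator of the encoding; alternatively, one can argue that being diagonal in the Pauli basis preserves the odd sector.

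Two subtleties need care. The first is that fermionic oddness of the encoded operators must be preserved; this is settled by the parity argument just given. The second is that disjointness at the qubit level has to hold at every step. If the supports merge before step $k$, we are in the regime $d(A,B)\le 2vk$, which is covered by the trivial bound. The main obstacle I expect is verifying cleanly that the supports stay disjoint and that the encoding's qubit regions correspond to the fermionic regions $A,B$ up to an $\mathcal{O}(1)$ shift, which can be absorbed into $v$.
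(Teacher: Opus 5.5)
There is a genuine gap in the noise step. Your factorization $\mathcal{N}_p^\dagger(XY)=\mathcal{N}_p^\dagger(X)\,\mathcal{N}_p^\dagger(Y)$ needs $X$ and $Y$ to be supported on disjoint \emph{qubit} sets. The operators in the lemma, however, are odd fermionic operators with disjoint fermionic supports, so they anticommute, $O_AO_B=-O_BO_A$. Any two qubit operators with disjoint supports commute. Hence the qubit images of $O_A$ and $O_B$, and of their Heisenberg-evolved versions, must overlap for every value of $d(A,B)$, not only when $d(A,B)\le 2vk$. In Jordan--Wigner the string of one operator runs through the other. In the local encodings of Assumption~\ref{assumption:encodings}, only even operators (bilinears) have local Pauli representatives, and those strings run along the path joining the two sites. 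So this is not an $\mathcal{O}(1)$ shift that can be absorbed into $v$, and the multiplicativity you want is false. For example, in the 1D Jordan--Wigner encoding with depolarizing noise, $\mathcal{N}_p(\gamma_i^1)=(1-p)^{i}\gamma_i^1$, while for $i<j$ one has $\mathcal{N}_p(\gamma_i^1\gamma_j^1)=(1-p)^{j-i+1}\gamma_i^1\gamma_j^1\neq(1-p)^{i+j}\gamma_i^1\gamma_j^1$. This is exactly the non-factorization of the noise in the fermions that makes the lemma nontrivial. Your fallback (the trivial bound once supports merge) would therefore always be triggered, and no decay survives.

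The missing idea is to track \emph{fermionic} supports and to use the Pauli-mixture form of the noise, $\mathcal{N}_p(\cdot)=\sum_j q_j S_j(\cdot)S_j$. As the paper does, treat each Pauli string as conjugation by a Majorana monomial $\Gamma_j$. Conjugation is an algebra automorphism: $\Gamma_jO_AO_B\Gamma_j^\dagger=(\Gamma_jO_A\Gamma_j^\dagger)(\Gamma_jO_B\Gamma_j^\dagger)$. Each factor keeps its fermionic support, parity and norm, because the part of $\Gamma_j$ outside $A$ only contributes a sign, $O_A$ having definite parity. A noise layer therefore turns a product into a \emph{convex combination} of products, not a single product $\widetilde O_A\widetilde O_B$, so your target structure is too strong. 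Inducting with the multiplicative unitary layers, as you already do, gives $\mathcal{C}_k^\dagger(O_AO_B)=\sum_i p_i\, O_i^{A_k}O_i^{B_k}$. Applying the clustering hypothesis term by term then finishes the proof via $\bigl|\sum_i p_i\langle O_i^{A_k}O_i^{B_k}\rangle_{\rho_0}\bigr|\le\max_i\bigl|\langle O_i^{A_k}O_i^{B_k}\rangle_{\rho_0}\bigr|$. Your handling of the unitary layers, of parity, and of the regime $d(A,B)\le 2vk$ is fine.
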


The proof for Lemma \ref{lemma:correlations_interacting} is adapted from the usual Lieb-Robinson bounds, with the additional complexity that the noise channel is not local in the fermions \footnote{Precisely, the noise does not factorize in the fermions, and in general $\mathcal{N}_p(O_AO_B) \neq \mathcal{N}_p(O_A) \mathcal{N}_p(O_B)$ for disjoint operators $O_A$ and $O_B$.}. A detailed proof can be found in Appendix \ref{appendix:correlations}. Lemma \ref{lemma:correlations_interacting} thus allows us to bound the correlations in the noisy state after $k$ time steps, and implies that the correlations follow the usual Lieb-Robinson light cone. This bound on the decay of correlations will allow us to show stability following the same principles discussed in section \ref{section:stability_encodings}.

Note that we have not yet assumed a free fermionic evolution in the Lemma, and the correlation decay holds for interacting systems as well. In subsection \ref{subsection:noisy_evolution_free} we will consider Gaussian circuits which give rise to a free fermions evolution, while in subsection \ref{subsection:noisy_evolution_interacting} we will analyze more general interacting circuits. Similarly to previous works \cite{trivedi2024analog_stability}, we are able to provide tighter bounds for the free fermion case, which motivates the presentation of the two results separately. 

\subsection{Free fermions}\label{subsection:noisy_evolution_free}

In this subsection, we will assume that each layer of unitaries $ \mathcal{U}_\tau$ in Eq.~\ref{eq:noisy_time_evolution} is composed of geometrically local fermionic Gaussian gates. In order to show stability, we will rely on Lemma \ref{lemma:correlations_interacting}, which shows that a constant time evolution cannot generate long-range correlations due to Lieb-Robinson constraints. Hence, if the initial state exhibits sufficient decay of correlations, the state will exhibit this decay at all times, which from section \ref{section:stability_encodings} we know to be sufficient to provide stability of measurements. This intuition is formalized in the following Proposition.

\begin{proposition}
Consider the noisy evolution of Eq.~\ref{eq:noisy_time_evolution} under local unitary gates, with initial state $\rho_0$, and circuit depth $d=\mathcal{O}(1)$. Assume that the correlations in the initial state decay as
\begin{equation}
\left| \Gamma_{\mathbf{r},\mathbf{r}'}^{\alpha,\beta}(\rho_0) \right| \leq \frac{K}{d(\mathbf{r},\mathbf{r'})^{\mu}}, \forall (\mathbf{r} \neq \mathbf{r'},\alpha,\beta).
\end{equation}

\noindent Consider the task of measuring a quadratic observable $O$. Then, the task is stable to noise when $\mu>D$, with 
\begin{equation}
\mathrm{Error}=|\langle O\rangle_{\mathrm{noiseless}}-\langle O\rangle_{\mathrm{noisy}}| \leq \mathcal{O}\left(f(p)d^{D+2}\right),
\end{equation}
where
\begin{align}
f(p) = 
\begin{cases} 
\mathcal{O}\left( p^{\mu-D} \right) & \text{if } D < \mu < D+1, \\ 
\mathcal{O}\left( p \log \frac{1}{p} \right) & \text{if } \mu = D+1, \\ 
\mathcal{O}(p) & \text{if } \mu > D+1 ,
\end{cases}
\end{align}
\noindent ensuring $\lim_{p \to  0} f(p)=0$.

\end{proposition}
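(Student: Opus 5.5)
I will start from the telescoping decomposition in Eq.~\ref{eq:Error_noisy_dynamics}, so that $\mathrm{Error}\le\sum_{k=1}^d|M_k|$ with $M_k=\mathrm{tr}\big((\mathcal{N}_p-\mathrm{id})\rho^{\mathrm{noisy}}_{(k)}\,O^{\mathrm{id}}_{(k+1,d)}\big)$. Since the layers are Gaussian, the Heisenberg-evolved observable $O^{\mathrm{id}}_{(k+1,d)}$ is again quadratic. Its coefficient matrix is $\widetilde{O}^{(k)}=R^T\widetilde{O}R$, where $R\in SO(2N)$ is the product of the orthogonal single-particle matrices of the remaining layers. Orthogonal conjugation preserves the trace norm, so $\|\widetilde{O}^{(k)}\|_1\le 1$. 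Locality of the gates implies that $R$ is banded, with range at most $c(d-k)$ in lattice distance. Pauli noise is self-adjoint, so I will move $\mathcal{N}_p-\mathrm{id}$ onto the observable. Each Majorana bilinear is an eigenoperator with eigenvalue $\lambda^{\alpha,\beta}_{\mathbf{r},\mathbf{r}'}$, exactly as in Eq.~\ref{eq:Np_Ok}. This reproduces the structure of Eq.~\ref{eq:error_bound_Ok}:
\begin{align}
|M_k|=\left|\mathrm{tr}\big(\widetilde{O}^{(k)}F^{(k)}\big)\right|\le\|F^{(k)}\|,\qquad F^{(k)}_{(\mathbf{r},\alpha),(\mathbf{r}',\beta)}=\big[1-\lambda^{\alpha,\beta}_{\mathbf{r},\mathbf{r}'}\big]\Gamma^{\alpha,\beta}_{\mathbf{r},\mathbf{r}'}\big(\rho^{\mathrm{noisy}}_{(k)}\big).
\end{align}
Proposition~\ref{proposition:stability_2D} then reduces the problem to bounding the correlations of $\rho^{\mathrm{noisy}}_{(k)}$.

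For this I will apply Lemma~\ref{lemma:correlations_interacting} with $O_A=\gamma^\alpha_{\mathbf{r}}$ and $O_B=\gamma^\beta_{\mathbf{r}'}$, which are odd and have unit norm. This gives $|\Gamma^{\alpha,\beta}_{\mathbf{r},\mathbf{r}'}(\rho^{\mathrm{noisy}}_{(k)})|\le K/(d(\mathbf{r},\mathbf{r}')-2vk)^{\mu}$ whenever $d(\mathbf{r},\mathbf{r}')>2vk$. Inside the light cone, $d\le 2vk+1$, I will use the trivial bound $|\Gamma|\le 1$. I will then split the Gershgorin row sum of Eq.~\ref{norm:F} into two regions.

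\textbf{Inside the light cone.} This region contains $\mathcal{O}((vk)^D)$ sites. Each site contributes at most $1-r^{\varphi_0+2vk+1}\le(1-r)(\varphi_0+2vk+1)$, so the region contributes $\mathcal{O}(p\,k^{D+1})$.

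\textbf{Outside the light cone.} I will shift the distance, $d=s+2vk$. The number of sites at distance $d$ is $\mathcal{O}(d^{D-1})\le\mathcal{O}\big((s+2vk)^{D-1}\big)$, and the noise factor satisfies $1-r^{\varphi_0+s+2vk}\le(1-r^{s})+(1-r^{\varphi_0+2vk})$. Expanding $(s+2vk)^{D-1}\le C\sum_j (vk)^{D-1-j}s^j$ produces sums of the type handled in Corollary~\ref{corollary:sum_1}, namely $\sum_s s^{j-\mu}(1-r^s)$. The worst term $j=D-1$ gives $f(p)$. Terms with smaller $j$ come with extra powers of $k$ and converge at least as well, so they are bounded by $k^{D-1}f(p)$ up to constants.

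The second piece, $(1-r^{\varphi_0+2vk})\sum_s (s+2vk)^{D-1}s^{-\mu}$, is $\mathcal{O}(pk)\cdot\mathcal{O}(k^{D-1})$, because $\mu>D$ makes the sum converge. Since $p\le f(p)$ for small $p$ in all three regimes, each $|M_k|=\mathcal{O}(f(p)k^{D+1})$. Summing over $k\le d$ then gives $\mathcal{O}(f(p)d^{D+2})$.

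The main obstacle I expect is the bookkeeping for the shifted sum. The shift by $2vk$ must not spoil the $\mu>D$ convergence, and every $k$-dependence has to come out polynomially with the claimed exponent $D+1$ per step. I will handle this by using the crude bound $(s+2vk)^{D-1}\le 2^{D-1}\big(s^{D-1}+(2vk)^{D-1}\big)$ rather than a full binomial expansion. Then only two sum types appear, both already controlled by Corollary~\ref{corollary:sum_1}, or directly by $\zeta(\mu-D+1)$ for the convergent piece. A secondary check is that the Lemma's hypothesis with $f(x)=Kx^{-\mu}$ is indeed implied by the stated bound on $\Gamma(\rho_0)$ for single Majoranas. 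That is the only case used here, because Gaussianity lets me avoid evolving the state's correlations of general operators.
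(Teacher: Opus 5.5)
Your argument is the paper's proof essentially step for step. It uses the telescoping into $M_k$, the unitarily invariant trace norm of the evolved coefficient matrix, the bound $|M_k|\le\|F\|$, and Lemma~\ref{lemma:correlations_interacting} for the correlations of $\rho^{\mathrm{noisy}}_{(k)}$. Your light-cone split is exactly Lemma~\ref{lemma:bound_S} with $K_1=1$, $K_2=\varphi_0$, $d_0=2vk$, and your bound $1-r^{\varphi_0+2vk+s}\le(1-r^s)+(1-r^{\varphi_0+2vk})$ simply unpacks the paper's $\zeta-r^{K_1d_0+K_2}\mathrm{Li}$ term. The bookkeeping is right: the $\mathcal{O}(pk^{D+1})$ inside the cone dominates, giving $\mathcal{O}(f(p)k^{D+1})$ per layer and $d^{D+2}$ overall.

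The one claim I would not accept is your ``secondary check''. The single-Majorana bound on $\Gamma(\rho_0)$ does not imply the hypothesis of Lemma~\ref{lemma:correlations_interacting}, and Gaussianity does not rescue this. In the Lemma's proof (Appendix~\ref{appendix:correlations}) the hypothesis is applied to the Heisenberg-evolved factors $O_i^{\mathcal{B}_A(k)}$, not to single Majoranas. Here those factors are $u\cdot\gamma$ with $\|u\|_2=1$, spread over the $\mathcal{O}(k^D)$ modes of the light-cone ball. Your hypothesis therefore only gives
\begin{equation*}
|u^T\Gamma(\rho_0)w|\le\|u\|_1\|w\|_1\max_{c,e}|\Gamma_{ce}(\rho_0)|\le\mathcal{O}(k^D)\,\frac{K}{\left(d(\mathbf r,\mathbf r')-2vk\right)^{\mu}}.
\end{equation*}
This volume factor is not an artifact. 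A depth-$\mathcal{O}(k)$ Gaussian circuit can spread $\gamma^\alpha_{\mathbf r}$ and $\gamma^\beta_{\mathbf r'}$ uniformly over their balls, and against a constant-sign block of $\Gamma(\rho_0)$ (already at $p=0$) this essentially saturates the bound. Gaussianity only guarantees that two-point functions of $\rho_0$ enter, not that they enter with unit weight.

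The paper makes the same leap silently (``by direct application of'' Lemma~\ref{lemma:correlations_interacting}), so you are not missing anything relative to it. The step still needs either the Lemma's stronger clustering assumption on general odd operators, or a repair. A cheap repair within your framework is to use $|\Gamma^{(k)}_{ab}|\le\min\{1,\,Ck^DK/(d(a,b)-2vk)^{\mu}\}$ and enlarge your trivial-bound region to radius $2vk+s_*$, with $s_*=\mathcal{O}(k^{D/\mu})=\mathcal{O}(k)$; this is where $\mu>D$ is used. Writing $s=d-2vk$:
\begin{itemize}
\item the shell $s_*<s\lesssim k$ contributes $\mathcal{O}(pk^{D+D/\mu})$;
\item the tail $s\gtrsim k$ contributes $\mathcal{O}(f(p)k^D)$.
\end{itemize}
Each layer is therefore still $\mathcal{O}(f(p)k^{D+1})$, and the $d^{D+2}$ scaling survives.
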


\begin{proof}
We will first use Eqs.~\ref{eq:Error_noisy_dynamics} and \ref{eq:expression_Mk} to write the error as 
\begin{align}\label{eq:error_2}
|\langle O\rangle_{\mathrm{noiseless}}-\langle O\rangle_{\mathrm{noisy}}| \leq \sum_{k=1}^d |M_k|,
\end{align}
and we will bound the contribution of each $|M_k|$ separately. Recall the expression from Eq.~\ref{eq:expression_Mk}:
\begin{align}
M_k:=\mathrm{tr}\left(\left(\mathcal{N}_p - \mathrm{id}\right) \rho^{\mathrm{noisy}}_{(k)} O^{\mathrm{id}}_{(k+1,d)}\right).
\end{align}

\noindent Note that, in this expression, $O^{\mathrm{id}}_{(k+1,d)}$ is obtained by evolving $O$ in the Heisenberg picture via the ideal evolution,
\begin{align}
O^{\mathrm{id}}_{(k+1,d)}:=\left(\left[\Phi_{k+1}^{\mathrm{id}}\right]^{\dagger}\left[\Phi_{k+2}^{\mathrm{id}}\right]^{\dagger} \dots \left[\Phi_{d}^{\mathrm{id}}\right]^{\dagger} \right) O.
\end{align}

\noindent Since $O$ is quadratic and the ideal evolution is generated by a Hamiltonian $H$ which is also quadratic, it follows that $O^{\mathrm{id}}_{(k+1,d)}$ is also quadratic. As a consequence, one can always write $O^{\mathrm{id}}_{(k+1,d)}$ using the general expression in Eq. \ref{eq:expression_Ok_2D},

\begin{align}
O^{\mathrm{id}}_{(k+1,d)}=i\sum_{\mathbf{r},\mathbf{r}'}\sum_{\alpha,\beta} O_{\mathbf{r},\mathbf{r}'}^{\alpha,\beta} \gamma_{\mathbf{r}}^{\alpha}\gamma_{\mathbf{r}'}^{\beta}.
\end{align}

\noindent We can then write

\begin{align}\label{eq:expression_MK_expanded}
M_k=\mathrm{tr}\left(\left(\mathcal{N}_p - \mathrm{id}\right) \rho^{\mathrm{noisy}}_{(k)} O^{\mathrm{id}}_{(k+1,d)}\right)=\sum_{\mathbf{r},\mathbf{r}'}\sum_{\alpha,\beta} O_{\mathbf{r},\mathbf{r}'}^{\alpha,\beta} \left\langle \left(\mathcal{N}_p-\mathrm{id}\right)\gamma_{\mathbf{r}}^{\alpha}\gamma_{\mathbf{r}'}^{\beta} \right\rangle_{\rho_{(k)}^{\mathrm{noisy}}}=\sum_{\mathbf{r},\mathbf{r}'}\sum_{\alpha,\beta} O_{\mathbf{r},\mathbf{r}'}^{\alpha,\beta}\left(\lambda_{\mathbf{r},\mathbf{r}'}^{\alpha,\beta}-1\right)\Gamma_{\mathbf{r},\mathbf{r}'}^{\alpha,\beta}\left(\rho_{(k)}^{\mathrm{noisy}}\right),
\end{align}

\noindent where we have used $\mathcal{N}_p \left(\gamma_{\mathbf{r}}^{\alpha}\gamma_{\mathbf{r}'}^{\beta}\right)=\lambda_{\mathbf{r},\mathbf{r'}}^{\alpha,\beta}\left(\gamma_{\mathbf{r}}^{\alpha}\gamma_{\mathbf{r}'}^{\beta}\right)$ coming from Eq.~\ref{eq:dep_noise_1}. Furthermore, note that $\rho^{\mathrm{noisy}}_{(k)}$ is obtained by applying $k$ layers of the noisy evolution, $\rho^{\mathrm{noisy}}_{(k)}:=\Phi_{k}^{\mathrm{id}}\Phi_{k-1}^{\mathrm{noisy}} \dots \Phi_1^{\mathrm{noisy}} (\rho_0)$. Hence, by direct application of Lemma \ref{lemma:correlations_interacting}, we can bound the correlations as

\begin{align}\label{eq:corr_matrix_decay}
\left| \Gamma_{\mathbf{r},\mathbf{r}'}^{\alpha,\beta}\left(\rho^{\mathrm{noisy}}_{(k)} \right) \right| \leq\frac{K}{\left[d(\mathbf{r},\mathbf{r'})-2vk\right]^{\mu}},
\end{align}

\noindent for some $v=\mathcal{O}(1)$. 
Using Eq.~\ref{eq:expression_MK_expanded}, we write

\begin{align}
|M_k| =   \left|\sum_{\alpha,\beta} \sum_{\mathbf{r}, \mathbf{r'}}O^{\alpha,\beta}_{\mathbf{r},\mathbf{r}'} \left[1-\lambda_{\mathbf{r},\mathbf{r'}}^{\alpha,\beta}\right]\Gamma^{\alpha,\beta}_{\mathbf{r},\mathbf{r}'}\right|=  \left|\sum_{\alpha,\beta} \sum_{\mathbf{r}, \mathbf{r'}}\widetilde{O}_{(\mathbf{r},\alpha),(\mathbf{r}',\beta)}F_{(\mathbf{r},\alpha),(\mathbf{r}',\beta)}\right| =  \left| \mathrm{tr}(\tilde{O}F)\right| \leq  \|\widetilde{O}\|_1 \|F\|,
\end{align}

\noindent where $\widetilde{O}$ and $F$ are matrices of size $2N \times 2N$ with indices $(\mathbf{r},\alpha)$ and $(\mathbf{r}',\beta)$, defined as

\begin{align}
\widetilde{O}_{(\mathbf{r},\alpha),(\mathbf{r}',\beta)} := O^{\alpha,\beta}_{\mathbf{r},\mathbf{r'}} \quad ; \quad F_{(\mathbf{r},\alpha),(\mathbf{r}',\beta)}:=\left[1-\lambda_{\mathbf{r},\mathbf{r'}}^{\alpha,\beta}\right]\Gamma^{\alpha,\beta}_{\mathbf{r},\mathbf{r}'},
\end{align}
\noindent respectively. Note that $\|\widetilde{O}\|_1 \leq 1$ due to the normalization chosen in Eq.~\ref{eq:expression_Ok_2D}, and this is true at all times since it is a unitarily invariant norm. Furthermore, $F$ is Hermitian, and from Eq.~\ref{eq:bound_eigenvalues} we know that $1-\lambda_{\mathbf{r,\mathbf{r'}}}^{\alpha,\beta} \leq 1-(1-3p/2)^{\varphi(\mathbf{r},\mathbf{r}')}$. Denoting $r=1-3p/2$, this allows us to bound its operator norm as

\begin{align}\label{eq:sum_F_noninteracting}
\|F\| \leq \max_{(\mathbf{r}',\beta)}\sum_{(\mathbf{r},\alpha)}|F_{(\mathbf{r},\alpha),(\mathbf{r'},\beta)}|  \leq  \max_{(\mathbf{r}',\beta)}\sum_{(\mathbf{r},\alpha)} \left(1-r^{d(\mathbf{r},\mathbf{r}')+\varphi_0}\right) \left| \Gamma^{\alpha,\beta}_{\mathbf{r},\mathbf{r}'}\left(\rho_{(k)}^{\mathrm{noisy}}\right)\right|.
\end{align}

For correlation matrices that satisfy the correlation decay in Eq.~\ref{eq:corr_matrix_decay}, the sum in Eq.~\ref{eq:sum_F_noninteracting}, is bounded in Lemma \ref{lemma:bound_S} of Appendix \ref{appendix:sum}, which yields $\|F\| \leq \mathcal{O}\left(d^{D+1}\right) f(r)$, with the behavior of $f(r)$ in the small error limit bounded as 

\begin{align}
f(r) \leq 
\begin{cases} 
\mathcal{O}\left( (1-r)^{\mu-D}\right) & \text{if } D < \mu < D+1, \\ 
\mathcal{O}\left( (1-r) \log (1/(1-r)\right) & \text{if } \mu = D+1, \\ 
\mathcal{O}(1-r) & \text{if } \mu > D+1 .
\end{cases}
\end{align}
Therefore, the total error is bounded as 

\begin{align}
\sum_k |M_k| \leq \sum_{k=1}^d \mathcal{O}(d^{D+1})f(p)= f(p) \mathcal{O}\left( d^{D+2} \right).
\end{align}

\noindent which completes the proof.
\end{proof}

We have therefore obtained that in the free fermion case, for circuits of depth $d$ the error scales as $f(p)\mathcal{O}(d^{D+2})$, with $f(p)=\mathcal{O}(p)$ when $\mu>D+1$. By contrast, Ref.~\cite{trivedi2024analog_stability} showed that the constant-time dynamics under a Gaussian, local Hamiltonian in a fermionic analog quantum simulator is stable with error scaling as $\mathcal{O}(pt)$, independent of the dimensionality. This difference in scaling arises from the overhead of the fermion-to-qubit mapping, which maps Majorana bilinears to (potentially) long Pauli strings which are more sensitive to noise, hence giving an additional $d^{D+1}$ factor. We also remark that the analysis in Ref.~\cite{trivedi2024analog_stability} includes both coherent and incoherent (local) noise, while we have restricted ourselves to incoherent Pauli noise, which becomes effectively non-local due to the fermionic encoding.

\subsection{Interacting systems}\label{subsection:noisy_evolution_interacting}

We will now turn to interacting systems, and we will assume that each layer of unitaries $ \mathcal{U}_k$ in Eq.~\ref{eq:ideal_time_evolution} is composed of geometrically local gates which are not necessarily Gaussian. In an analogous way to the Gaussian case, a decay of correlations in the initial state is sufficient to prove stability bounds for the interacting case.

\begin{proposition}\label{proposition:time_evolution_interacting}
Consider the noisy evolution of Eq.~\ref{eq:noisy_time_evolution} under a local Hamiltonian $H$, with initial state $\rho_0$. Assume that the correlations in the initial state decay as

\begin{equation}
\left| \langle O_AO_B\rangle_{\rho_0}\right| \leq \|O_A\| \|O_B\| \frac{K}{d(A,B)^{\mu}},
\end{equation}

\noindent where $O_A$ and $O_B$ are odd operators supported on disjoint regions $A$ and $B$, respectively. Consider the task of measuring a quadratic observable $O$. Then, the task is stable to noise when $\mu>D$, with 

\begin{equation}
\mathrm{Error}=|\langle O\rangle_{\mathrm{noiseless}}-\langle O\rangle_{\mathrm{noisy}}| \leq f(p)\mathcal{O}\left(d^{2D+1}\right) ,
\end{equation}

\noindent where

\begin{align}
f(p) = 
\begin{cases} 
\mathcal{O}\left( p^{\mu-D} \right) & \text{if } D < \mu < D+1, \\ 
\mathcal{O}\left( p \log \frac{1}{p} \right) & \text{if } \mu = D+1, \\ 
\mathcal{O}(p) & \text{if } \mu > D+1 ,
\end{cases}
\end{align}
\noindent ensuring $\lim_{p \to  0} f(p)=0$.
\end{proposition}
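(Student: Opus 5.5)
We follow the free-fermion proof, with one new ingredient. In the interacting case the Heisenberg-evolved observable $O^{\mathrm{id}}_{(k+1,d)}$ is no longer quadratic. We therefore work in the Heisenberg picture at the level of Majoranas.

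As before, we start from Eq.~\ref{eq:Error_noisy_dynamics} and bound each $|M_k|$ separately. Writing $O=i\sum O^{\alpha,\beta}_{\mathbf{r},\mathbf{r}'}\gamma^\alpha_{\mathbf{r}}\gamma^\beta_{\mathbf{r}'}$, the ideal evolution gives
\begin{align}
O^{\mathrm{id}}_{(k+1,d)}=i\sum O^{\alpha,\beta}_{\mathbf{r},\mathbf{r}'}\,\tilde\gamma^\alpha_{\mathbf{r}}\tilde\gamma^\beta_{\mathbf{r}'},
\end{align}
since the layers are unitary, so the evolution is multiplicative. Here $\tilde\gamma^\alpha_{\mathbf{r}}$ is the Heisenberg-evolved Majorana. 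Because the gates are local and parity preserving, $\tilde\gamma^\alpha_{\mathbf{r}}$ is an odd operator of unit norm. It is supported in a ball of radius $v(d-k)=\mathcal{O}(d)$ around $\mathbf{r}$, with strict light cone since the circuit is made of local gates.

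The difficulty is the action of $\mathcal{N}_p$ on $\tilde\gamma^\alpha_{\mathbf{r}}\tilde\gamma^\beta_{\mathbf{r}'}$. We expand each evolved Majorana in the encoded Pauli basis; equivalently, we use duality and move $(\mathcal{N}_p-\mathrm{id})$ onto the operator. Each Pauli string appearing in the product $\tilde\gamma\tilde\gamma'$ has support contained in the union of the two light cones together with the encoding string joining them. Its weight is therefore at most $\varphi_0'+d(\mathbf{r},\mathbf{r}')$, with $\varphi_0'=\mathcal{O}(d^D)$ once the two light-cone balls are included. Pauli noise acts diagonally, so $(\mathrm{id}-\mathcal{N}_p)$ on each string gives a factor of at most $1-r^{\varphi_0'+d(\mathbf{r},\mathbf{r}')}$.

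The cleanest route is to bound the norm directly:
\begin{align}
\|(\mathrm{id}-\mathcal{N}_p)(\tilde\gamma\tilde\gamma')\|\le 1-r^{w},
\end{align}
where $w$ is the maximal support. This bound holds because $\mathcal{N}_p$ restricted to $w$ qubits satisfies $\mathcal{N}_p=r^{w}\,\mathrm{id}+(\text{a channel of norm }\le 1-r^w)$ acting on that support. More precisely, each $T_p$ is a convex mixture, and the identity branch is selected with probability at least $r$ per qubit. Hence $\|\mathrm{id}-\mathcal{N}_p\|_{\infty\to\infty}\le 2(1-r^w)$ on operators supported on those $w$ qubits.

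This norm bound alone does not give decay in $d(\mathbf{r},\mathbf{r}')$. For that we need $\langle(\mathrm{id}-\mathcal{N}_p)(\tilde\gamma\tilde\gamma')\rangle_{\rho^{\mathrm{noisy}}_{(k)}}$ to be small when $\mathbf{r},\mathbf{r}'$ are far apart. This is the main obstacle. The noise on the connecting string mixes the two odd factors and is nonlocal in the fermions.

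Our first attempt is to write the noise on the support as a convex mixture of Pauli strings $P$, so that $\mathcal{N}_p(X)=\sum_P q_P\,PXP$ with $q_{\mathrm{id}}\ge r^w$. Each encoded Pauli $P$ acts on the logical space as a product of Majorana monomials and stabilizer factors, and conjugation by $P$ only produces a sign on each encoded Majorana monomial. Hence $PXP$ is still a product $\pm\tilde\gamma_P\tilde\gamma'_P$ of odd operators supported near $\mathbf{r}$ and $\mathbf{r}'$ respectively, with norm one.

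Lemma~\ref{lemma:correlations_interacting} then gives
\begin{align}
|\langle\tilde\gamma_P\tilde\gamma'_P\rangle|\le K/[d(\mathbf{r},\mathbf{r}')-2vd]^\mu.
\end{align}
Summing over $P$ yields
\begin{align}
|\langle(\mathrm{id}-\mathcal{N}_p)(\tilde\gamma\tilde\gamma')\rangle|\le 2\left(1-r^{d(\mathbf{r},\mathbf{r}')+\mathcal{O}(d^D)}\right)\frac{K}{[d(\mathbf{r},\mathbf{r}')-\mathcal{O}(d)]^\mu}
\end{align}
for $d(\mathbf{r},\mathbf{r}')>\mathcal{O}(d)$. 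For nearby pairs we use the trivial bound $2(1-r^{\mathcal{O}(d^D)})=\mathcal{O}(pd^D)$.

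Next we assemble $M_k$ into $|\mathrm{tr}(\widetilde{O}F)|\le\|\widetilde{O}\|_1\|F\|$, with $F_{(\mathbf{r},\alpha),(\mathbf{r}',\beta)}$ now equal to the expectation above. We bound $\|F\|$ by the Gershgorin row sum of Eq.~\ref{eq:norm_inequality}. The near-field pairs, of which there are $\mathcal{O}(d^D)$ per row, contribute $\mathcal{O}(p\,d^{2D})$. The far field is handled by Lemma~\ref{lemma:bound_S} of Appendix~\ref{appendix:sum} with a shifted distance and an enlarged constant $\varphi_0\to\mathcal{O}(d^D)$, and it contributes $\mathcal{O}(d^{2D})f(p)$ with the stated three regimes of $f$. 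Summing over $k=1,\dots,d$ gives $f(p)\,\mathcal{O}(d^{2D+1})$.

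We expect two points to need the most care. The first is justifying that Pauli conjugation preserves the odd, two-region product structure within the codespace. The second is the bookkeeping that produces the $d^{2D}$ power per layer.
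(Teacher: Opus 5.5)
Your proposal is correct and follows essentially the same route as the paper's proof. You decompose $\mathcal{N}_p$ as a convex mixture of Pauli conjugations and use that each conjugation keeps $\tilde\gamma^\alpha_{\mathbf r}\tilde\gamma^\beta_{\mathbf r'}$ a norm-one product of odd operators on the two light-cone balls. Lemma~\ref{lemma:correlations_interacting} then gives the $K/[d(\mathbf r,\mathbf r')-2vd]^{\mu}$ decay with prefactor $2(1-r^{|\mathcal R|})$, and $\|F\|$ is bounded by Gershgorin and the sum lemma of Appendix~\ref{appendix:sum}. The only difference is that you take the support to be the two balls plus a single connecting string, $\mathcal{O}(d^D)+d(\mathbf r,\mathbf r')$, whereas the paper uses the tube $c_1(d-k)^D+c_2(d-k)^{D-1}d(\mathbf r,\mathbf r')$; this does not matter, since $K_2=\mathcal{O}(d^D)$ dominates in both cases and both give $\mathcal{O}(d^{2D})$ per layer and $f(p)\,\mathcal{O}(d^{2D+1})$ overall.
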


\begin{proof}
	We first use Eqs.~\ref{eq:Error_noisy_dynamics} and \ref{eq:expression_Mk} to write the error as 

    \begin{align}
    |\langle O\rangle_{\mathrm{noiseless}}-\langle O\rangle_{\mathrm{noisy}}| \leq \sum_{k=1}^d |M_k|.
    \end{align}
    
    We start by bounding the term $|M_k|$ in the error sum. As in Eq.~\ref{eq:expression_Mk}, we have
	\begin{align}
		M_k := \mathrm{tr}\left({\rho}_{k}^{\mathrm{noisy}} (\mathcal{N}_p-\mathrm{id})O^{\mathrm{id}}_{(k+1,d)}\right),
	\end{align}
	where $O^{\mathrm{id}}_{(k+1,d)}$ is the ideal Heisenberg evolution of the observable $O$ from time $d$ back to $k+1$ from Eq.~\ref{eq:O_ideal_k}. Note that, while the observable $O$ is quadratic,
	the operator $O^{\mathrm{id}}_{(k+1,d)}$ is no longer quadratic after the interacting evolution. However, due to the locality of the circuit, the operator $\gamma_{\mathbf{r}}^{\alpha}$ evolves into an operator supported on a ball $\mathcal{B}_{\mathbf{r}}$ of radius $v(d-k)$ centered at $\mathbf{r}$, with $v=\mathcal{O}(1)$ the Lieb-Robinson velocity. Let us write

    \begin{align}\label{eq:expression_Mk_Arr}
    M_k=\sum_{\alpha,\beta} \sum_{\mathbf{r},\mathbf{r}'}O_{\mathbf{r},\mathbf{r}'}^{\alpha,\beta}\left \langle \left(\mathcal{N}_p-\mathrm{id}\right)A_{\mathbf{r},\mathbf{r}'}^{\alpha,\beta} \right \rangle_{\rho^{\mathrm{noisy}}_{(k)}},
    \end{align}

    \noindent where we have denoted the Heisenberg evolved Majorana bilinears at time $d-k$ as
    
    \begin{align}
        A_{\mathbf{r},\mathbf{r}'}^{\alpha,\beta}:= \left[\Phi_{k+1}^{\mathrm{id}}\right]^{\dagger} \dots \left[\Phi_{d}^{\mathrm{id}}\right]^{\dagger}(\gamma_{\mathbf{r}}^\alpha \gamma_{\mathbf{r}'}^\beta)= (\gamma_{\mathbf{r}}^\alpha \gamma_{\mathbf{r}'}^\beta)(d-k).
    \end{align}

    Note that the Pauli noise channel can always be written as $\mathcal{N}_p(\cdot)=\sum_S q_S S(\cdot) S$, where the sum is over all the possible Pauli strings $S$ with probabilities $q_S$ such that $\sum_S q_S=1$. We can then write

    \begin{align}
        (\mathcal{N}_p - \mathrm{id})A_{\mathbf{r},\mathbf{r}'}^{\alpha,\beta} =\sum_{S} q_S  S \left( A_{\mathbf{r},\mathbf{r}'}^{\alpha,\beta}\right) S -A_{\mathbf{r},\mathbf {r'}}^{\alpha,\beta}.
    \end{align}

    Let us now analyze the action of each Pauli string $S$. Note that, due to the strict locality of the unitary layers, one can write  $A_{\mathbf{r},\mathbf{r}'}^{\alpha,\beta}=\gamma_{\mathbf{r}}^{\alpha}(d-k) \gamma_{\mathbf{r'}}^{\beta}(d-k)$, with $\supp\left(\gamma_{\mathbf{r}}^{\alpha}(d-k) \right)\in \mathcal{B}_{\mathbf{r}}$ and $\supp\left(\gamma_{\mathbf{r'}}^{\beta}(d-k) \right)\in \mathcal{B}_{\mathbf{r'}}$. That is, $A_{\mathbf{r},\mathbf{r}'}^{\alpha,\beta}$ is structured as a product of two local operators, one supported in $\mathcal{B}_{\mathbf{r}}$ and the other supported in $\mathcal{B}_{\mathbf{r}'}$. As a consequence, in the qubit picture, $A_{\mathbf{r},\mathbf{r}'}^{\alpha,\beta}$  is mapped to a qubit operator supported in a region $\mathcal{R}$ of size $|\mathcal{R}|=c_1(d-k)^D+c_2 (d-k)^{D-1} d(\mathbf{r},\mathbf{r}')$, with $c_1,c_2=\mathcal{O}(1)$, where the first term comes from the size of the regions $\mathcal{B}_{\mathbf{r}}$ and $\mathcal{B}_{\mathbf{r}'}$ in which the fermionic operator is supported, and the term $c_2 (d-k)^{D-1} d(\mathbf{r},\mathbf{r}')$ bounds the size of the region of all the possible Pauli strings connecting $\mathcal{B}_{\mathbf{r}}$ and $\mathcal{B}_{\mathbf{r}'}$, where they are assumed to follow the shortest path according to Assumption \ref{assumption:encodings}. Note that all the Pauli strings $S$ which have no support in $\mathcal{R}$ will commute with $A_{\mathbf{r},\mathbf{r'}}^{\alpha,\beta}$, and therefore will act trivially. We can then write

    \begin{align}\label{eq:Np_A}
        (\mathcal{N}_p - \mathrm{id})A_{\mathbf{r},\mathbf{r}'}^{\alpha,\beta} =\sum_{\supp(S) \in \mathcal{R}} q_S  S \left( A_{\mathbf{r},\mathbf{r}'}^{\alpha,\beta}\right) S+ (q_I-1) A_{\mathbf{r},\mathbf{r}'}^{\alpha,\beta}.
    \end{align}

    \noindent Here, $q_I$ includes all the Pauli strings that are not supported in $\mathcal{R}$. Intuitively, the probability $q_I$ is the probability that no error occurs inside $\mathcal{R}$, and can be bounded using Eq. \ref{eq:Pauli_noise} as

    \begin{align}\label{eq:bound_pI}
    q_I=\left(1-\frac{3}{4}p\right)^{|\mathcal{R}|}= 1-r^{c_1(d-k)^D+c_2 (d-k)^{D-1} d(\mathbf{r},\mathbf{r}')},
    \end{align}

    \noindent where we denote $r:=1-3p/4$. This also implies that $\sum_{\supp (S) \in \mathcal{R}} q_S =1-q_I = \left(1-r^{|\mathcal{R}|} \right)$. We can then use Eq. \ref{eq:Np_A} to bound
    \begin{align}\label{eq:bound_corr_decay2}
        \left| \left\langle (\mathcal{N}_p - \mathrm{id})A_{\mathbf{r},\mathbf{r}'}^{\alpha,\beta} \right \rangle_{\rho^{\mathrm{noisy}}_{(k)}}\right|  &= \left|\sum_{\supp(S) \in \mathcal{R}} q_S   \left \langle S \left( A_{\mathbf{r},\mathbf{r}'}^{\alpha,\beta}\right) S \right \rangle_{\rho^{\mathrm{noisy}}_{(k)}}+ (q_I-1) \left \langle A_{\mathbf{r},\mathbf{r}'}^{\alpha,\beta} \right \rangle_{\rho^{\mathrm{noisy}}_{(k)}}\right| \nonumber \\ &\leq \left(\sum_{\supp(S) \in \mathcal{R}} q_S  +  |q_I-1| \right) \max_S \left|\left \langle SA_{\mathbf{r},\mathbf{r}'}^{\alpha,\beta} S\right \rangle_{\rho^{\mathrm{noisy}}_{(k)}} \right| \leq 2\left( 1-r^{|\mathcal{R}|}\right) \max_S \left|\left \langle SA_{\mathbf{r},\mathbf{r}'}^{\alpha,\beta} S\right \rangle_{\rho^{\mathrm{noisy}}_{(k)}} \right|.
    \end{align}

    \noindent Now, note that $SA_{\mathbf{r},\mathbf{r}'}^{\alpha,\beta} S  = \widetilde{O}_{\mathcal{B}_{\mathbf{r}}}\widetilde{O}_{\mathcal{B}_{\mathbf{r'}}}$, where $\widetilde{O}_{\mathcal{B}_{\mathbf{r}}}:=S \gamma_{\mathbf{r}}^{\alpha}(d-k)S
    $ is supported in $\mathcal{B}_{\mathbf{r}}$. This holds because $\supp\left(\gamma_{\mathbf{r}}^{\alpha}(d-k)\right) \in \mathcal{B}_{\mathbf{r}}$, and since the Pauli string $S$ is mapped to a Majorana string, it cannot change the locality of the operator. This also implies that $\widetilde{O}_{\mathcal{B}_{\mathbf{r}'}}=S \gamma_{\mathbf{r'}}^{\beta}(d-k)S
    $ is supported in $\mathcal{B}_{\mathbf{r}'}$ for any $S$. Note also that $d(\mathcal{B}_{\mathbf{r}},\mathcal{B}_{\mathbf{r'}})\geq d(\mathbf{r},\mathbf{r}')-2v(d-k)$, for some constant $v=\mathcal{O}(1)$. Hence, using the correlation decay of Lemma \ref{lemma:correlations_interacting} we can write

    \begin{align}\label{eq:bound_corr_decay}
        \max_S \left|\left \langle SA_{\mathbf{r},\mathbf{r}'}^{\alpha,\beta} S\right \rangle_{\rho^{\mathrm{noisy}}_{(k)}} \right| \leq \frac{K}{\left[d(\mathbf{r}, \mathbf{r'})-2vd\right]^{\mu}} \quad  \mathrm{if} \; d(\mathbf{r},\mathbf{r'}) > 2vd,
    \end{align}

    \noindent and 
    \begin{equation}\max_S \left|\left \langle SA_{\mathbf{r},\mathbf{r}'}^{\alpha,\beta} S\right \rangle_{\rho^{\mathrm{noisy}}_{(k)}} \right| \leq 1 \quad \text{if } d(\mathbf{r},\mathbf{r}') \leq 2vd,
    \end{equation} since $\|SA_{\mathbf{r},\mathbf{r}'}^{\alpha,\beta} S\|=1$.  
    
    Then, using Eqs.~\ref{eq:bound_corr_decay2} and \ref{eq:bound_corr_decay}, as well as the expression for $|\mathcal{R}|$, yields

    \begin{align}
    \left| \left\langle (\mathcal{N}_p - \mathrm{id})A_{\mathbf{r},\mathbf{r}'}^{\alpha,\beta} \right \rangle_{\rho^{\mathrm{noisy}}_{(k)}}\right|  \leq  2K\frac{\left(1-r^{c_1d(\mathbf{r},\mathbf{r'})(d-k)^{D-1}+c_2 (d-k)^D} \right)}{\left[d(\mathbf{r},\mathbf{r'})-2vd\right]^{\mu}}.
    \end{align}

    Let us now use Eq. \ref{eq:expression_Mk_Arr} to bound the error corresponding to $M_k$ as 
    
    \begin{align}\label{eq:norm_Mk_interacting}
        |M_k|=\left|\sum_{\alpha,\beta} \sum_{\mathbf{r},\mathbf{r}'}O_{\mathbf{r},\mathbf{r}'}^{\alpha,\beta}\left \langle \left(\mathcal{N}_p-\mathrm{id}\right)A_{\mathbf{r},\mathbf{r}'}^{\alpha,\beta} \right \rangle_{\rho^{\mathrm{noisy}}_{(k)}}\right|=\left|\mathrm{tr}\left(\widetilde{O}F\right)\right| \leq \|\widetilde{O}\|_1 \|F\|,
    \end{align}

    \noindent where $\widetilde{O}$ and $F$ represent matrices of size $2N \times 2N$ with indices $(\mathbf{r},\alpha)$ and $(\mathbf{r}',\beta)$, defined as

    \begin{align}
    \widetilde{O}_{(\mathbf{r},\alpha),(\mathbf{r}',\beta)} := O^{\alpha,\beta}_{\mathbf{r},\mathbf{r'}} \quad ; \quad F_{(\mathbf{r},\alpha),(\mathbf{r}',\beta)}:=\left \langle \left(\mathcal{N}_p-\mathrm{id}\right)A_{\mathbf{r},\mathbf{r}'}^{\alpha,\beta} \right \rangle_{\rho^{\mathrm{noisy}}_{(k)}}.
    \end{align}

    \noindent Note that $\|\widetilde{O}\|_1\leq 1$ due to the chosen normalization, which implies that $|M_k| \leq \|F\|$. Furthermore, since $F$ is hermitian, its operator norm $\|F\|$ can be bounded using Eq.~\ref{eq:norm_inequality} as

    \begin{align}
        \|F\| \leq \max_{(\mathbf{r'},\beta)} \sum_{(\mathbf{r},\alpha)} \left| F_{(\mathbf{r},\alpha),(\mathbf{r'},\beta)}\right| \leq 2K \max_{(\mathbf{r}',\beta)}\sum_{(\mathbf{r},\alpha)}\frac{\left(1-r^{c_1d(\mathbf{r},\mathbf{r'})(d-k)^{D-1}+c_2 (d-k)^D} \right)}{\left[d(\mathbf{r},\mathbf{r'})-2vd\right]^{\mu}}.
    \end{align}
    
    This sum is rigorously bounded in Corollary \ref{corollary:sum_2} of Appendix \ref{appendix:sum}, which yields 

    \begin{align}\label{eq:norm:F_interacting}
        \|F\|  \leq 2K \max_{(\mathbf{r}',\beta)}\sum_{(\mathbf{r},\alpha)}\frac{\left(1-r^{c_1d(\mathbf{r},\mathbf{r'})(d-k)^{D-1}+c_2 (d-k)^D} \right)}{\left[d(\mathbf{r},\mathbf{r'})-2vd\right]^{\mu}}=\mathcal{O}\left(d^{2D}\right)f(r),
    \end{align}

    \noindent where 
    \begin{align}
    f(r)=\begin{cases}
    \mathcal{O}\left((1-r)^{\mu-D}\right) & \text{if } D < \mu < D+1,\\
    {\mathcal{O}}\left((1-r) \left[\log (1/(1+r))\right]\right) & \text{if } \mu = D+1,\\
    {\mathcal{O}}\left(1-r\right) & \text{if } \mu > D+1.
    \end{cases}
    \end{align}

    Hence, putting together Eqs.~\ref{eq:norm_Mk_interacting} and \ref{eq:norm:F_interacting} we obtain $|M_k| \leq \mathcal{O} \left(d^{2D} \right) f(r)$. Finally, summing over $k=1$ to $d$ we get
	\begin{align}
		\mathrm{Error} \leq \sum_{k=1}^d |M_k| \leq f(p) \mathcal{O}\left(d^{2D+1}\right).
	\end{align}
	\noindent This completes the proof.
\end{proof}

We have therefore obtained that even in the interacting case the noisy constant-time evolution is stable to noise, with error scaling as $f(p) \mathcal{O}\left(d^{2D+1}\right)$. This scaling is steeper than the bound $f(p) \mathcal{O}\left(d^{D+2}\right)$ obtained in the free fermion case, which is to be expected since the presence of interaction terms imply that the time evolved observable is no longer quadratic, and thus can be more susceptible to noise. Furthermore, for quantum devices which can simulate fermions natively and do not need to use a fermionic encoding, in the presence of local noise the error can be bounded as $\mathcal{O}\left(pd^{D+1}\right)$ \cite{trivedi2024analog_stability,eisert2025stability}, which again represents a milder scaling as compared to our result. Analogously to the free fermion case, this difference in scaling is explained by the overhead introduced by the fermion-to-qubit mapping, which adds an additional factor of $\mathcal{O}\left(d^D\right)$ to our bound.

\section{Conclusions}

In this work, we have analyzed the stability to noise of fermion-to-qubit mappings. We first investigated the measurement of quadratic observables to probe the stability properties of several fermionic encodings. For a family of local encodings, we provided a rigorous stability criterion: For $D$-dimensional systems, sufficient decay of correlations (specifically  $|\langle c_{\mathbf{r}}^{\dagger}c_{\mathbf{r'}} \rangle|\leq 1/d(\mathbf{r},\mathbf{r}')^{\mu}$, with $\mu>D$) ensures stability against general Pauli noise. Our results imply that there is a sharp transition in the stability to noise of local fermionic encodings as a function of the correlation decay. Conversely, we demonstrated that the Jordan-Wigner encoding with snake ordering in two dimensions, or even quasi-local encodings such as the Bravyi-Kitaev transform are asymptotically fragile to noise.

Furthermore, we extended the analysis to a specific class of strongly correlated states that do not satisfy the correlation decay condition above, and  identified a direct link between the regularity of the momentum occupation function and noise stability. For states with a Fermi surface, we showed that the measurement of the momentum occupation function $n_k=a_k^{\dagger}a_k$ is stable everywhere except in the immediate vicinity of the Fermi surface, where the discontinuity leads to a fragility that scales with the size of the system. This suggests that noise sensitivity itself could be used as a diagnostic tool to map out Fermi surfaces in strongly correlated systems. More generally, for arbitrary states with definite particle number, the measurement of quadratic observables is controlled by the smoothness of the momentum occupation $n_k$: if $n_k$ is a Lipschitz continuous function, then stability is guaranteed.

Finally, we used the stability results above to analyze the noise resilience of constant depth quantum circuits in the presence of Pauli noise, which serves as a model for computational tasks such as noisy state preparation or noisy digital quantum simulation. We proved that the simulation of quadratic observables after constant-time evolution remains stable for constant depth $d=\mathcal{O}(1)$ circuits for both free and interacting fermions, with error scalings of $f(p)\mathcal{O}(d^{D+2})$ and $f(p)\mathcal{O}\left(d^{2D+1}\right)$ respectively, where $f(p)=\mathcal{O}(p)$ for correlations decaying with $\mu>D+1$, and $f(p)=\mathcal{O}(p^{\mu-D})$ for correlation decay $D<\mu<D+1$. While remaining system-size independent, these scalings reveal a steeper scaling as compared to the native implementation of fermionic dynamics in fermionic simulators \cite{trivedi2024analog_stability}, which is a consequence of the overhead from the encoding. Nonetheless, this result rigorously shows the stability of simulating the constant time dynamics of encoded fermionic systems in the presence of Pauli noise, provided that the correlations in the initial state decay sufficiently fast. We leave for future work further generalizations of our results, such as incorporating more realistic noise models (which could include coherent noise or non-Markovian noise, among others).

In summary, our results show that correlation decay protects local fermionic encodings against noise, whereas non-local encodings remain asymptotically fragile. These findings strengthen the available evidence for the feasibility of fermionic simulations on near-term quantum hardware, and offer a practical design principle: For states exhibiting correlation decay, encoding fermionic modes in real space might provide a more robust approach than encoding in momentum space. Since this is true even if the correlation decay is merely algebraic instead of exponential, it covers a wide range of physically relevant states, including thermal states, ground states of gapped Hamiltonians, and a broad class of gapless, interacting systems. This hints that moving away from demonstrations based on random circuits to more structured problems where the state considered is low energy could be beneficial from the point of view of resilience to noise.  Furthermore, this analysis serves as a guiding principle for the development of algorithms involving fermions on digital quantum computers based on qubits. 


\begin{acknowledgements}
We thank Joel Mills and the rest of the Phasecraft team for helpful discussions.
\end{acknowledgements}

\bibliography{references}

\begin{appendix}

\section{Derivation of the sum bound}\label{appendix:sum}


In this section we will provide a bound that is used in the main text. Precisely, we will bound the sum $S$ given by
\begin{align}
S := \max_{\mathbf{r'}} \sum_{\mathbf{r}} \left[1-r^{K_1d(\mathbf{r},\mathbf{r}')+K_2}\right]g(\mathbf{r},\mathbf{r'}),
\end{align}

\noindent with the function $g(\mathbf{r},\mathbf{r'})$ fulfilling
\begin{align}\label{eq:bound_g}
|g(\mathbf{r},\mathbf{r'}) |\leq  
\begin{cases}
1  \quad  \mathrm{if} \quad d(\mathbf{r},\mathbf{r'}) \leq d_0,\\
\frac{K}{\left[d(\mathbf{r},\mathbf{r'})-d_0\right]^{\mu}}   \quad \mathrm{if} \quad d(\mathbf{r},\mathbf{r'}) > d_0.
\end{cases}
\end{align}

\noindent We will provide a bound as function of the parameters $r,K_1,K_2,d_0$. We assume $K=\mathcal{O}(1)$. To do this, we will split the sum as $S=S_1+S_2$, with





\begin{align}
    S_1 := \max_{\mathbf{r'}} \sum_{d(\mathbf{r},\mathbf{r'}) \leq d_0} \left[1-r^{K_1d(\mathbf{r},\mathbf{r}')+K_2}\right]g(\mathbf{r},\mathbf{r'}) \quad ; \quad S_2:= \max_{\mathbf{r'}} \sum_{d(\mathbf{r},\mathbf{r'}) > d_0} \left[1-r^{K_1d(\mathbf{r},\mathbf{r}')+K_2}\right]g(\mathbf{r},\mathbf{r'}).
\end{align}

\noindent We will first bound the sum $S_1$ in the following Lemma

\begin{lemma}\label{lemma:S1}
Consider the sum given by
\begin{align}
    S_1 := \max_{\mathbf{r'}} \sum_{d(\mathbf{r},\mathbf{r'}) \leq d_0} \left[1-r^{K_1d(\mathbf{r},\mathbf{r}')+K_2}\right]g(\mathbf{r},\mathbf{r'}),
\end{align}

\noindent where the function $g(\mathbf{r},\mathbf{r}')$ fulfills the condition in Eq.~\ref{eq:bound_g}. Then, 

\begin{align}
|S_1|=(1-r) (K_1d_0+K_2)\left(1+\mathcal{O} \left(d_0^D\right) \right).
\end{align}
\end{lemma}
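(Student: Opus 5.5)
Since every summand of $S_1$ has $d(\mathbf{r},\mathbf{r}')\le d_0$, the plan is to bound each summand by its prefactor using the first case of Eq.~\ref{eq:bound_g}, $|g(\mathbf{r},\mathbf{r}')|\le 1$. By the triangle inequality this gives
\begin{align}
|S_1|\le \max_{\mathbf{r}'}\sum_{d(\mathbf{r},\mathbf{r}')\le d_0}\left[1-r^{K_1 d(\mathbf{r},\mathbf{r}')+K_2}\right].
\end{align}
Because $0\le r\le 1$ (we have $p\le 2/3$), the bracket is nonnegative and nondecreasing in $d(\mathbf{r},\mathbf{r}')$. It is therefore bounded uniformly by its value at the largest distance allowed, $1-r^{K_1 d_0+K_2}$.

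Next, the elementary inequality $1-r^{m}\le m(1-r)$ for $r\in[0,1]$ and $m\ge 0$ gives
\begin{align}
1-r^{K_1 d(\mathbf{r},\mathbf{r}')+K_2}\le (1-r)(K_1 d_0+K_2).
\end{align}
This inequality follows from Bernoulli's inequality, or from the identity $1-r^m=(1-r)\sum_{j<m}r^j$ when $m$ is an integer and from concavity of $x\mapsto 1-r^x$ otherwise. The resulting bound holds for every summand and is independent of $\mathbf{r}'$, so the maximum over $\mathbf{r}'$ reduces to a count of lattice sites.

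The last step counts the sites in the $\ell^1$-ball $\{\mathbf{r}: d(\mathbf{r},\mathbf{r}')\le d_0\}$ on the periodic lattice $\Lambda$. By translation invariance this number does not depend on $\mathbf{r}'$. It is at most the number of integer points of $\mathbb{Z}^D$ with $\|\mathbf{x}\|_1\le d_0$, since the periodic distance can only shrink balls. That count equals $\sum_{j=0}^{D}2^j\binom{D}{j}\binom{d_0}{j}$, which is $1+\mathcal{O}(d_0^D)$ with a constant depending only on $D$. The "$1$" is the term $\mathbf{r}=\mathbf{r}'$, kept separate to match the form of the statement. Combining the three steps gives $|S_1|\le (1-r)(K_1d_0+K_2)\left(1+\mathcal{O}(d_0^D)\right)$.

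I read the "$=$" in the statement as an upper bound in big-$\mathcal{O}$ sense, which is how the lemma is used when $S=S_1+S_2$ is assembled. The only step needing some care is justifying $1-r^m\le m(1-r)$ for non-integer exponents and keeping the lattice-point count uniform in $L$. Both are routine, so I do not expect a genuine obstacle in this lemma; the heavy lifting belongs to the companion bound on $S_2$.
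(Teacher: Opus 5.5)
Your proof is correct and follows essentially the same route as the paper: bound $|g|\le 1$ on the ball, apply Bernoulli's inequality $1-r^{m}\le m(1-r)$ with $m\le K_1d_0+K_2$, and count the $1+\mathcal{O}(d_0^D)$ sites of the ball. Your triangle-inequality bound on $|S_1|$ and explicit lattice-point count are slightly more careful than the paper, but one justification needs correcting: $1-r^m\le m(1-r)$ holds for $m=0$ and $m\ge 1$ but reverses for $0<m<1$ (concavity of $x\mapsto 1-r^x$, which vanishes at $x=0$, gives $1-r^x\ge x(1-r)$ on $[0,1]$), so your claim for all $m\ge 0$ must be restricted to exponents equal to $0$ or at least $1$, which is harmless here since the exponents are Pauli weights and is the same implicit assumption the paper makes.
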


\begin{proof}

We write 
\begin{align}
    S_1= \max_{\mathbf{r'}} \sum_{d(\mathbf{r},\mathbf{r'}) \leq d_0} \left[1-r^{K_1d(\mathbf{r},\mathbf{r}')+K_2}\right]g(\mathbf{r},\mathbf{r'})\leq \sum_{d(\mathbf{s},\mathbf{0}) \leq d_0} {1-r^{K_1 d(\mathbf{s},\mathbf{0})+K_2}},
\end{align}

\noindent where we have used the trivial bound  $|g(\mathbf{r},\mathbf{r}')| \leq 1$ from Eq.~\ref{eq:bound_g}, and we have used translation invariance to sum over $\mathbf{s}$. We will now use Bernoulli's inequality $1-r^{K_1 d(\mathbf{s},\mathbf{0})+K_2} \leq (K_1 d(\mathbf{s},\mathbf{0})+K_2) (1-r)$ to write 

\begin{align}
    S_1\leq  \sum_{d(\mathbf{s},\mathbf{0}) \leq d_0} {1-r^{K_1 d(\mathbf{s},\mathbf{0})+K_2}} \leq (1-r) \sum_{d(\mathbf{s,0}) \leq d_0}(K_1 d(\mathbf{s},\mathbf{0})+K_2) \leq (1-r) (K_1d_0+K_2) \sum_{d(\mathbf{s,0}) \leq d_0}1.
\end{align}

\noindent Noting that there are $1+\mathcal{O} \left(d_0^D\right)$ terms in the ball of radius $d_0$, the equation above yields

\begin{align}
    S_1\leq  \sum_{d(\mathbf{s},\mathbf{0}) \leq d_0} {1-r^{K_1 d(\mathbf{s},\mathbf{0})+K_2}} \leq (1-r) \sum_{d(\mathbf{s,0}) \leq d_0}(K_1 d(\mathbf{s},\mathbf{0})+K_2) \leq (1-r) (K_1d_0+K_2) \left(1+\mathcal{O}\left(d_0^D \right)\right),
\end{align}
\noindent which completes the proof.
\end{proof}

We will now bound the sum $S_2$ in the following Lemma:

\begin{lemma}\label{lemma:S2}
Consider the sum given by
\begin{align}
    S_2 := \max_{\mathbf{r'}} \sum_{d(\mathbf{r},\mathbf{r'}) > d_0} \left[1-r^{K_1d(\mathbf{r},\mathbf{r}')+K_2}\right]g(\mathbf{r},\mathbf{r'}),
\end{align}

\noindent where the function $g(\mathbf{r},\mathbf{r}')$ fulfills the condition in Eq.~\ref{eq:bound_g}. Then,

\begin{align}
S_2 \leq KC_D(d_0+1)^{D-1}\left(\zeta(\mu-D+1)-r^{K_1d_0+K_2}\mathrm{Li}_{\mu-D+1}\left(r^{K_1}\right)\right),
\end{align}
\noindent  where $\zeta(z)$ represents the Riemann zeta function and $\mathrm{Li}_s(z)$ the polylogarithm of order $s$. In particular, 

\begin{align}
S_2 \leq \mathcal{O} \left((1+d_0)^{D-1}(K_1d_0+K_2)\right) f(r),
\end{align}

\noindent with
\begin{align}
f(r) \leq 
\begin{cases} 
\mathcal{O}\left( (1-r)^{\mu-D}\right) & \text{if } D < \mu < D+1, \\ 
\mathcal{O}\left( (1-r) \log (1/(1-r)\right) & \text{if } \mu = D+1, \\ 
\mathcal{O}(1-r) & \text{if } \mu > D+1 .
\end{cases}
\end{align}
\end{lemma}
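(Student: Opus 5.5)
We bound $S_2$ by reducing it to a one-dimensional sum over shells and then comparing it with polylogarithms. Fix $\mathbf{r}'$; by translation invariance we may take $\mathbf{r}'=\mathbf{0}$. We group the terms according to the distance $n=d(\mathbf{r},\mathbf{0})>d_0$. On the periodic lattice, the number of sites at $\ell^1$-distance exactly $n$ is bounded by $c_D n^{D-1}$ for some dimension-dependent constant $c_D$. Using Eq.~\ref{eq:bound_g} on each shell gives
\begin{align}
S_2\leq K c_D\sum_{n>d_0}\frac{n^{D-1}\left[1-r^{K_1 n+K_2}\right]}{(n-d_0)^{\mu}}.
\end{align}
Next we shift the index to $m=n-d_0\geq 1$. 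For the shell factor we use $(m+d_0)^{D-1}\leq (d_0+1)^{D-1}m^{D-1}$, which holds since $m\geq 1$. This leaves
\begin{align}
S_2\leq KC_D(d_0+1)^{D-1}\sum_{m\geq 1}\frac{1-r^{K_1d_0+K_2}\,(r^{K_1})^{m}}{m^{\mu-D+1}}.
\end{align}
Because $\mu>D$, the exponent satisfies $\mu-D+1>1$. The sum can therefore be split into two convergent series. The first is $\sum_m m^{-(\mu-D+1)}=\zeta(\mu-D+1)$. The second is $\sum_m (r^{K_1})^m m^{-(\mu-D+1)}=\mathrm{Li}_{\mu-D+1}(r^{K_1})$, multiplied by the prefactor $r^{K_1d_0+K_2}$. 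Together these give exactly the first claimed bound. Extending the sum to infinity is harmless, since every term is nonnegative.

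For the asymptotic statement, we write the bracket as $\zeta(s)-r^{a}\mathrm{Li}_s(r^{K_1})$, with $s=\mu-D+1$ and $a=K_1d_0+K_2$, and decompose it as
\begin{align}
\left[1-r^{a}\right]\zeta(s)+r^{a}\left[\zeta(s)-\mathrm{Li}_s(r^{K_1})\right].
\end{align}
By Bernoulli's inequality, the first term is at most $a(1-r)\zeta(s)=\mathcal{O}(K_1d_0+K_2)(1-r)$. The second term is handled with $x=r^{K_1}$, for which $1-x\leq K_1(1-r)$. Its behaviour comes from the standard expansion of $\zeta(s)-\mathrm{Li}_s(x)=\sum_m(1-x^m)m^{-s}$ as $x\to1$.

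The main obstacle is obtaining these polylogarithm asymptotics cleanly and with explicit constants. The plan is to avoid expansion formulas altogether and instead split the sum at $m^*\sim 1/(1-x)$. For $m\leq m^*$ we use $1-x^m\leq m(1-x)$, which gives $(1-x)\sum_{m\leq m^*}m^{1-s}$. For $m>m^*$ we use $1-x^m\leq 1$, which gives $\sum_{m>m^*}m^{-s}=\mathcal{O}\left((m^*)^{1-s}\right)$.

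This produces three cases. When $1<s<2$, both pieces are $\mathcal{O}\left((1-x)^{s-1}\right)=\mathcal{O}\left((1-r)^{\mu-D}\right)$. When $s=2$, the first piece yields $(1-x)\log m^*$, giving the $\mathcal{O}\left((1-r)\log(1/(1-r))\right)$ behaviour. When $s>2$, the first sum converges and the result is $\mathcal{O}(1-r)$.

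In all three cases, powers of $K_1$ from $1-x\leq K_1(1-r)$ are absorbed into the prefactor $\mathcal{O}(K_1d_0+K_2)$ when $K_1\geq1$; this is the regime used later, where $K_1$ is a power of $d$. Combined with the $(d_0+1)^{D-1}$ factor, this yields $S_2\leq\mathcal{O}\left((1+d_0)^{D-1}(K_1d_0+K_2)\right)f(r)$. In the case $D<\mu<D+1$, the extra factor $K_1^{\mu-D}\leq K_1$ is also dominated by the same prefactor.
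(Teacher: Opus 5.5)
Your derivation of the explicit bound follows the paper's argument. You translate to $\mathbf{r}'=\mathbf{0}$ and bound each $\ell^1$ shell by $C_D m^{D-1}$ sites. The paper gets this count by folding the torus onto $\{0,\dots,L/2\}^D$ at a cost of $2^D$ and counting $\binom{m+D-1}{D-1}$, which is equivalent to your direct shell count. You then shift $m\to m-d_0$, use $(m+d_0)^{D-1}\le (d_0+1)^{D-1}m^{D-1}$, and read off $\zeta(\mu-D+1)-r^{K_1d_0+K_2}\mathrm{Li}_{\mu-D+1}(r^{K_1})$.

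For the ``in particular'' part the paper gives no argument at all. Your decomposition $(1-r^a)\zeta(s)+r^a[\zeta(s)-\mathrm{Li}_s(x)]$, with the cut at $m^*\sim 1/(1-x)$, is therefore a genuine addition, and it does yield the three regimes.

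The one step to state more carefully is the final absorption. Write $a=K_1d_0+K_2$. What you actually prove is a bracket of order $(\max(1,a)+\max(1,K_1))f(r)$; note that Bernoulli's $1-r^a\le a(1-r)$ needs $a\ge 1$. Rewriting this as $\mathcal{O}(K_1d_0+K_2)f(r)$ therefore requires $K_1d_0+K_2=\Omega(\max(1,K_1))$, not merely $K_1\ge 1$. This holds where the paper applies the lemma: with $\varphi_0$ a positive constant in the first corollary, and with $d_0\gtrsim 1$ in the second. It fails, for example, for $d_0=K_2=0$. There the stated asymptotic bound reads $0$, while $S_2$ is strictly positive (take $g=K/d^{\mu}$). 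So this is a defect in the lemma's wording rather than in your method.
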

\begin{proof}

We write 
\begin{align}
    S_2= \max_{\mathbf{r'}} \sum_{d(\mathbf{r},\mathbf{r'}) > d_0} \left[1-r^{K_1d(\mathbf{r},\mathbf{r}')+K_2}\right]g(\mathbf{r},\mathbf{r'})\leq K \sum_{d(\mathbf{s},\mathbf{0}) > d_0} \frac{1-r^{K_1 d(\mathbf{r},\mathbf{r'})+K_2}}{\left[d(\mathbf{r},\mathbf{r}')-d_0\right]^{\mu}},
\end{align}

\noindent where we have used the bound on $g(\mathbf{r},\mathbf{r}')$ from Eq.~\ref{eq:bound_g}, and we have used translation invariance to sum over $\mathbf{s}$. Note that the $\mathbf{s}$ are summed on the lattice $\Lambda$,
\begin{align}\label{eq:S2_sum_d0}
S_2 \leq K \sum_{\substack{\mathbf{s} \in \Lambda  \\ d(\mathbf{s},\mathbf{0}) >d_0} } \frac{1-r^{K_1 d(\mathbf{s},\mathbf{0})+K_2}}{\left[d(\mathbf{s},\mathbf{0})-d_0\right]^{\mu}}.
\end{align}

\noindent In order to perform the sum, we note that, for each coordinate $s_i$, and for any function $h_i(\min(s_i,L-s_i))$, we can bound $\sum_{s_i=0}^{
L-1} h_i(\min(s_i,L-s_i)) \leq 2 \sum_{s_i=0}^{L/2} h_i(|s_i|)$. This allows us to replace the distance $d(\mathbf{s},\mathbf{0})$ with the $1-$norm $\|\mathbf{s}\|_1$, with the cost of an additional factor of $2^D$, and with the sum performed over  $\mathcal{C}=\{0, \dots, L/2\}^D$. Therefore, we take Eq.~\ref{eq:S2_sum_d0} and write
\begin{align}
S_2 \leq K2^D\sum_{\substack{{\mathbf{s}} \in \mathcal{C}  \\ \|\mathbf{s}\|_1 >d_0} } \frac{1-r^{K_1 \|\mathbf{s}\|_1+K_2}}{\left[\|\mathbf{s}\|_1-d_0\right]^{\mu}}\leq  K\sum_{\substack{ \mathbf{s} \in \mathds{N}^D  \\ \|\mathbf{s}\|_1 >d_0} } 2^D\frac{1-r^{K_1 \|\mathbf{s}\|_1+K_2}}{\left[\|\mathbf{s}\|_1-d_0\right]^{\mu}},
\end{align}

\noindent where we have extended the sum over all naturals, which will help provide an analytical bound. Since the sum only depends on $\|\mathbf{s}\|$, it can be performed as 
\begin{align}\label{eq:sum_ND(m)}
S_2 \leq 2^D \sum_{\substack{ \mathbf{s} \in \mathds{N}^D  \\ \|\mathbf{s}\|_1 >d_0} } \frac{1-r^{K_1 \|\mathbf{s}\|_1+K_2}}{\left[\|\mathbf{s}\|_1-d_0\right]^{\mu}} =\sum_{m=d_0+1}^{\infty} 2^DN_D(m)\frac{1-r^{K_1m+K_2}}{\left[m-d_0\right]^{\mu}},
\end{align}

\noindent where $N_D(m)$ represents the number of distinct vectors $\mathbf{s}\in \mathds{N}^D$ with $\|\mathbf{s}\|_1=m$. It can be bounded via standard combinatorial methods as 
\begin{align}\label{eq:N_D(m)}
2^DN_D(m) = 2^D\binom{m+D-1}{D-1}\leq 2^D \frac{(m+D-1)^{D-1}}{(D-1)!} \leq C_Dm^{D-1},
\end{align}

\noindent where $C_D=2^D(1+D)^{D-1}/(D-1)! = \mathcal{O}(1)$ is a dimension-dependent constant. We remark that this a crude bound that could be refined, but it is sufficient for our purpose.

We will now perform the change of variables $m'=m-d_0$ in Eq.~\ref{eq:sum_ND(m)}, and use the bound for $N_D(m)$ in Eq.~\ref{eq:N_D(m)} to write
\begin{align}
S_2 \leq K \sum_{m'=1}^{\infty} N_D(m'+d_0)\frac{1-r^{K_1(m'+d_0)+K_2}}{\left[m'\right]^{\mu}} \leq KC_D\sum_{m'=1}^{\infty} (m'+d_0)^{D-1}\frac{1-r^{K_1(m'+d_0)+K_2}}{\left[m'\right]^{\mu}}.
\end{align}

\noindent This sum converges when $\mu>D$. We will use $ (m'+d_0)^{D-1} \leq (m')^{D-1} (1+d_0)^{D-1}$ to compute

\begin{align}\label{eq:sum_ND(m_prime)}
S_2 &\leq K C_D \sum_{m'=1}^{\infty} \left(m'+d_0\right)^{D-1}\frac{1-r^{K_1(m'+d_0)+K_2}} {\left[m'\right]^{\mu}}\leq KC_D(d_0+1)^{D-1}\sum_{m'=1}^{\infty}\frac{1-r^{K_1(m'+d_0)+K_2}} {\left[m'\right]^{\mu}} \nonumber \\ 
&=KC_D(d_0+1)^{D-1}\left(\zeta(\mu-D+1)-r^{K_1d_0+K_2}\mathrm{Li}_{\mu-D+1}\left(r^{K_1}\right)\right),
\end{align}

\noindent where $\zeta(z)$ represents the Riemann zeta function, and $\mathrm{Li}_s(z)$ the polylogarithm of order $s$. Note that we can also write
\begin{align}
S_2 &\leq \mathcal{O} \left[(1+d_0)^{D-1}(K_1d_0+K_2)\right]f(r),
\end{align}

\noindent with
\begin{align}
f(r) =  
\begin{cases} 
\mathcal{O}\left((1-r)^{\mu-D} \right) & \text{if } D < \mu < D+1, \\ 
\mathcal{O}\left( (1-r) \log (1/(1-r)) \right) & \text{if } \mu = D+1, \\ 
\mathcal{O} \left(1-r \right)& \text{if } \mu > D+1 .
\end{cases}
\end{align}

\end{proof}

We now provide a Lemma that combines Lemmas \ref{lemma:S1} and \ref{lemma:S2} to provide a general bound for the sum $S$.

\begin{lemma}\label{lemma:bound_S}
Consider the sum given by
\begin{align}
    S := \max_{\mathbf{r'}} \sum_{ \mathbf{r}} \left[1-r^{K_1d(\mathbf{r},\mathbf{r}')+K_2}\right]g(\mathbf{r},\mathbf{r'}),
\end{align}

\noindent where the function $g(\mathbf{r},\mathbf{r}')$ fulfills the condition in Eq.~\ref{eq:bound_g}. Then,

\begin{align}
|S|\leq \mathcal{O} \left[(K_1d_0+K_2)(1+d_0^{D})\right]f(r),
\end{align}

\noindent with
\begin{align}
f(r) \leq 
\begin{cases} 
\mathcal{O}\left( (1-r)^{\mu-D}\right) & \text{if } D < \mu < D+1, \\ 
\mathcal{O}\left( (1-r) \log (1/(1-r)\right) & \text{if } \mu = D+1, \\ 
\mathcal{O}(1-r) & \text{if } \mu > D+1 .
\end{cases}
\end{align}
\end{lemma}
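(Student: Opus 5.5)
The plan is to write $S = S_1 + S_2$ exactly as in the splitting introduced above. Then $|S| \le |S_1| + |S_2|$ by the triangle inequality, with each piece controlled by Lemma~\ref{lemma:S1} and Lemma~\ref{lemma:S2} respectively. The only real work is to check that both bounds fit under the common envelope $\mathcal{O}\left[(K_1 d_0+K_2)(1+d_0^{D})\right]f(r)$.

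\textbf{The near-field piece.} Lemma~\ref{lemma:S1} gives
\[
|S_1| \le (1-r)(K_1d_0+K_2)\left(1+\mathcal{O}(d_0^D)\right).
\]
To absorb this into the stated form, I will use that $(1-r)$ is always dominated by $f(r)$ in the small $1-r$ regime, up to constants:
\begin{itemize}
\item For $\mu > D+1$ this is immediate.
\item For $\mu = D+1$ we have $(1-r) \le (1-r)\log(1/(1-r))$ once $1-r \le e^{-1}$.
\item For $D < \mu < D+1$ we have $(1-r) \le (1-r)^{\mu-D}$, since $0 < \mu - D < 1$ and $1-r \le 1$.
\end{itemize}
Since $r = 1-3p/2$ or $1-3p/4$ with $p \le 2/3$, we have $1-r \in [0,1]$. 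For $1-r$ bounded away from $0$ all quantities are $\mathcal{O}(1)$, so the $\mathcal{O}$-statements absorb that range. Hence $|S_1| \le \mathcal{O}\left[(K_1d_0+K_2)(1+d_0^D)\right] f(r)$.

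\textbf{The far-field piece.} Lemma~\ref{lemma:S2} already gives
\[
S_2 \le \mathcal{O}\left[(1+d_0)^{D-1}(K_1d_0+K_2)\right] f(r).
\]
Because $(1+d_0)^{D-1} \le 2^{D-1}(1+d_0^{D-1}) \le 2^{D}(1+d_0^D)$, this is also $\mathcal{O}\left[(K_1d_0+K_2)(1+d_0^D)\right] f(r)$. A lower bound on $S_2$ is also needed, since $g$ may be negative. This is handled by redoing the first step of Lemma~\ref{lemma:S2} with $|g|$, giving $|S_2|$ bounded by the same quantity; the proof of that lemma only used $|g|$.

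\textbf{The main obstacle.} I expect the hardest step to be justifying the asymptotic form of $f(r)$ inherited from Lemma~\ref{lemma:S2}. In particular, the extraction of the $(K_1d_0+K_2)$ prefactor from
\[
\zeta(s) - r^{K_1d_0+K_2}\,\mathrm{Li}_s(r^{K_1}), \qquad s = \mu - D + 1,
\]
must be uniform in $K_1$. My approach is to write this difference as
\[
\bigl(1-r^{K_1d_0+K_2}\bigr)\zeta(s) + r^{K_1d_0+K_2}\bigl(\zeta(s)-\mathrm{Li}_s(r^{K_1})\bigr).
\]
The first term is handled by Bernoulli's inequality. For the second term I use the standard behaviour
\[
\zeta(s) - \mathrm{Li}_s(x) \sim \Gamma(1-s)(1-x)^{s-1} \quad \text{for } 1 < s < 2,
\]
with the analogous logarithmic and linear behaviour for $s = 2$ and $s > 2$. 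This is combined with $1 - r^{K_1} \le K_1(1-r)$ and the bound $K_1^{s-1} \le K_1$ for $K_1 \ge 1$.

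Taking that as given from Lemma~\ref{lemma:S2}, the lemma follows by adding the two bounds.
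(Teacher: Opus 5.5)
Your proposal is correct and follows the paper's argument. The paper proves this lemma in one line by adding the bounds of Lemma~\ref{lemma:S1} and Lemma~\ref{lemma:S2}; your checks ($1-r\lesssim f(r)$ in each regime, $(1+d_0)^{D-1}\le 2^D(1+d_0^D)$, and running Lemma~\ref{lemma:S2} with $|g|$) only make explicit what the paper leaves implicit.

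One caveat, which your own sketch of the polylogarithm step exposes. That step yields a bare $K_1 f(r)$ term, and this term fits under $(K_1d_0+K_2)f(r)$ only when $K_1\lesssim K_1d_0+K_2$, e.g.\ when $d_0\ge1$ or $K_2\ge K_1$. Both corollaries satisfy this, but in general the prefactor inherited from Lemma~\ref{lemma:S2} is not uniform in $K_1$: for $d_0=0$, $K_2=1$, $\mu>D+1$ and $g=K/d^{\mu}$, the sum is of order $K_1(1-r)$ whenever $K_1(1-r)\ll 1$. There is also a small sign slip in your asymptotic: for $1<s<2$ one has $\Gamma(1-s)<0$, so the constant should be $-\Gamma(1-s)$.
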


\begin{proof}
Follows directly from Lemmas \ref{lemma:S1} and \ref{lemma:S2}.
\end{proof}

Finally, we will provide two corollaries with specific choices of the parameters $K_1,K_2,d_0$, which will be directly used in the main text. First, we consider the case with $K_1=1,K_2=\mathcal{O}(1)$, and $d_0=0$.

\begin{corollary}\label{corollary:sum_1}
Consider the sum given by
\begin{align}
    S := \max_{\mathbf{r'}} \sum_{ \mathbf{r}} \left[1-r^{d(\mathbf{r},\mathbf{r}')+\varphi_0}\right]g(\mathbf{r},\mathbf{r'}),
\end{align}

\noindent where the function $g(\mathbf{r},\mathbf{r}')$ fulfills 
\begin{equation}
\left| g(\mathbf{r},\mathbf{r}') \right| \leq \frac{K}{d(\mathbf{r},\mathbf{r'})^{\mu}}, \quad\forall (\mathbf{r} \neq\mathbf{r'},\alpha,\beta).
\end{equation}

Then,
\begin{align}
|S| \leq (1-r)\varphi_0+KC_D\left(\zeta(\mu-D+1)-r^{\varphi_0}\mathrm{Li}_{\mu-D+1}\left(r\right)\right),
\end{align}

\noindent where $C_D=\mathcal{O}(1)$ is a dimension dependent constant, $\zeta(z)$ represents the Riemann zeta function, and $\mathrm{Li}_s(z)$ the polylogarithm of order $s$. Furthermore, assuming $K,\varphi_0=\mathcal{O}(1)$, 
\begin{align}
|S| \leq f(r),
\end{align}

\noindent with 
\begin{align}
f(r)= 
\begin{cases} 
\mathcal{O}\left( (1-r)^{\mu-D}\right) & \text{if } D < \mu < D+1, \\ 
\mathcal{O}\left( (1-r) \log (1/(1-r)\right) & \text{if } \mu = D+1, \\ 
\mathcal{O}(1-r) & \text{if } \mu > D+1 .
\end{cases}
\end{align}
\begin{proof}
Follows directly from setting $K_1=1$, $K_2=\varphi_0$, and $d_0=0$ in Lemma \ref{lemma:bound_S}.
\end{proof}

\end{corollary}

Finally, we provide a corollary for the case when $K_1=\mathcal{O}\left(d^{D-1}\right)$, $K_2=\mathcal{O}\left(d_0^D\right)$, and $d_0=\mathcal{O}(d)$, which is used in the main text to bound the stability of interacting systems in subsection \ref{subsection:noisy_evolution_interacting}.
\begin{corollary}\label{corollary:sum_2}
Consider the sum given by
\begin{align}
    S := \max_{\mathbf{r'}} \sum_{ \mathbf{r}} \left[1-r^{K_1d(\mathbf{r},\mathbf{r}')+K_2}\right]g(\mathbf{r},\mathbf{r'}),
\end{align}

\noindent where the function $g(\mathbf{r},\mathbf{r}')$ fulfills 
\begin{align}
|g(\mathbf{r},\mathbf{r'}) |\leq  
\begin{cases}
1  & \text{if } \; d(\mathbf{r},\mathbf{r'}) \leq d_0,\\
\frac{K}{\left[d(\mathbf{r},\mathbf{r'})-d_0\right]^{\mu}}  & \text{if } \; d(\mathbf{r},\mathbf{r'}) > d_0.
\end{cases}
\end{align}

Assume $K_1=\mathcal{O}\left(d^{D-1}\right), K_2=\mathcal{O}(d^D), d_0=\mathcal{O}(d)$, for some $d$. Then,
\begin{equation}
|S| \leq \mathcal{O}\left(d^{2D} \right)f(r),
\end{equation}

\noindent with 
\begin{align}
f(r) \leq 
\begin{cases} 
\mathcal{O}\left( (1-r)^{\mu-D}\right) & \text{if } D < \mu < D+1, \\ 
\mathcal{O}\left( (1-r) \log (1/(1-r)\right) & \text{if } \mu = D+1, \\ 
\mathcal{O}(1-r) & \text{if } \mu > D+1 .
\end{cases}
\end{align}
\begin{proof}
Follows directly from setting $K_1=\mathcal{O}\left(d^{D-1}\right), K_2=\mathcal{O}(d^D), d_0=\mathcal{O}(d)$ in Lemma \ref{lemma:bound_S}. 



    

\end{proof}

\end{corollary}

\section{Proof of bound for jump discontinuities}\label{appendix:jump_proof}
Here we will provide the proof for the bound on jump discontinuities that is used in the proof of Proposition \ref{proposition:result_discontinuities}. Precisely, we consider the error associated to a jump discontinuity, which following the notation in subsection \ref{subsection:1D_Fermi} is given by 
\begin{align}
E_j^{\mathrm{jump}}(k)=\frac{\Delta_j}{N^2} \sum_{x,y}\sum_q \left[1-(1-p)^{\varphi(x,y)}\right]e^{i(k-q)(x-y)}H(q-q_j),
\end{align}

\noindent where $\varphi(x,y)=1+d(x,y)$, and $d(x,y)=\min(|x-y|,N-|x-y|)$ represents the distance between $x$ and $y$. Since $H(q)$ is a Heaviside step function, we can directly write

\begin{align}
    \left|E_j^{\mathrm{jump}}(k)\right|=\left|\frac{\Delta_j}{N^2} \sum_{x,y}\sum_{q\geq q_j} \left[1-(1-p)^{\varphi(x,y)}\right]e^{i(k-q)(x-y)}\right| \leq \left|\frac{2}{N}\sum_{s} \left[1-(1-p)^{1+d(s,0)}\right]G_j(s,k)\right|,
\end{align}

\noindent where we have performed the change of variables $s=x-y$, applied $|\Delta_j| \leq 1$, and defined
\begin{align}
G_j(s,k):=\sum_{q \geq q_j}e^{i(k-q)s}.
\end{align}

\noindent We will now provide a technical lemma that allow us to bound the behavior of $E_j^{\mathrm{jump}}(k)$.

\begin{lemma}\label{lemma:Fourier_coefficients}
Consider the function $\phi(\theta)=\min(\theta/2\pi,1-\theta/2\pi)$ with $\theta \in [0,2\pi]$. Define the function
\begin{equation}
f(\theta):=\left(\frac{1}{N}+\phi(\theta)\right)r^{N\phi(\theta)},
\end{equation}

\noindent with $r \in [0,1]$, and its Fourier coefficients
\begin{equation}
F_m :=\frac{1}{2\pi}\int_0^{2\pi}f(\theta)e^{-i\theta m} \, d\theta.
\end{equation}

\noindent Then,
\begin{equation}
|F_m| \leq \frac{2}{\pi^2 m^2}\left(1+|\log(r)|\right)=\mathcal{O}\left(\frac{1}{m^2}\right).
\end{equation}

\end{lemma}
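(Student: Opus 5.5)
Since $f$ is continuous, piecewise $C^1$ and $2\pi$-periodic (note $\phi(0)=\phi(2\pi)=0$, so $f(0)=f(2\pi)=1/N$), the plan is to integrate by parts twice. This converts $F_m$ into the Fourier coefficient of $f''$ divided by $(im)^2$, plus boundary terms coming from the kink of $\phi$ at $\theta=\pi$ and at $\theta=0\equiv 2\pi$. The symmetry $f(\theta)=f(2\pi-\theta)$ lets us work on $[0,\pi]$ only, where $\phi(\theta)=\theta/2\pi$. There we write $u=\theta/2\pi$, so that $f = (1/N+u)e^{-\lambda u}$ with $\lambda:=N|\log r|\ge 0$.

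The first integration by parts has no boundary terms, by continuity and periodicity, and gives $F_m = \frac{1}{2\pi i m}\int_0^{2\pi} f'(\theta)e^{-im\theta}\,d\theta$. On $[0,\pi]$ we have $f'(\theta)=\frac{1}{2\pi}e^{-\lambda u}\bigl(1-\lambda(1/N+u)\bigr)$, and $f'$ is odd about $\pi$. Hence $f'$ has jumps at $\theta=0$ and $\theta=\pi$. The jump at $0$ has size $2f'(0^+)=\frac{1}{\pi}(1-|\log r|)$. The jump at $\pi$ has size $2|f'(\pi^-)|=\frac{1}{\pi}e^{-\lambda/2}|1-\lambda(1/N+1/2)|$.

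In the second integration by parts, these jumps produce boundary contributions of order $\frac{1}{2\pi m^2}\cdot\frac{1}{\pi}\,(\text{jump})$. For the $\theta=0$ jump this gives $\frac{1}{2\pi^2 m^2}(1+|\log r|)$. For the $\theta=\pi$ jump, use $\sup_{\lambda\ge0} e^{-\lambda/2}(1+\lambda)\le 2e^{-1/2}\cdot$const together with $1/N\le 1$; this stays $\mathcal{O}(1/m^2)$ uniformly in $N$ and $r$. What remains is $\frac{1}{2\pi m^2}\int |f''|$, which by total variation equals $\frac{1}{2\pi m^2}\,\mathrm{TV}(f')$ on the smooth pieces.

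The main obstacle is showing that this total variation is bounded independently of $N$. On $[0,\pi]$, $f'$ is $e^{-\lambda u}$ times an affine function of $u$, so $f''\propto e^{-\lambda u}\lambda(\lambda(1/N+u)-2)$ changes sign at most once. The variation is therefore at most $2\sup|f'|+|f'(0)|+|f'(\pi)|$. Using $\sup_{u\ge0}\lambda u e^{-\lambda u}\le e^{-1}$ and $\lambda/N=|\log r|$, we get $\sup|f'|\le\frac{1}{2\pi}(1+|\log r|+e^{-1})$. Collecting the boundary and interior pieces, and using the symmetric half for the factor of $2$, gives $|F_m|\le \frac{C}{\pi^2 m^2}(1+|\log r|)$.

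The step needing the most care is tracking constants to land exactly on the prefactor $2$. If the crude bookkeeping above overshoots, I would instead evaluate the integrals in closed form on $[0,\pi]$: they are elementary integrals of $(a+bu)e^{-(\lambda+2\pi i m)u}$. The bound would then be read off from $|\lambda+2\pi i m|^{-2}\le (2\pi m)^{-2}$ and $\lambda|\lambda+2\pi i m|^{-1}\le 1$. This gives the $1/m^2$ decay with the stated dependence on $|\log r|$ directly.
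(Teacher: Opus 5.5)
Your route is the paper's (integrate by parts twice, then bound the boundary terms and $\int|f''|$). You carry it out more carefully than the paper does. The paper's second integration by parts keeps only $f'(2\pi)-f'(0)$ and silently drops the term $\frac{(-1)^m}{2\pi m^2}\left(f'(\pi^-)-f'(\pi^+)\right)$ produced by the kink of $\phi$ at $\theta=\pi$. At $r=1$ its identity would give $F_m=-1/(2\pi^2 m^2)$ for every $m\neq 0$, whereas in fact $F_m=0$ for even $m$. So your treatment of that jump is right and necessary.

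The gap is the explicit prefactor $2$, which is part of the statement and which your argument does not reach. Write $\ell:=|\log r|$ and $\lambda:=N\ell$. Your estimates are, per half, $\mathrm{TV}(f')\le 2\sup|f'|+|f'(0)|+|f'(\pi)|$ with $\sup|f'|\le\frac{1}{2\pi}(1+\ell+e^{-1})$. Even with the best bound on the jump at $\pi$, they sum to about $\frac{3+3\ell+e^{-1}}{\pi^2m^2}$, which exceeds $\frac{2(1+\ell)}{\pi^2 m^2}$. The paper's bound $\int_0^{2\pi}|f''|\le(3+\ell)/\pi$ plus the kink term does no better: it yields $\frac{5+3\ell}{2\pi^2m^2}$, too large when $\ell<1$. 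As it stands you have proved $|F_m|\le\frac{C}{\pi^2m^2}(1+\ell)$ with an unspecified $C$. That is all the paper uses downstream, but it is not the stated inequality.

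To get $2$, evaluate the interior term exactly. With $v=\lambda\theta/(2\pi)$ on $[0,\pi]$, one has $\int_0^{2\pi}|f''|\,d\theta=\frac{1}{\pi}\int_0^{\lambda/2}e^{-v}|2-\ell-v|\,dv\le\frac{1}{\pi}(1+2e^{-2}+\ell)$. Bound the jump at $\pi$ by $\frac{1}{\pi}e^{-\lambda/2}(1+\ell+\lambda/2)\le\frac{1}{\pi}(1+\ell)$, since $x\mapsto e^{-x}(c+x)$ is nonincreasing for $c\ge 1$. Together with $|1-\ell|\le 1+\ell$ at $\theta=0$, this gives $|F_m|\le\frac{3+2e^{-2}+3\ell}{2\pi^2m^2}\le\frac{2(1+\ell)}{\pi^2m^2}$.

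Your closed-form fallback also works, but not with the inequality you name. Set $z=\lambda-2\pi i m$. On one half the $1/z$ terms are only $O(1/m)$. Adding the mirror half turns them into $2\,\mathrm{Re}(1/z)=2\lambda/|z|^2$, but the bound $\lambda/|z|\le 1$ still leaves only $1/m$ decay. You need $\lambda/|z|^2\le\lambda/(2\pi m)^2$, and then you absorb $\lambda$ via $\lambda/N=\ell$ and $\lambda e^{-\lambda/2}\le 2/e$. Explicitly, with $D=\lambda^2+4\pi^2m^2$ and $\epsilon=(-1)^me^{-\lambda/2}$,
$F_m=\frac{2\lambda}{D}\left(\frac1N-\left(\frac1N+\frac12\right)\epsilon\right)+\frac{2(1-\epsilon)(\lambda^2-4\pi^2m^2)}{D^2}$.
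Hence $|F_m|\le\frac{1+\ell+3e^{-2/3}}{2\pi^2m^2}$, comfortably inside the claim.
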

\begin{proof}
By definition $f(\theta)$ is periodic in $\theta$ with period $2\pi$. We can then integrate by parts

\begin{align}\label{eq:F_m_parts}
F_m=\frac{1}{2\pi}\int_0^{2\pi}f(\theta)e^{-i\theta m} \, d\theta = \frac{1}{i2 \pi m}\int_0^{2\pi}f'(\theta)e^{-i\theta m} \, d\theta=\frac{1}{2\pi m^2}\left(f'(2\pi)-f'(0)\right)-\frac{1}{2 \pi m^2}\int_0^{2\pi}f''(\theta)e^{-i \theta m} \, d \theta,
\end{align}

\noindent where the first boundary term vanishes due to the periodicity of $f(\theta)$. We compute the first and second derivatives as

\begin{align*}
    f'(\theta)&=\phi'(\theta) r^{N \phi(\theta)} \left(1+ \log r +N \phi(\theta) \log r\right),\\
    f''(\theta)&=\frac{N \log r}{4 \pi^2} r^{N \phi(\theta)}\left(2+ \log r+N \phi(\theta) \log r\right).
\end{align*}

\noindent Note that $f'(2\pi)-f'(0) =-(1+\log r)/\pi$. Furthermore, by straightforwards integration, one finds that

\begin{equation}
\int_0^{2\pi} |f''(\theta)| \, d\theta \leq \frac{N |\log r|}{2 \pi^2} \left(|2+ \log r|\int_0^{\pi} r^{N \phi(\theta) } d \theta+N\phi |\log r| \int_0^{\pi} r^{N \phi(\theta)} \phi(\theta) d\theta\right) \leq \frac{3+|\log r|}{ \pi}.
\end{equation}
\noindent Then, using Eq.~\ref{eq:F_m_parts} we directly bound

\begin{equation}
|F_m| \leq \frac{2}{\pi^2 m^2}\left(1+|\log r|\right).
\end{equation}
\end{proof}

Using this result, we are able to upper-bound $|E_j^{\mathrm{jump}}(k)|$. For simplicity, we will quantize the momenta as $q_m=2\pi m/N$ (i.e., we assume periodic boundary conditions).

\begin{lemma}\label{lemma:sup_norm_Sj}
Consider the sum 
\begin{equation}
E_j^{\mathrm{jump}}(k,p)=\frac{1}{N}\sum_{s=0}^{N-1} \left(1-(1-p)^{1+d(s,0)}\right)\sum_{m=j}^{N/2-1} e^{is(q_k-q_m)},
\end{equation}

\noindent where $q_m=2\pi m/N$. Then,
\begin{equation}
|E_{j}^{\mathrm{jump}}(k,p)| = \mathcal{O}\left(\frac{p}{|k-q_j|}\right)
\end{equation}

\noindent when $q_j \neq k$, and $|E_{j}^{\mathrm{jump}}(k,p)|=\mathcal{O}(pN)$ when $q_j=k$.
\end{lemma}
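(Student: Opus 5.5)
The plan is to exchange the order of the two sums and recognize the inner sum over $s$ as a discrete Fourier transform of the weight $w(s):=1-(1-p)^{1+d(s,0)}$. Write $\rho:=1-p$ and $g(s):=\rho^{d(s,0)}$, so that $w(s)=1-\rho\, g(s)$. Then
\begin{align*}
E_j^{\mathrm{jump}}(k,p)=\sum_{m=j}^{N/2-1}\left[\delta_{q_k,q_m}-\rho\,\widehat{g}(q_k-q_m)\right],\qquad \widehat{g}(\theta):=\frac{1}{N}\sum_{s=0}^{N-1}\rho^{d(s,0)}e^{is\theta}.
\end{align*}
Here I have used $\frac1N\sum_s e^{is(q_k-q_m)}=\delta_{q_k,q_m}$. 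The kernel $\widehat{g}$ is explicit: splitting $s$ into $0\le s\le N/2$ and $N/2<s<N$ turns it into two geometric series. Up to boundary terms of size $\mathcal{O}(\rho^{N/2}/N)$, this gives the discrete Poisson kernel
\begin{align*}
\widehat{g}(\theta)\simeq\frac{1}{N}\,\frac{1-\rho^2}{1-2\rho\cos\theta+\rho^2}.
\end{align*}
From this I will extract two facts. First, $0\le \widehat{g}(\theta)\lesssim \frac{1}{N}\frac{2p}{p^2+4\sin^2(\theta/2)}$ uniformly. Second, the identity $\sum_{m}\widehat{g}(q_k-q_m)=g(0)=1$ holds when the sum runs over all $N$ momenta, by Fourier inversion.

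Next I split into cases according to whether $q_k$ lies in the summation range $\{q_j,\dots,q_{N/2-1}\}$. If it does not, the delta never fires, and I bound $|E_j^{\mathrm{jump}}|\le\rho\sum_{m\ge j}\widehat{g}(q_k-q_m)$. If it does, I use the normalization $\sum_{\text{all } m}\widehat{g}=1$ to rewrite
\begin{align*}
E_j^{\mathrm{jump}}=1-\rho+\rho\sum_{m\notin\text{range}}\widehat{g}(q_k-q_m)=p+\rho\sum_{m\notin\text{range}}\widehat{g}(q_k-q_m).
\end{align*}
In both cases the task reduces to bounding a sum of $\widehat{g}(q_k-q_m)$ over momenta $q_m$ lying on the far side of the jump $q_j$. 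All of these satisfy $|q_k-q_m|\ge|q_k-q_j|=:\delta$, up to one lattice spacing and with the distance taken mod $2\pi$. The endpoint $q_{N/2}$ of the range is handled identically, since it is a jump of the periodic step function; that contribution is $\mathcal{O}(p)$ when $k$ is away from it, and it is absorbed in the statement.

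For the tail sum I use the kernel bound. Writing $q_k-q_m=2\pi n/N$ with $|n|\ge n_0:=N\delta/(2\pi)$ and using $\sin^2(\theta/2)\ge\theta^2/\pi^2$, I get
\begin{align*}
\sum_{|n|\ge n_0}\frac{1}{N}\frac{2p}{4\sin^2(\pi n/N)}\;\lesssim\;\sum_{|n|\ge n_0}\frac{pN}{8 n^2}\;=\;\mathcal{O}\!\left(\frac{pN}{n_0}\right)=\mathcal{O}\!\left(\frac{p}{\delta}\right).
\end{align*}
This gives $|E_j^{\mathrm{jump}}|=\mathcal{O}(p/|k-q_j|)$. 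When $q_k=q_j$, the minimal offset is $n_0=1$, so the same sum is $\mathcal{O}(pN)$. Here I keep the $p^2$ in the denominator only to note that the bound is not improved, so the estimate is saturated.

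The main obstacle is making the kernel computation rigorous for the finite periodic lattice with $d(s,0)=\min(s,N-s)$. I must verify that the exact geometric-series expression stays nonnegative and is dominated uniformly by the Poisson-type bound, including the boundary terms $\rho^{N/2}$ and the parity of $N$. I also must make sure that the normalization identity is applied over exactly the $N$ allowed momenta. I would first try to write the exact closed form and compare its numerator and denominator directly. If that becomes messy, the fallback is Lemma~\ref{lemma:Fourier_coefficients}, which controls the $\mathcal{O}(1/m^2)$ decay of the Fourier coefficients of the relevant weight and yields the same tail estimate.
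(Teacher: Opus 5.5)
Your proposal is correct. It reaches the bound by a genuinely different technical route, although its case structure matches the paper's: you split on whether $q_k$ lies in the summation window and handle the in-window case with a sum rule. The paper sets $k=0$ and differentiates in $p$. It then treats the derivative weight $(1+d(s,0))(1-p)^{d(s,0)}/N$ as samples of the continuous function $f(\theta)=(1/N+\phi(\theta))(1-p)^{N\phi(\theta)}$, and gets $|F_m|=\mathcal{O}(1/m^2)$ for its Fourier coefficients by integrating by parts twice (Lemma~\ref{lemma:Fourier_coefficients}). Aliasing is handled through $\sum_s f(\theta_s)e^{-i\theta_s m}=N\sum_h F_{m+hN}$. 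The paper then splits on the sign of $j$, using $\sum_m F_m=f(0)=1/N$ (the analogue of your normalization identity), and finally integrates $|B_j|=\mathcal{O}(N/|j|)$ over $[0,p]$. You instead compute the lattice transform exactly and read the factor $p$ directly off the numerator $1-\rho^2$. You therefore need no $p$-derivative, no continuum interpolant and no aliasing sum.

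The obstacle you flag closes cleanly. For even $N$ the two geometric series sum exactly to
\begin{align*}
N\,\widehat{g}(\theta_n)=\frac{(1-\rho^2)\left(1-(-1)^n\rho^{N/2}\right)}{1-2\rho\cos\theta_n+\rho^2},\qquad \theta_n=\frac{2\pi n}{N},
\end{align*}
so $\widehat{g}\geq 0$ and $\widehat{g}(\theta_n)\leq \frac{4p}{N\left(p^2+4\rho\sin^2(\theta_n/2)\right)}$ hold exactly. The finite-size correction is multiplicative, not an additive $\mathcal{O}(\rho^{N/2}/N)$ term as you wrote. For $pN\lesssim 1$ it is not small: e.g.\ $\widehat{g}(0)\approx 1$ rather than the infinite-lattice value $\approx 2/(Np)$. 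Since the factor lies in $[0,2]$, this does not affect your argument. Your route gives explicit constants and an exactly nonnegative kernel. The paper's route uses only smoothness of the damping profile, so it would extend to weights without a closed-form transform.

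Your explicit treatment of the second edge of the window, at $q_{N/2}\equiv -\pi$, is also more careful than the paper's. The paper's reduction to $k=0$ ``by periodicity'' silently keeps $k$ at distance $\pi$ from that edge. For general $k$, both arguments actually give $\mathcal{O}(p/|k-q_j|)+\mathcal{O}(p/\mathrm{dist}(k,\pi))$. For $k$ near $\pm\pi$ the error really is of order $\min(pN,1)$, so the statement should be read with $k$ bounded away from $\pm\pi$.
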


\begin{proof}
We will consider the case with $k=0$. Note that the general case automatically follows due to periodicity, since $k$ is merely shifting the phase. Let us denote by $B_j$ the derivative with respect to $p$ of $E_j^{\mathrm{jump}}(k=0,p)$,
\begin{align}
B_j(p):=\frac{1}{N}\sum_{s=0}^{N-1}\left(1+d(s,0)\right)(1-p)^{d(s,0)}\sum_{m=j}^{N/2-1}e^{-isq_m}.
\end{align}

\noindent We will now provide a bound on $|B_j(p)|$, which in turn will allow us to bound $|S_j(k=0,p)|$.
Let us now denote $\varphi(s)=N\phi(\theta_s)$, where $\theta_s=2\pi s/N$. We will then write
\begin{equation}
B_j(p)=\sum_{s=0}^{N-1} \sum_{m=j}^{N/2-1} f(\theta_s) e^{-isq_m},
\end{equation}

\noindent where $f(\theta_s)=(1/N+\phi(\theta_s))(1-p)^{N\phi(\theta_s)}$ and $\phi(\theta_s)=\min(\theta_s/2\pi,1-\theta_s/2\pi)$.

 Let us denote by $F_\nu$ the Fourier coefficients of $f(\theta_s)$. Naturally, one can always write $f(\theta_s)=\sum_{\nu \in\mathds{Z}} F_\nu e^{i\nu \theta_s}$. Taking into account that $q_m=2\pi m/N$, we can then sum
\begin{equation}
\sum_{s=0}^{N-1} f(\theta_s)e^{-i \theta_s m}=\sum_{s=0}^{N-1} \sum_{\nu \in \mathds{Z}}F_\nu e^{i \theta_s (\nu-m)}=\sum_{\nu \in \mathds{Z}}F_{\nu}N\delta_{(\nu-m) (\mathrm{mod} N),0}=N\sum_{\nu \in \mathds{Z}}F_{m+\nu N}.
\end{equation}

\noindent Hence, we obtain that 
\begin{align}\label{eq:expression_Bj}
B_j(p)=N\sum_{m=j}^{N/2-1} \sum_{h \in \mathds{Z}}F_{m+hN}.
\end{align}

\noindent Note that, from Lemma \ref{lemma:Fourier_coefficients}, $|F_\nu| \leq \mathcal{O}(1/\nu^2)$. When $k> 0$ we can then bound
\begin{align}\label{eq:bound_Fm}
\sum_{m=k}^{\infty}|F_m| \leq \sum_{m=k}^{\infty}\frac{2}{\pi^2 m^2}\left(1+|\log (1-p)|\right) \leq \frac{2}{\pi^2}\left(1+|\log(1-p)|\right)\frac{1}{k-1}=\mathcal{O}\left(\frac{1}{k}\right), 
\end{align}

\noindent where we have assumed $|\log(1-p)|=\mathcal{O}(1)$, which can be satisfied by upper-bounding the error rate as $p< 1$. Consider now the case with $j>0$. Then, following the argument in Eq.~\ref{eq:bound_Fm} we directly bound
\begin{align}
\left|\sum_{m=j}^{N/2-1}F_m \right| \leq \sum_{m=j}^{\infty}|F_m| \leq \mathcal{O}\left(\frac{1}{j}\right).
\end{align}

One the other hand, when $j<0$ we can bound

\begin{align}
\left|\sum_{m=j}^{N/2-1}F_m \right|= \left| \sum_{m=-\infty}^{\infty}F_m-\sum_{m=-\infty}^{j}F_m - \sum_{m=N/2-1}^{\infty}F_m \right|\leq \mathcal{O}(1/N)+\mathcal{O}(1/j)+\mathcal{O}(1/N),
\end{align}

\noindent where we have used that $\sum_{m=-\infty}^{\infty} F_m=1/N$ for the left-most term, and the bound from Eq.~\ref{eq:bound_Fm} for the other two. Hence, we conclude that
\begin{align}\label{eq:bound_Oj}
    \left|\sum_{m=j}^{N/2-1}F_m \right| \leq \mathcal{O}\left(\frac{1}{j}\right), \forall j\neq 0.
\end{align}

Plugging the bound in Eq.~\ref{eq:bound_Oj} into Eq.~\ref{eq:expression_Bj} we can control the contribution of all the terms with $h \neq 0$, thus obtaining
\begin{align}\label{eq:Bound_Bj}
|B_j(p)| \leq  N\underbrace{\left|\sum_{m=j}^{N/2-1}\sum_{h \neq 0}F_{m+hN}\right|}_{\mathcal{O}(1/N)}+  N \underbrace{\left|\sum_{m=j}^{N/2-1} F_m \right|}_{\mathcal{O}(1/j)}=\mathcal{O}(1)+\mathcal{O}(N/j),
\end{align}

\noindent where the contribution for the terms with $h \neq 0$ can directly be bounded to be $\mathcal{O}(1)$ via Eq.~\ref{eq:bound_Fm}. Together with Eqs.~\ref{eq:Bound_Bj} and \ref{eq:bound_Fm} this allows us to bound 
\begin{align}
|B_j(p)|\leq \mathcal{O}(N/j).
\end{align}

Hence, we conclude that $|B_j(p)|=\mathcal{O}(N/j),\forall p \in [0,1)$. We can then write
\begin{align}
 E_j^{\mathrm{jump}}(k=0,p)-E_j^{\mathrm{jump}}(k=0,p=0)=\int_0^p \frac{\partial}{\partial p} E_j^{\mathrm{jump}}(k=0,p) \, dp= \int_0^p B_j(p) \, dp.
 \end{align}

 \noindent Hence, we bound
\begin{align}
 \left|E_j^{\mathrm{jump}}(k=0,p)\right|\leq \int_0^p |B_j(p)| \, dp \leq \mathcal{O}(pN/j),
 \end{align}

which completes the proof.

\end{proof}

\section{Proof of bound for Lipschitz continuous functions}\label{appendix:lipschitz}

We will here bound
\begin{align}
E_L(k)=\frac{1}{N^2}\sum_{x,y}\sum_q\left[1-(1-p)^{\varphi(x,y)}\right]e^{i(k-q)(x-y)}n_L(q),
\end{align}

\noindent where the momentum occupation $n_L(q)$ is an $L-$Lipchitz continuous function. Let us perform the change of variables $s=x-y$ to write
\begin{align}\label{eq:Sum_EL}
E_L(k)=\frac{2}{N}\sum_{s=1}^{N-1}\sum_q\left[1-(1-p)^{\varphi(s,0)}\right]e^{i(k-q)s}n_L(q)+\frac{p}{N}\sum_q n_L(q),
\end{align}

\noindent where the momenta are summed over the Brillouin zone. That is, we will consider the momentum $q_j$, with $j\in \{-N/2,\dots N/2-1\}$ a label for the $N$ modes. As stated in Eq.~\ref{eq:momentum_modes}, we assume that $q_{j+1}=q_j+2\pi/N$. Furthermore, we assume periodic boundary conditions on $n_L(q)$, implying $n_L(q_{-N/2})=n_L(q_{N/2})$, where $q_{N/2}=q_{-N/2}+2\pi$.

Now, we will provide a technical lemma that bounds the sum.

\begin{lemma}\label{lemma:1norm_Sj}
Consider the sum 
\begin{align}\label{eq:sum_lemma_EL(k)}
E_L(k)=\frac{2}{N}\sum_{s=1}^{N-1}\sum_q\left[1-(1-p)^{\varphi(s,0)}\right]e^{i(k-q)s}n_L(q)+\frac{p}{N}\sum_q n_L(q),
\end{align}

\noindent and assume that $n_L(q)$ is an $L-$Lipschitz continuous function. Then, 
\begin{equation}
|E_L(k)| \leq {\mathcal{O}} \left( L\sqrt{p} \right)
\end{equation}
\end{lemma}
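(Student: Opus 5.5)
The plan is to treat $E_L(k)$ as a convolution and exploit the fact that the weight $w(s):=1-(1-p)^{1+d(s,0)}$ is small for short distances, while long distances are suppressed by the Fourier decay of a Lipschitz function. First, the last term in Eq.~\ref{eq:sum_lemma_EL(k)} is harmless: since $0\le n_L\le 1$, we have $\frac{p}{N}\sum_q n_L(q)\le p$. For the main term, define the discrete correlation
\begin{equation}
c(s):=\frac{1}{N}\sum_q e^{-iqs}n_L(q),
\end{equation}
so that the first term becomes $2\sum_{s=1}^{N-1} w(s)\,e^{iks}c(s)$. Thus I need to control $\sum_s w(s)|c(s)|$, or better, a version that uses oscillation rather than absolute values.

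The key step is a decay estimate for $c(s)$ coming from the Lipschitz property. Shifting the summation variable by $\pi/s$ (or the nearest lattice momentum $q\mapsto q+2\pi m/N$ with $m\approx N/(2s)$) gives the standard estimate
\begin{equation}
|c(s)|\le \frac{1}{2N}\sum_q\left|n_L(q)-n_L\left(q+\tfrac{2\pi m}{N}\right)\right|\le \frac{L\pi}{2 d(s,0)}+\mathcal{O}(L/N).
\end{equation}
This is only $1/s$ decay, and summed against $w(s)$ it is not summable. I would therefore not use the pointwise bound alone, but combine it with the $\ell^2$ control from Parseval:
\begin{equation}
\sum_s |c(s)|^2=\frac1N\sum_q n_L(q)^2\le 1,
\end{equation}
together with a Lipschitz-improved tail bound. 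Since the discrete derivative of $n_L$ is bounded by $L$, Parseval applied to $n_L(q+\delta)-n_L(q)$ gives
\begin{equation}
\sum_s |e^{i\delta s}-1|^2|c(s)|^2\le L^2\delta^2,
\end{equation}
and hence $\sum_{d(s,0)\gtrsim 1/\delta}|c(s)|^2\lesssim L^2\delta^2$.

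Next, I split the $s$-sum at a cutoff $R$. For $d(s,0)\le R$, Bernoulli gives $w(s)\le p(1+d(s,0))\le p(1+R)$. By Cauchy–Schwarz with $\sum|c|^2\le1$, this part is at most $p(1+R)\sqrt{2R+1}$; alternatively, the pointwise $L/s$ bound gives $\mathcal{O}(pLR)$. For $d(s,0)>R$, I use $w\le 1$ together with Cauchy–Schwarz on dyadic shells, where each shell $[2^j R,2^{j+1}R]$ contributes at most $\sqrt{2^{j}R}\cdot L/(2^jR)$. This dyadic sum is again not convergent, which is the main obstacle: crude absolute-value bounds only produce $\log$ or $\sqrt{R}$ losses. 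The fix I would try first is to avoid absolute values in the tail and instead write the tail sum as $\frac1N\sum_q n_L(q)\,\widehat{w}_{>R}(k-q)$. Here $\widehat{w}$ is the Fourier transform of $w(s)$, which behaves like $1-(1-p)^{|s|}$, i.e.\ a constant minus a Poisson-type kernel of width $\sim p$ in momentum. The constant part contributes only the $s=0$ term. The Poisson kernel $P_p$ has unit mass, so the tail equals $n_L(k)-(P_p * n_L)(k)$ up to boundary corrections. By the Lipschitz property this is at most $L\int|u|P_p(u)\,du$, and since $P_p$ has Cauchy tails truncated at $|u|\le\pi$, this evaluates to $\mathcal{O}(Lp\log(1/p))\le\mathcal{O}(L\sqrt p)$.

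Finally, I would treat the discreteness and the $N^{-1}$ aliasing corrections using Lemma~\ref{lemma:Fourier_coefficients}-type integration by parts. If the convolution route proves messy, a direct balance works as a fallback: with the bounds $pR$ for the short range and $L/\sqrt{R}$ for the long range, choosing $R\sim p^{-1/2}$ gives the claimed $\mathcal{O}(L\sqrt p)$ up to constants.
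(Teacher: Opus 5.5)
There is a genuine gap in how you assemble the estimates, although every ingredient you need is already in your proposal.

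\textbf{The two-region balance.} First, your dyadic tail does converge: shells contributing $L(2^jR)^{-1/2}$ sum geometrically to $\mathcal{O}(L/\sqrt{R})$, so the ``main obstacle'' is not there. Second, the fallback balance is miscomputed. At $R\sim p^{-1/2}$ the tail bound $L/\sqrt{R}$ equals $Lp^{1/4}$. With the short-range bounds you actually derived ($\mathcal{O}(pLR)$ from the pointwise $L/d$ decay, or $pR^{3/2}$ from $\sum_s|c(s)|^2\le 1$), the best cutoff gives at most $\mathcal{O}(Lp^{1/3})$. The loss is in the short-range piece: you pair $w$ with pointwise or unweighted $\ell^2$ control of $c$. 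The missing step is to pair $w(s)\le p(\varphi_0+d(s,0))$ with the weighted bound $\sum_s d(s,0)^2|c(s)|^2\le \pi^2L^2/4$. This bound is your own Parseval inequality at $\delta=2\pi/N$ combined with $\sin(\pi s/N)\ge 2d(s,0)/N$. Cauchy--Schwarz then gives $\mathcal{O}(pL\sqrt{R})$ on $d(s,0)\le R$ and $\mathcal{O}(L/\sqrt{R})$ on $d(s,0)>R$, and $R=1/p$ gives $\mathcal{O}(L\sqrt{p})$. This is the paper's proof, which does it in one step,
\[
\Big|\sum_{s=1}^{N-1} w(s)e^{iks}c(s)\Big|\le\Big(\sum_{s=1}^{N-1} \frac{w(s)^2}{\sin^2(\pi s/N)}\Big)^{1/2}\Big(\sum_{s=1}^{N-1} \sin^2(\pi s/N)\,|c(s)|^2\Big)^{1/2}.
\]
The first factor is $N\,\mathcal{O}(\sqrt{p})$ because $\sum_{s\ge 1}(1-r^{\varphi_0+s})^2/s^2=\mathcal{O}(p)$, and the second is at most $\pi L/N$.

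\textbf{The convolution route.} Your preferred route fails as written because you apply it to the truncated tail. The transform of $w(s)$ restricted to $d(s,0)>R$ is not a positive unit-mass kernel, and the ``boundary corrections'' you discard are exactly the short-range sum you could not control. Applied to the \emph{full} $s$-sum with no cutoff, however, it is a correct and genuinely different proof. Up to a factor $2$ and an $\mathcal{O}(p)$ correction from $s=0$, $E_L(k)$ equals $n_L(k)-r^{\varphi_0}\frac{1}{N}\sum_q P(k-q)\,n_L(q)$, with $P(u)=\sum_{s}r^{d(s,0)}e^{ius}$. Since $r^{d}$ is convex and decreasing in $d$, $P$ is a nonnegative combination of Fej\'er kernels plus a multiple of $N\delta_{u,0}$. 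Hence $P\ge 0$ on the momentum grid and $\frac1N\sum_q P(k-q)=1$. The Lipschitz property then bounds the error by $L\,\frac1N\sum_q P(k-q)\,d(k,q)+\mathcal{O}(p)=\mathcal{O}(Lp\log(1/p)+p)$. You must still verify this first-moment estimate for the discrete periodic kernel, including the regime $pN\lesssim 1$, where it is $\mathcal{O}(p\log N)$.

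Compared with the paper, this route needs no Fourier-side information about $n_L$ and gives a sharper rate than $\sqrt{p}$. The paper's weighted Cauchy--Schwarz is shorter and avoids any kernel analysis, at the cost of the weaker exponent.
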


\begin{proof}

Consider the sum 
\begin{align}\label{eq:A(k)_def}
A(k)=\frac{1}{N}\sum_{s=1}^{N-1}\sum_q\left[1-(1-p)^{\varphi(s,0)}\right]e^{i(k-q)s}n_L(q).
\end{align}

Denote the Fourier transform of $n_L(q)$ as
\begin{align}
\tilde{n}_L(s):=\frac{1}{N}\sum_q e^{-iqs}n_L(q),
\end{align}

\noindent and denote also $g(s):=\left[1-r^{\varphi(s,0)}\right]$, with $r=1-p$. Using Cauchy-Schwarz inequality, we can bound Eq.~\ref{eq:A(k)_def} as
\begin{align}\label{eq:bound_A(k)}
|A(k)|=\left|\sum_{s=1}^{N-1} g(s)e^{iks}\tilde{n}_L(s)\right|  \leq \sqrt{\sum_{s=1}^{N-1}\frac{g(s)^2}{\sin^2\left(\pi s/N\right)}}\sqrt{\sum_{s=1}^{N-1}\sin^2 \left(\pi s/N \right) |\tilde{n}_L(s)|^2}.
\end{align}

\noindent We will bound both sums separately. First, consider the Fourier transform of the finite differences function $n_L(q_{j+1})-n_L(q_j)$, which we denote as $\tilde{F}(s)$. We can write
\begin{align}
\tilde{F}(s)=\frac{1}{N} \sum_{j=-N/2}^{N/2-1} e^{-isq_j} \left[n_L(q_{j+1})-n_L(q_j)\right]=\left(e^{i\frac{2\pi s}{N}}-1\right)\tilde{n}(s).
\end{align}

\noindent This Fourier transform is a well-known relation that follows from the fact that $q_{j+1}=q_j+2\pi/N$, and can be shown by assuming the periodic boundary condition $n(q_{-N/2})=n(q_{N/2})$. This allows us to write $|\tilde{F}(s)|$ as
\begin{align}\label{eq:relation_Fs_sin}
\left|\tilde{F}(s)\right|^2=\left|e^{i\frac{2\pi s}{N}}-1\right|^2 \left| \tilde{n}(s)\right|^2=4 \sin ^2 \left(\frac{\pi s}{N}\right)\left|\tilde{n}(s)\right|^2.
\end{align}

\noindent Furthermore, due to the Lipschitz continuity assumption for $n_L(q)$ we can bound
\begin{align}
\left|n_L(q_{j+1})-n_L(q_j)\right| \leq L |q_{j+1}-q_j| = \frac{2\pi L}{N}.
\end{align}

\noindent As a consequence, we can use Parseval's theorem to bound
\begin{align}\label{eq:parseval_bound}
\sum_{s=1}^{N-1}\left|\tilde{F}(s)\right|^2=\frac{1}{N} \sum_{j=-N/2}^{N/2-1} \left|n_L(q_{j+1})-n_L(q_j)\right|^2 \leq \frac{1}{N} \sum_{j=-N/2}^{N/2-1} \frac{4\pi^2 L^2}{N^2}=\frac{4\pi^2 L^2}{N^2}.
\end{align}

Therefore, using Eqs.~\ref{eq:relation_Fs_sin} and ~\ref{eq:parseval_bound} we can write
\begin{align}\label{eq:left_bound}
\sum_{s=1}^{N-1} \sin^2 \left(\pi s/N \right) |\tilde{n}_L(s)|^2 =\frac{1}{4} \sum_{s=1}^{N-1}\left|\tilde{F}(s)\right|^2\leq \frac{\pi^2 L^2}{N^2}.
\end{align} 

\noindent Let us now focus on the left hand sum in the bound of Eq.~\ref{eq:bound_A(k)}. Use the definition of $g(s)$ to write
\begin{align}\label{eq:g(s)_square}
\sum_{s=1}^{N-1}\frac{g(s)^2}{\sin^2\left(\pi s/N\right)}=\sum_{s=1}^{N-1}\frac{\left(1-r^{\varphi(s,0)}\right)^2}{\sin^2\left(\pi s/N\right)}\leq2\sum_{s=1}^{N/2}\frac{\left(1-r^{\varphi_0+s}\right)^2}{\sin^2\left(\pi s/N\right)}\leq \frac{N^2}{2}\sum_{s=1}^{N/2}\frac{(1-r^{\varphi_0+s})^2}{s^2},
\end{align}

\noindent where we have first used $\varphi(s,0)=\varphi_0+d(s,0)$, with the distance $d(s,0)=s$ when $0 \leq s\leq N/2$, and we have also used Jordan's inequality $\sin(\pi s /N)\geq 2s/N$ when $0 \leq s \leq N/2$. Note that the sum in Eq.~\ref{eq:g(s)_square} can be bounded as a dilogarithm, analogous to Lemma \ref{lemma:S2} in Appendix \ref{appendix:sum}. We can write 
\begin{align}\label{eq:right_bound}
\sum_{s=1}^{N-1}\frac{g(s)^2}{\sin^2\left(\pi s/N\right)} \leq \frac{N^2}{2}\sum_{s=1}^{\infty}\frac{(1-r^{\varphi_0+s})^2}{s^2} = \frac{N^2}{2} \left(\zeta(2)-2r^{\varphi_0}\mathrm{Li}_2(r)+r^{2\varphi_0}\mathrm{Li}\left(r^{2}\right)\right) = \frac{N^2}{2} \mathcal{O}\left(1-r\right),
\end{align}

\noindent where we have expanded $\mathrm{Li}_2(r)=\left[1-\log(1-r)\right](1-r)+\mathcal{O} ((1-r)^2)$, and used $\varphi_0=\mathcal{O}(1)$. In the expression above, $\zeta(z)$ refers to Riemann's zeta function, and $\mathrm{Li}_2(z)$ to the dilogarithm.

Now, plugging the bounds in Eqs.~\ref{eq:right_bound} and \ref{eq:left_bound} into Eq.~\ref{eq:bound_A(k)} yields
\begin{align}
|A(k)| \leq \sqrt{\sum_{s=1}^{N-1}\frac{g(s)^2}{\sin^2\left(\pi s/N\right)}}\sqrt{\sum_{s=1}^{N-1}\sin^2 \left(\pi s/N \right) |\tilde{n}_L(s)|^2} \leq \sqrt{N^2\mathcal{O}(1-r)} \frac{2\pi L}{N}=\mathcal{O}\left(L \sqrt{1-r}\right).
\end{align}

Finally, since from Eq.~\ref{eq:sum_lemma_EL(k)} $E_L(k)=A(k)+p N_{\mathrm{occ}}/N$, we obtain $|E_L(k)| \leq  \mathcal{O}(L\sqrt{p})$, completing the proof.
\end{proof}

\section{2D Fermi surface in the thermodynamic limit}\label{appendix:2D_Fermi}

In this section we will analytically study the stability to errors in 2D fermionic states with Fermi surfaces that was numerically demonstrated in subsection \ref{subsection:2D_Fermi}. For simplicity, we will consider the thermodynamic limit (in which the system size tends to infinity $N \to \infty$) of states with a circular Fermi surface. Precisely, we will consider a state whose momentum occupation satisfies
\begin{align}
	n(\mathbf{q}) = \begin{cases}
		                1 & \text{if } \|\mathbf{q}\|_2 \leq k_F \ \\
		                0 & \text{if } \|\mathbf{q}\|_2>k_F.
	                \end{cases}
\end{align}

\noindent This Fermi surface occurs in tight-binding models, such as the one presented in subsection \ref{subsection:2D_Fermi}, in the low filling limit. Hence, we will assume $k_F \ll \pi$.

We aim to bound the error in measuring the occupation $n_{\mathbf{k}} = a_{\mathbf{k}}^\dagger a_{\mathbf{k}}$ for a given momentum $\mathbf{k}$. We will here assume that the encoding maps the Majorana bilinear $\gamma_\mathbf{r}^{\alpha} \gamma_\mathbf{r'}^{\alpha'}$ to a Pauli string with support $\|\mathbf{r}-\mathbf{r'}\|_2$. Note that, unlike in the previous sections, here the distance is given by the $2-$norm instead of the $1-$norm. This is done to simplify the presentation and calculation, and will not significantly alter the final results. Throughout the section, and to simplify the notation, we will often denote the $2$-norm as $\|\mathbf{s}\|_2=|\mathbf{s}|=s$. Furthermore, we have not used here the periodic boundary conditions for the distance, since we are dealing with the thermodynamic limit. Under these assumptions, and taking the infinite $N$ limit in Eqs.~\ref{error_expression_DD}, \ref{eq:Np_Ok}, the error is given by:
\begin{align}\label{eq:error_integral_2D_limit}
	\mathrm{Error}(\mathbf{k}) = |\langle n_{\mathbf{k}}\rangle_{\mathrm{noisy}} - \langle n_{\mathbf{k}}\rangle_{\mathrm{noiseless}}| = \left|\int d^2\mathbf{s} \, \left(1-e^{-\lambda |\mathbf{s}|}\right) e^{i \mathbf{k} \cdot \mathbf{s}} C(\mathbf{s})\right|
\end{align}
where $C(\mathbf{s}) = \langle c_{\mathbf{r}}^\dagger c_{\mathbf{r}+\mathbf{s}} \rangle$ is the two-point correlation function, and $\lambda=-\log(1-p)$.

Let us write the noisy expectation value as 
\begin{align}
\langle n_{\mathbf{k}}\rangle_{\mathrm{noisy}}=\int_{\mathds{R}^2} d^2 \mathbf{s} e^{-\lambda |\mathbf{s}|}e^{i \mathbf{k} \cdot \mathbf{s}}C(\mathbf{s})=\frac{1}{(2\pi)^2}\int_{|\mathbf{q}| \leq k_F} d^2 \mathbf{q} \int_{\mathds{R}^2} d^2 \mathbf{s} e^{-\lambda |\mathbf{s}|}e^{i (\mathbf{k-\mathbf{q})} \cdot \mathbf{s}}.
\end{align}

\noindent Let us first perform the integral in $\mathbf{s}$ by going to polar coordinates, denoting $\mathbf{P}=\mathbf{k}-\mathbf{q}$ and computing 
\begin{align}
\int_{\mathds{R}^2} d^2 \mathbf{s} e^{-\lambda |\mathbf{s}|}e^{i (\mathbf{k-\mathbf{q})} \cdot \mathbf{s}}=\int_0^{2\pi} d \theta \int_0^{\infty }sds e^{-\lambda s+iPs \cos (\theta)} =\int_0^{2\pi}d \theta \frac{1}{\left[\lambda-iP \cos \theta\right]^2} = \frac{2 \pi \lambda}{\left[\lambda^2+P^2\right]^{3/2}},
\end{align}

\noindent when $\lambda^2+P^2 \neq 0$. Hence, we obtain
\begin{align}
\label{eq:n_noisy_tmp}
\langle n_{\mathbf{k}}\rangle_{\mathrm{noisy}}=\frac{1}{2\pi} \int_{|\mathbf{q}| \leq k_F} d^2 \mathbf{q} \frac{\lambda}{\left[\lambda^2+|\mathbf{k}-\mathbf{q}|^2\right]^{3/2}}.
\end{align}

\noindent Now, consider the case where $\mathbf{k}$ lies outside the Fermi surface (i.e., $|\mathbf{k}| >k_F$). Note that in this case $|\mathbf{k}-\mathbf{q}| \geq \Delta=|\mathbf{k}|-k_F$.

Consider again the variable $\mathbf{P}=\mathbf{k}-\mathbf{q}$, and note that in this case $|\mathbf{P}| > \Delta=|\mathbf{k}|-k_F $. Then, we can integrate in polar coordinates as
\begin{align}\label{eq:case_outside}
\langle n_{\mathbf{k}}\rangle_{\mathrm{noisy}}\leq\frac{1}{2\pi} \int_0^{2\pi} d\theta \int_{0}^{k_F} d q \frac{\lambda q}{\left[\lambda^2+\Delta^2\right]^{3/2}}= \int_{0}^{k_F} \frac{ \lambda q d q }{\left[\lambda^2+\Delta^2\right]^{3/2}}=\frac{1}{2}\frac{\lambda k_F^2}{\left[\lambda^2+\Delta^2\right]^{3/2}}  .
\end{align}

\noindent Note that, when $\mathbf{k}$ lies outside the Fermi surface, the noiseless expectation value is directly $\langle n_{\mathbf{k}}\rangle_{\mathrm{noiseless}}=0$. Then, we bound the error as
\begin{align}
\mathrm{Error}(\mathbf{k})= \frac{1}{2}\frac{\lambda k_F^2}{\left[\lambda^2+\Delta^2\right]^{3/2}} \leq \frac{\lambda k_F^2}{2 \Delta^3}  =\frac{\mathcal{O}(p)k_F^2}{\left(|\mathbf{k}|-k_F\right)^3}.
\end{align}

Now, consider the case where $\mathbf{k}$ lies inside the Fermi surface (i.e. $|\mathbf{k}| <k_F$). We will exploit the fact that the integral in Eq.~\ref{eq:n_noisy_tmp} over the entire $\mathds{R}^2$ space is $1$,
\begin{align}
\frac{1}{2\pi} \int_{\mathds{R}^2} d^2 \mathbf{q} \frac{\lambda}{\left[\lambda^2+|\mathbf{k}-\mathbf{q}|^2\right]^{3/2}}=1.
\end{align}

\noindent Here, we have implicitly assumed that the Brillouin zone is infinite, which is a good approximation for small occupation numbers which yield a circular Fermi surface. Furthermore, in this case $\langle n_{\mathbf{k}}\rangle_{\mathrm{noiseless}}=1$. We can then write
\begin{align}
 \left| \langle n_{\mathbf{k}}\rangle_{\mathrm{noiseless}} - \langle n_{\mathbf{k}}\rangle_{\mathrm{noisy}}\right|=1-\frac{1}{2\pi} \int_{|\mathbf{q}| \leq k_F} d^2 \mathbf{q} \frac{\lambda}{\left[\lambda^2+|\mathbf{k}-\mathbf{q}|^2\right]^{3/2}}=\frac{1}{2\pi}\int_{|\mathbf{q}| > k_F} d^2 \mathbf{q} \frac{\lambda}{\left[\lambda^2+|\mathbf{k}-\mathbf{q}|^2\right]^{3/2}}.
\end{align}

\noindent Then, the proof is identical to the case with $|\mathbf{k}| > k_F$, and one can use the technique in Eq.~\ref{eq:case_outside} to obtain
\begin{align}
\mathrm{Error}(\mathbf{k})= \frac{\mathcal{O}(p)k_F^2}{\left(k_F-|\mathbf{k}|\right)^3}.
\end{align}

Hence, the measuring task is stable to noise when measuring away from the Fermi surface. Now let us consider the case when $|\mathbf{k}|=k_F$, where the measurement is done exactly in the Fermi surface.  In this case, the boundary condition for the integral in Eq.~\ref{eq:n_noisy_tmp} becomes $|\mathbf{k}-\mathbf{P}| \leq k_F$ as a function of $\mathbf{P}=\mathbf{k}-\mathbf{q}$. Parameterizing $\mathbf{P}=P(\cos \theta,\sin \theta)$, the integral becomes 
\begin{align}\label{eq:noisy_k=kF}
\langle n_{\mathbf{k}}\rangle_{\mathrm{noisy}}=\frac{1}{2\pi} \int_{-\pi/2}^{\pi/2} d\theta \int_{0}^{2k_F \cos \theta} d P \frac{\lambda P}{\left[\lambda^2+P^2\right]^{3/2}} =\frac{1}{2\pi} \int_{-\pi/2}^{\pi/2} d\theta \left( \frac{ \lambda }{\sqrt{\lambda^2+4k_F^2 \cos^2 \theta}}-1 \right) =\frac{1}{2}+I(\lambda),
\end{align}

\noindent where
\begin{align}
I(\lambda)=\frac{\lambda}{\pi} \int_{-\pi/2}^{0}  \frac{ d\theta }{\sqrt{\lambda^2+4k_F^2 \cos^2 \theta}}=\frac{\lambda}{\pi}\int_{0}^{\pi/2}  \frac{ d\theta }{\sqrt{\lambda^2+4k_F^2 \sin^2 \theta}},
\end{align}

\noindent where we have shifted the integration variable by $\pi/2$ and used $\sin^2 \theta=\cos^2(\theta+\pi/2)$. Now, we will use Jordan's inequality, which states that $\sin \theta \geq 2 \theta /\pi$ when $\theta \in [0,\pi/2]$, to write
\begin{align}
I(\lambda)=\frac{\lambda}{\pi}\int_{0}^{\pi/2}  \frac{ d\theta }{\sqrt{\lambda^2+4k_F^2 \sin^2 \theta}} \leq \frac{\lambda}{\pi}\int_{0}^{\pi/2}  \frac{ d\theta }{\sqrt{\lambda^2+16(k_F/\pi)^2 \theta^2 }}.
\end{align}

\noindent Now we make the change of variables $u=(4k_F \theta)/(\pi\lambda)$ to write
\begin{align}
I(\lambda) \leq \frac{\lambda}{4k_F}\int_{0}^{2k_F/\lambda}  \frac{ d u }{\sqrt{1+u^2 }}= \frac{\lambda}{4 k_F} \mathrm{arcsinh}\left(\frac{2k_F}{\lambda}\right).
\end{align}

\noindent Hence, we obtain that $\lim_{\lambda \to 0} I(\lambda)=0$, and therefore Eq.~\ref{eq:noisy_k=kF} allows us to bound the error as 
\begin{align}
\lim_{p \to 0} \mathrm{Error}(\mathbf{k}) = \lim_{p \to 0} \left| \langle n_{\mathbf{k}}\rangle_{\mathrm{noiseless}} - \langle n_{\mathbf{k}}\rangle_{\mathrm{noisy}}\right| = \frac{1}{2} - \lim_{\lambda \to 0 }I(\lambda)=\frac{1}{2}.
\end{align}

\noindent We therefore conclude that, when $|\mathbf{k}|=k_F$, one can obtain a constant error even with vanishing error rates, and therefore stability to noise is not present. We conclude that, as demonstrated numerically in section \ref{subsection:2D_Fermi} and consistent with the $1D$ case, one obtains fragility to noise when measuring on the Fermi surface and stability elsewhere, with the error governed by 
\begin{align}
\mathrm{Error}(\mathbf{k})= \mathcal{O} \left(\frac{p k_F^2}{|k_F-|\mathbf{k}||^{3/2}}\right).
\end{align}

\section{Proof of correlation decay}\label{appendix:correlations}

In this appendix we will provide the proof of Lemma \ref{lemma:correlations_interacting} of the main text. That is, we will consider the noisy evolved state $\rho_{(k)}^{\mathrm{noisy}}=\Phi_k ^{\mathrm{id}} \Phi_{k-1}^{\mathrm{noisy}} \dots \Phi_{1}^{\mathrm{noisy}}(\rho_0)$, and want to bound correlations of the form 
\begin{align}
    \left|\langle  O_AO_B\rangle_{\rho_{(k)}^{\mathrm{noisy}}}\right|,
\end{align}

\noindent for $O_A$ and $O_B$ supported in disjointed regions $A$ and $B$, respectively. Note that we assume that $O_A$ and $O_B$ are odd fermionic operators, so that $\langle  O_A\rangle_{\rho_{(k)}^{\mathrm{noisy}}}=\langle  O_B\rangle_{\rho_{(k)}^{\mathrm{noisy}}}=0$ due to parity superselection rules.
We assume that the correlations in the initial state decay as 
\begin{align}
    \left|\langle  O_AO_B\rangle_{\rho_{(k)}^{\mathrm{noisy}}}\right| \leq f(d(A,B))
\end{align}

\noindent for some function $f(d(A,B))$ of the distance between $A$ and $B$. Now, let us consider the time-evolved operator
\begin{align}
    O_A(k)=\left[\Phi_1^{\mathrm{noisy}}\right]^{\dagger} \dots \left[\Phi_{k-1}^{\mathrm{noisy}}\right]^{\dagger}\left[\Phi_k^{\mathrm{id}}\right]^{\dagger}O_A.
\end{align}

\noindent Due to the locality of each layer of unitaries and standard Lieb-Robinson bounds, it follows that $O_A(k)$ is entirely contained in a region $\mathcal{B}_A(k)$ of size $|\mathcal{B}_A(k)|\leq |A|+\mathcal{O}(k^D)$ determined by the lightcone. Likewise, $O_B(k)$ is entirely contained in a region $\mathcal{B}_B(k)$ centered around subregion $B$, with radius $\mathcal{O}(k)$ and size $|\mathcal{B}_B(k)|\leq |B|+\mathcal{O}(k^D)$. That is, locality imposes a lightcone on which both $O_A(k)$ and $O_B(k)$ are supported.
We can write Heisenberg-evolved operator $[O_AO_B](k)$ as
\begin{align}\label{eq:expansion_OAOB(k)}
    \left[O_AO_B\right](k)=\left[\Phi_1^{\mathrm{noisy}}\right]^{\dagger} \dots \left[\Phi_{k-1}^{\mathrm{noisy}}\right]^{\dagger}\left[\Phi_k^{\mathrm{id}}\right]^{\dagger}(O_AO_B).
\end{align}

The proof will rely on the fact that the operator at time $k$ can be written as a convex combination of operators supported in $\mathcal{B}_A(k) \cup \mathcal{B}_B(k)$. That is,
\begin{align}\label{eq:expansion_OAOB(k)_local}
    \left[O_AO_B\right](k)=\left[\Phi_1^{\mathrm{noisy}}\right]^{\dagger} \dots \left[\Phi_{k-1}^{\mathrm{noisy}}\right]^{\dagger}\left[\Phi_k^{\mathrm{id}}\right]^{\dagger}(O_AO_B)=\sum_{i}p_i O_i^{\mathcal{B}_A(k)} O_i^{\mathcal{B}_B(k)},
\end{align}

\noindent where $p_i \geq 0$, $\sum_i p_i=1$, and $\supp \left(O_i^{\mathcal{B}_A(k)}\right) \in \mathcal{B}_A(k)$ and $\supp \left(O_i^{\mathcal{B}_B(k)}\right) \in \mathcal{B}_B(k)$. Furthermore $\|O_i^{\mathcal{B}_A(k)}O_i^{\mathcal{B}_B(k)}\| \leq \|O_A O_B\|$.

This can be shown by induction. Assume that, at a given time-step $\tau$, the expression in Eq.~\ref{eq:expansion_OAOB(k)_local} holds. Then, we will study the action of the channel $\left[\Phi_{\tau+1}^{\mathrm{noisy}}\right]^{\dagger}=\mathcal{U}_{\tau+1}^{\dagger} \circ  \mathcal{N}_p$. First, consider the action of the noise channel $\mathcal{N}_p (O_{\mathcal{B}_A(\tau)} O_{\mathcal{B}_B(\tau)})$, where $\supp \left(O_{\mathcal{B}_A(\tau)}\right)\in \mathcal{B}_A(\tau),\,\supp \left(O_{\mathcal{B}_B(\tau)}\right)\in \mathcal{B}_B(\tau)$. We will show that the channel $\mathcal{N}_p$ will not change the support of the operator. To see this, we can write the noise channel as 
\begin{align}
\mathcal{N}_p(\cdot)=\sum_j p_j S_j (\cdot)S_j,
\end{align}

\noindent where $S_j$ are Pauli strings. This is possible since $\mathcal{N}_p$ is a Pauli noise channel. Note that, by Assumption \ref{assumption:encodings}, a Pauli string $S_j$ is mapped to a Majorana string $\Gamma_j$ (i.e., a product of Majoranas). Then, the action of the noise channel can be written as 
\begin{align}\label{eq:noise_channel_OAOB}
    \mathcal{N}_p \left(O_{\mathcal{B}_A(\tau)} O_{\mathcal{B}_B(\tau)} \right)=\sum_{j}p_j \left( \Gamma_jO_{\mathcal{B}_A(\tau)} O_{\mathcal{B}_B(\tau) } \Gamma_j\right)=\sum_jp_j \widetilde{O}_j^{\mathcal{B}_A(\tau)} \widetilde{O}_j^{\mathcal{B}_B(\tau)},
\end{align}

\noindent where $\widetilde{O}_j^{\mathcal{B}_A(\tau)}:=\Gamma_j O_{\mathcal{B}_A(\tau)} \Gamma_j$, and $\widetilde{O}_j^{\mathcal{B}_B(\tau)}:=\Gamma_j O_{\mathcal{B}_B(\tau)} \Gamma_j$. Since $\Gamma_j$ are Majorana strings, they preserve the support of $O_{\mathcal{B}_A(\tau)}$ and $O_{\mathcal{B}_B(\tau)}$, which implies $\supp \left(\widetilde{O}_j^{\mathcal{B}_A(\tau)}\right)  \in \mathcal{B}_A(\tau)$ and $\supp \left(\widetilde{O}_j^{\mathcal{B}_B(\tau)}\right)  \in \mathcal{B}_B(\tau)$. Hence, it follows that one can write the action of noise channel as a convex combination of operators with the same support. Furthermore, the norm is also unchanged as $ \left\|\widetilde{O}_j^{\mathcal{B}_A(\tau)} \widetilde{O}_j^{\mathcal{B}_B(\tau)} \right \| = \left\| O_{\mathcal{B}_A(\tau)} O_{\mathcal{B}_B(\tau) } \right\|$. We can therefore write 
\begin{align}\label{eq:noise_channel_OAOB_2}
    \mathcal{U}_{\tau+1}^{\dagger} \circ \mathcal{N}_p \left(O_{\mathcal{B}_A(\tau)} O_{\mathcal{B}_B(\tau)} \right)=\sum_{j}p_j \mathcal{U}_{\tau+1}\left( \widetilde{O}_j^{\mathcal{B}_A(\tau)} \widetilde{O}_j^{\mathcal{B}_B(\tau)}\right)=\sum_j p_j O_j^{\mathcal{B_A}(\tau+1)} O_j^{\mathcal{B_B}(\tau+1)},
\end{align}

\noindent where $\supp \left(O_j^{\mathcal{B}_A(\tau+1)}\right)   \in \mathcal{B}_A(\tau+1) $ and $\supp \left(O_j^{\mathcal{B}_B(\tau+1)}\right)   \in \mathcal{B}_B(\tau+1) $, which follows from the fact that $\mathcal{U}_{\tau+1}$ is a layer consisting of strictly local unitary gates. By applying induction on the expression in Eq.~\ref{eq:noise_channel_OAOB_2} one obtains that the time-evolved operator at time $k$ can be written as Eq.~\ref{eq:expansion_OAOB(k)_local},
\begin{align}
    \left[O_AO_B\right](k)=\left[\Phi_1^{\mathrm{noisy}}\right]^{\dagger} \dots \left[\Phi_{k-1}^{\mathrm{noisy}}\right]^{\dagger}\left[\Phi_k^{\mathrm{id}}\right]^{\dagger}(O_AO_B)=\sum_{i}p_i O_i^{\mathcal{B}_A(k)} O_i^{\mathcal{B}_B(k)}.
\end{align}

\noindent Once can then bound the correlations as 
\begin{align}
    \left|\left\langle \left[O_AO_B\right](k) \right \rangle_{\rho_0} \right|&= \left|\sum_{i}p_i \left\langle O_i^{\mathcal{B}_A(k)} O_i^{\mathcal{B}_B(k)}\right \rangle_{\rho_0}\right| \leq \max_i \left |   \left\langle O_i^{\mathcal{B}_A(k)} O_i^{\mathcal{B}_B(k)}\right \rangle_{\rho_0}\right| \nonumber \\ &\leq f(d(\mathcal{B}_A(k),\mathcal{B}_B(k))  \|O_A \| \|O_B\|\leq f(d(A,B)-2vk) \|O_A \| \|O_B\|,
\end{align}
\noindent where we have used $d(\mathcal{B}_A(k),\mathcal{B_B}(k)) \geq d(A,B)-2vk$, for some constant $v=\mathcal{O}(1)$, and $\left \|O_i^{\mathcal{B}_A(k)} O_i^{\mathcal{B}_B(k)} \right \|=\|O_AO_B\|$, since the Pauli strings do not change the operator norm. This finalizes the proof of the Lemma.

\section{Asymptotic notation}\label{appendix:notation}

Throughout the paper employ the following asymptotic notation commonly used in complexity theory \cite{cormen2022introduction}:

\begin{center}
\begin{tabular}{ |c|c|c| } 
\hline
 Notation & Formal definition & Informal description \\ 
 \hline
 $f(n)=o(g(n))$ & $\forall k>0 \ \exists n_0\ \forall n>n_0: |f(n)| < kg(n)$ & $f(n)$ grows strictly slower than $g(n)$ \\  
 $f(n)=\mathcal{O}(g(n))$ & $\exists k>0\ \exists n_0\ \forall n>n_0: |f(n)| \leq kg(n)$ & $f(n)$ grows no faster than $g(n)$  \\
 $f(n)=\Omega(g(n))$ & $\exists k>0\ \exists n_0\ \forall n>n_0: |f(n)| \geq kg(n)$ & $f(n)$ grows at least as fast as $g(n)$ \\ 
 $f(n)=\Theta(g(n))$ & $\exists k_1>0, \exists k_2>0,\exists n_0\ \forall n>n_0: k_1g(n) \leq |f(n)| \leq k_2g(n)$ & $f(n)$ and $g(n)$ grow equally fast \\
 \hline
\end{tabular}
\end{center}
\end{appendix}

\end{document}